\theoremstyle{definition}
\newtheorem{thm}{Theorem}
\newtheorem{defn}{Definition}
\newtheorem{lem}{Lemma}
\newtheorem{cor}{Corollary}
\newtheorem{prop}{Proposition}
\newcommand{\dg}[1]{{\small \color{brown}[David: #1]}}
\newcommand{\yatin}[1]{{\small \color{blue}[Yatin: #1]}}
\newcommand{\fran}[1]{{\small \color{red}[Fran: #1]}}
\newcommand{\N}{\mathbb{N}}
\newcommand{\wt}[1]{\widetilde{#1}}
\newcommand{\R}{\mathbb{R}}
\newcommand{\E}{\mathbb{E}}
\newcommand{\calG}{\mathcal{G}}
\newcommand{\1}{\mathbbm{1}}
\newcommand{\pequiv}{\stackrel{P}{\simeq}}
\newcommand{\calF}{\mathcal{F}}
\newcommand{\calS}{\mathcal{S}}
\newcommand{\Var}{\mathsf{Var}}
\newcommand{\GOE}{\mathrm{GOE}}
\newcommand{\calC}{\mathcal{C}}
\newcommand{\calA}{\mathcal{A}}
\newcommand{\calR}{\mathcal{R}}
\newcommand{\eps}{\varepsilon}
\newcommand{\op}{\mathrm{op}}
\newcommand{\norm}[1]{\left\lVert#1\right\rVert}
\newcommand{\Unif}{\mathrm{Unif}}
\newcommand{\sign}{\mathrm{sign}}
\newcommand{\brac}[1]{\left\langle #1 \right\rangle}
\newcommand{\bfsigma}{\boldsymbol{\sigma}}
\newcommand{\bfh}{\boldsymbol{h}}
\newcommand{\bfu}{\boldsymbol{u}}
\newcommand{\bfz}{\boldsymbol{z}}
\newcommand{\bfg}{\boldsymbol{g}}
\newcommand{\bfzeta}{\boldsymbol{\zeta}}
\newcommand{\bfxi}{\boldsymbol{\xi}}
\newcommand{\ignore}[1]{\relax}
\title{Sequential Dynamics in Ising Spin Glasses}
\author{Yatin Dandi\thanks{Statistical Physics of Computation Laboratory, École polytechnique fédérale de Lausanne (EPFL) CH-1015 Lausanne. Email: yatin.dandi@epfl.ch.}, David Gamarnik\thanks{Sloan School of Management, 
Operations Research Center and Institute of Data,  Systems and Society (IDSS), MIT. Email: gamarnik@mit.edu. }, Francisco Pernice\thanks{CSAIL and LIDS, MIT. Email: fpernice@mit.edu.}, Lenka Zdeborová\thanks{Statistical Physics of Computation Laboratory, École polytechnique fédérale de Lausanne (EPFL) CH-1015 Lausanne. Email: lenka.zdeborova@epfl.ch.}}
\date{}
\begin{document}

\maketitle

\begin{abstract}
We present the first exact asymptotic characterization of sequential dynamics for a broad class of local update algorithms on the Sherrington-Kirkpatrick (SK) model with Ising spins. Focusing on dynamics implemented via systematic scan—encompassing Glauber updates at any temperature—we analyze the regime where the number of spin updates scales linearly with system size. Our main result provides a description of the spin-field trajectories as the unique solution to a system of integro-difference equations derived via Dynamical Mean Field Theory (DMFT) applied to a novel block approximation. This framework captures the time evolution of macroscopic observables such as energy and overlap, and is numerically tractable. Our equations serve as a discrete-spin sequential-update analogue of the celebrated Cugliandolo-Kurchan equations for spherical spin glasses, resolving a long-standing gap in the theory of Ising spin glass dynamics. Beyond their intrinsic theoretical interest, our results establish a foundation for analyzing a wide variety of asynchronous dynamics on the hypercube and offer new avenues for studying algorithmic limitations of local heuristics in disordered systems.

\end{abstract}

\tableofcontents

\section{Introduction}
We consider a discrete sequential dynamics on the \emph{Sherrington-Kirpatrick (SK) model} in a linear time scale. 
The SK model is the prototypical example of a model of a \emph{disordered system}. It was proposed in 1975 \cite{SK75} as a simplified, mean-field version of the Edwards-Anderson model \cite{EA75}, whose original motivation in physics was to understand the intriguing properties of certain dilute magnetic alloys. The mathematical definition is as follows. Let $J$ be a $\GOE(N)$ matrix, i.e., an $N\times N$ matrix with centered, jointly Gaussian entries with covariance $\E J_{ij}J_{kl} = (\delta_{ik}\delta_{jl} + \delta_{il}\delta_{jk})/N.$ The \emph{Sherrington-Kirpatrick Hamiltonian}, also referred to below as the \emph{energy},\footnote{In physics the Hamiltonian and the energy are usually defined with the opposite sign.} is the random function $H:\{\pm 1\}^N \to \R$ given by 
\begin{equation}\label{def:ham}
    H(\bfsigma) = \frac{1}{2}\brac{\bfsigma, J\bfsigma}.
\end{equation}
Then, given an inverse temperature parameter $\beta \in [0,\infty),$ associated with the SK model is its Gibbs measure: 
the (random) measure on $\{\pm1\}^n$ given by
\begin{align}\label{eq:Gibbs-measure}
    \mu_\beta(\bfsigma) &= \frac{1}{Z_\beta} \exp(\beta H(\bfsigma)),
\end{align}
where $Z_\beta$ is the normalization constant so that $\sum_{\bfsigma} \mu(\bfsigma)=1,$ called the \emph{partition function.} 

A few years after this model was proposed, in a series of groundbreaking papers \cite{ParisiRSB1, ParisiRSB2, ParisiRSB3}, Parisi non-rigorously found an exact formula for the limiting \emph{free energy} $\lim_N \frac{1}{N}\E \log Z_\beta.$ The subsequent interpretation of Parisi's solution led to a remarkably complete non-rigorous understanding of the structure of the model at equilibrium (i.e., of the measure $\mu_\beta).$ Many of the most important aspects of this picture have now been proved; chiefly among them, the Parisi formula for the free energy \cite{Guerra03, Talagrand06} and the ultrametricity conjecture \cite{Panchenko13}.

While a great deal of progress has been achieved in recent decades in understanding the equilibrium properties of this model~\cite{panchenko2013sherrington}, far less is known about its dynamical aspects under natural dynamics on the configuration space $\{\pm 1\}^N$. This is the focus of our paper. The dynamics we analyze are as follows. First, we fix an ordering of the $N$ spins, say $\bfsigma_1,\dots, \bfsigma_N.$ Then, we initialize at a uniformly-random configuration and perform $T$ in-order scans through the $N$ spins. In each scan, we iterate through the spins $\bfsigma_1,\dots,\bfsigma_N$ one at a time, updating each one according to a pre-specified (randomized) update rule that takes as input the spin's local field. We allow for any suitably regular update rule, including as a special case Glauber updates at any temperature.

To analyze this algorithm, we consider as an approximation a version of the dynamics with $\delta$-blocks, for $\delta \in (0,1]$: the spin set $[N]$ is split into
$1/\delta$ groups of size $\delta N$
each, and updates of spins within each group are
done simultaneously, while transitions between groups are done in a scanning fashion as before. 
Our main result provides a full asymptotic description of 
the resulting dynamics when $T$ is a 
constant independent of $N$, in the double limit where first $N\to\infty$ and then $\delta\to 0$. 
We conjecture that the resulting equations agree with those of the original sequential dynamics without block updates. We will formally address this conjecture in upcoming work, but for now we verify it experimentally in \Cref{sec:numerics}. In upcoming work we also derive the equations for Glauber dynamics, where in each step a uniformly randomly chosen spin gets updated.

The equations we derive for the $\delta$-block
dynamics specify the asymptotic joint
distribution of $\{\sigma_x^t\}_{t=0}^T,\{h^t_x\}_{t=1}^T$ for $x\in [0,1],$ where $\sigma_x^t$ is the spin value of the $xN$-th spin at the end
of scan $t$, and $h^t_x$ the corresponding local field
experienced by this spin. 
This joint distribution up to scan $T$ is described via a system of 
$O(T^2)$-dimensional differential equations we derive, from which many useful observables can be computed, such
as the energy of the system at the end of scan
$T$. While the system of differential equations is somewhat involved and grows with $T$, it is easy
to implement numerically, which we do. We report our 
numerical findings in  \Cref{sec:numerics}.
To the best of our knowledge, such a detailed description of the spin glass dynamics at a linear time scale was unknown in both the physics and mathematics literatures. 
Our results are of relevance to several prior lines of research. We highlight the two most relevant ones.

\subsection{Prior work in physics and probability 
literature}
The first line of investigation relates to the long
history of studies in physics
and probability theory of time dependent
aspects of spin glasses: what does the spin glass system
look like at a given time $t$ when it evolves according
to a natural dynamics such as Glauber dynamics? These
questions have been raised in physics already several decades
ago, see e.g. \cite{SZ82, CK93}. The matter is  of particular interest because many of the intriguing features observed in the materials that motivated the study of these models in the first place were dynamical in nature, like the so-called \emph{aging} phenomenon (see \cite{BaDG01} for a mathematical treatment). For the Sherrington-Kirpatrick model, the most natural dynamics are Markov chains on the state space $\lbrace\pm 1\rbrace^N$, with \emph{Glauber dynamics} being the most widely studied version. However, exact asymptotic analysis of
dynamics of this kind, with discrete asynchronous updates,  has so far been out of reach for physicists and mathematicians alike. 

Other spin glasses are more tractable for the study of dynamical properties. Cugliandolo and Kurchan \cite{CK93} studied a model related to SK, known as the \emph{pure spherical $p$-spin model}, where the quadratic form \eqref{def:ham} is replaced by a homogeneous degree-$p$ polynomial with Gaussian coefficients, and the state space is relaxed from the hypercube to the sphere in $\R^N.$ In this model, the natural dynamics is the Langevin diffusion on the sphere. 
Cugliandolo and Kurchan non-rigorously derived an exact asymptotic description of these Langevin-type dynamics for the spherical $p$-spin models in the \emph{out of equilibrium} regime, where one runs the dynamics for a time independent of the system size $N$ ---the continuous analogue of running Glauber dynamics for linearly in $N$ many updates. These \emph{Cugliandolo-Kurchan equations} were proven rigorously by Ben Arous, Dembo and Guionnet \cite{BaDG06} after earlier work by the same authors \cite{BaG95, BaG97, Gionnet97,  BaDG01}. The asymptotic energy achieved by the
dynamics in this time scale was rigorously proven only recently~\cite{sellke2024threshold}, confirming a prediction of~\cite{CK93}.

Some large deviations results for time dependent
observables of the SK model are derived
in~\cite{grunwald1996sanov}, in principle giving an asymptotic description of the dynamics in the out of equilibrium regime. However, this characterization is in terms of the full empirical measure over histories of pairs of spins and local fields, an object which has proven too rich to extract useful information from. By contrast, like in the case of \cite{CK93}, our equations can be written in terms of the correlation and response functions only, which are finite-dimensional objects. 

A description of the dynamics
corresponding to the Glauber
dynamics is also reported  in~\cite{maclaurin2021emergent}, but it appears to be correct only for 
sufficiently high temperature~\cite{Lauren-private}.

The discrete (non-spherical) SK model itself can be treated via Langevin dynamics as well \cite{CK-SK94} by relaxing the hypercube $\{\pm 1\}$ to the real line, and adding a potential which penalizes the dynamics for being far from $\{\pm 1\}$. However, its relevance to the genuinely
discrete dynamics is not entirely clear.

The dynamical behavior directly on the hypercube was also treated in the physics literature, but so far only for the case of synchronous or parallel dynamics where every spin is updated at every time step \cite{eissfeller1992new,eissfeller1994mean,VBKZ24}. The properties of the parallel dynamics are, however, very different from those of sequential dynamics where the spins are changing one by one. For example, in the latter case the energy is non-decreasing in time, whereas in parallel dynamics energy can be non-monotone, and limiting cycles can appear in large times; see examples of this in \cite{VBKZ24}. We remark that the parallel dynamics studied in \cite{VBKZ24} corresponds to setting $\delta = 1$, i.e., simultaneously updating a single block of all spins, and hence our analysis also provides a proof of the limiting equations established therein.

In summary, the dynamics corresponding to sequential updates of discrete
spins (as opposed to relaxations) 
 has remained an outstanding open problem in both the physics and mathematics of disordered systems. In particular, no analogue of the asymptotically exact Cugliandolo-Kurchan equations is known in that case. This is the problem
 which we address in the present work. 

Due to the absence of an exact characterization of the dynamics, physicists have resorted instead to simulations, which have demonstrated that the behavior of the dynamics depends strongly on the details of the spin update protocol. 
In particular, Parisi \cite{Parisi-experimental}  ran simulations on three natural dynamics at \emph{zero temperature}. He called them the \emph{greedy algorithm}, the \emph{reluctant algorithm} and the \emph{sequential algorithm}. In the greedy algorithm, one finds at each step the single spin-flip which would result in the greatest energy increase, and performs it. In the reluctant dynamics, one finds the single spin-flip which would result in the minimal positive energy change, and performs it. The sequential algorithm is the algorithm described at the beginning of the section and which we analyze in this paper. We will also call it  \emph{systematic scan} following \cite{dyer2008dobrushin,dyer2006systematic}.
Recall that this is the dynamics where 
one starts with a uniformly-random configuration $\bfsigma^0\in \{\pm 1\}^N$ and repeatedly scans through the spins from left to right, flipping sequentially each one 
according to a prescribed rule, for example via a zero-temperature Glauber update. 
The literature also refers to this type of update as ordered sequential update \cite{fouladvand1999multi}, or systematic update \cite{peres2013can}. Variants of these protocols were studied in many follow-up numerical papers, e.g. \cite{bussolari2003energy,contucci2005interpolating,contucci2005finding,VBKZ24}, but an exact asymptotic characterization of any of the asynchronous ones remained  open prior to our work.

\subsection{Prior works in theoretical computer science
literature}
There exists an extensive theoretical computer 
science literature regarding the power
and limitations of MCMC based methods such as Glauber
dynamics in the context of various statistical mechanics 
models, including spin glasses~\cite{sinclair1989approximate}. 
The primary focus of the 
computer science literature is understanding the conditions
under which the chain converges (mixes) to stationarity polynomially fast, as this provides a polynomial time algorithm
for approximately sampling from the stationary distribution. Fast and slow mixing of Glauber
dynamics has been studied recently also in the context
of spin glasses. In particular, 
fast mixing is known to hold
at sufficiently high temperature~\cite{eldan2022spectral}.
Slow mixing 
is known to hold when the model exhibits a certain
clustering (shattering) phenomenon, which $p$-spin models
provably exhibit for large enough $p$ and a certain
temperature 
range~\cite{gamarnik2023shattering},
\cite{alaoui2024near}.
Fast sampling algorithms based on methods other 
than MCMC are  also  
known~\cite{alaoui2023sampling, huang2024sampling}, as well as simulated annealing \cite{huang2024weak}.


The techniques for establishing
slow (super-polynomial or exponentially
large) mixing times
are well-developed and typically rely on 
Poincare-Cheeger inequalities and their variants.
There is an important caveat however associated with this slow mixing
approach in general: it verifies
slow mixing only with respect
to the \emph{worst-case} starting states. Namely, 
when applicable it implies existence of \emph{some}
states from which time to mix is large. This
does not rule out the possibility of fast mixing
from \emph{typical} (random) starting 
states. Understanding mixing from random 
starting states is more physically relevant and is a challenging question,
currently poorly understood. 
In addition to the references provided in the 
previous section related to aging in spherical glasses,
some  recent
steps in this direction 
include~\cite{chen2025almost, gamarnik2024hardness} and \cite{huang2024weak}.

The computer science literature has rarely focused on what can be said about Markov chains in regimes where mixing is slow. An important recent exception is \cite{liu2024locally}, where the authors propose a framework of \emph{locally stationary distributions}. These are distributions which Markov chains always reach fast (even in the absence of mixing) and from which useful algorithmic consequences can be derived. For example, the authors show that Glauber dynamics can achieve constant correlation with the planted signal in the spiked matrix model and the stochastic block model for sufficiently high signal to noise ratio, even if the Markov chain fails to mix. By contrast, in our work, instead of deriving approximate guarantees about the state of the Markov chain at large times, we give an \emph{exact} description of the state of the chain at \emph{all} times from the point of view of macroscopic observables in the linear time scale. Through this lens, we anticipate that our work will prove useful to extract algorithmic consequences for a wide variety of models, though we do not focus on this aspect in the present work.

\subsection{The proof technique}
Our method of analysis relies
on the iterative Gaussian conditioning framework developed in \cite{bolthausen2014iterative, bayati2011dynamics}. 
This is a widely applicable method 
of analysis of 
Gaussian processes when conditional on constantly
many linear projections. 
In particular, it is the basis of the asymptotic analysis of the family of 
Approximate Message Passing (AMP)
algorithms originating 
from the TAP equations ~\cite{thouless1977solution,bolthausen2014iterative}. AMP algorithms have enjoyed a
 wide range of 
applications both in  spin glass 
theory~\cite{montanari2024optimization} 
and beyond, and in particular in the field of  
statistical  
inference~\cite{feng2022unifying}. Recently, the applicability of the Gaussian conditioning technique has been extended to a general class of first order algorithms on Gaussian data in \cite{celentano2021high,gerbelot2024rigorous}, recovering the equations described in statistical physics under Dynamical Mean Field Theory 
(DMFT)~\cite{Sompolinsky_88}. Earlier rigorous proofs of such DMFT equations were based on large deviations over paths \cite{BaDG06,GBA03,BaG95,BaG97,BaDG01}. The algorithms in   \cite{celentano2021high,gerbelot2024rigorous} involve simultaneously updating all the coordinates, such as Langevin dynamics and  gradient descent. In contrast, the sequential spin-by-spin updates in our setting result in limiting equations possessing novel structure such as the inter-spin interactions accumulating into a system of ODEs.

Roughly speaking, the method is based on the
following observation: given a GOE matrix $A$, 
its distribution when conditioned
on projections on finitely many vectors
$v_1,\ldots,v_p$ can be nicely decomposed into two terms. The first term corresponds to the values of the 
projections,  and the second term
corresponds to a fresh copy $\tilde A$ of $A$
independent of these projections, see 
\Cref{prop:gen_gauss_cond} in 
\Cref{sec:finite-delta-dynamics}. This
fact allows for convenient and succinct 
representation of the states of the system
after constantly many updates of various kinds,
such as the ones underlying AMP.

Unfortunately, this key property underlying
Gaussian conditioning does not appear to be of use
for the analysis of sequential algorithms
studied in this paper, since those
correspond to the number of projections scaling
with $N$. This is why we introduce an intermediate
$\delta$-block dynamics which keeps the number of 
Gaussian projections bounded. As mentioned earlier, in future work
we will establish that, as $\delta$ 
approaches zero, the $\delta$-block dynamics
approaches the dynamics of the sequential 
algorithm. For now this is numerically verified in \Cref{sec:numerics}. The 
introduction and the exact analysis of the 
$\delta$-block dynamics is the main technical novelty
of our paper, which we anticipate to be of use
beyond the scope of this work.

As alluded to earlier, the $\delta$-block dynamics is defined as follows. We split the spin set $[N]$ into $1/\delta$ 
consecutive blocks $B(1)=\{1,\ldots,\delta N\},
B(2)=\{\delta N+1,\ldots,2\delta N\},\ldots$,
where the integrality of $1/\delta$ and $\delta N$
is assumed for convenience. We then repeatedly perform sequential scans through the blocks $B(1),\dots,B(1/\delta)$ in this order, updating all spins within each block simultaneously. Specifically, suppose we are in scan $t$ and we arrive at block $j \in [1/\delta]$ in configuration $\bfsigma^{t,j-1}\in \{\pm 1\}^N$, having already updated blocks $B(1),\dots,B(j-1)$ within pass $t.$ We first compute the local fields $\bfh^t_i = (J\bfsigma^{t,j-1})_i$ for all $i\in B(j)$, and then we update all spins $\bfsigma^{t,j}_i$ for $i\in B(j)$ according to the corresponding local field $\bfh_i^t,$ leaving all other spins unchanged: $\bfsigma_i^{t,j}=\bfsigma_i^{t,j-1}$ for $i\notin B(j).$

We prove that for each $T,j,k$ the 
spins $\{\bfsigma_i^{s,j}\}_{s=0}^T$ and fields $\{\bfh^s_i\}_{s=1}^T$ are approximately i.i.d. across $i\in B(k)$ with a common joint
distribution which we describe. We detail
this distribution as an incremental 
stochastic process defined recursively over
$T,j,k$. Finally, fixing
any $x\in [0,1]$, we show that the distribution of this stochastic process evaluated at block $j=xN$ converges
to the solution of a system of differential equations 
evaluated at $x$ as 
$\delta\to 0$. 
Our proof technique is rather flexible and
accommodates broad classes of update rules,
including Glauber updates at any temperature.

\subsection{Notational conventions}
We now establish some notation conventions that will be used throughout the paper. First, we use boldface font to denote vectors or matrices with dimension growing with $N$, which we call \emph{high dimensional}. On the other hand, we will use non-boldface font to denote vectors or matrices with dimension that is constant in $N,$ which we call \emph{low dimensional}. These non-boldface quantities will be of two kinds. First, we will have quantities (either deterministic or random) which are entirely $N$-independent, like those that are the solution to our asymptotic equations (see \eqref{eq:h-joint-law-alone}-\eqref{eq:f-initial} below). We will refer to these as the \emph{Effective Dynamics}, since they simulate in low dimension the (high-dimensional) dynamics that we analyze, in the limit $N\to\infty$. Second, there will be quantities which are low-dimensional functions of boldface quantities, so they depend on $N$ despite having dimension that is constant in $N$. These second quantities will be denoted with a hat. For instance, below we will often use the $(T+1)\times (T+1)$ overlap matrix of spin vectors across times $t=0,\dots,T$, which will be denoted $\hat{C}.$ Finally, in \Cref{sec:finite-delta-dynamics}, in order to compare the limiting dynamics to the finite-$N$ dynamics, it will be useful for us to take a low-dimensional stochastic process and form high-dimensional vectors whose entries are i.i.d. draws from distributions determined by the low-dimensional process. We will denote these ``i.i.d. processes'' with a tilde, e.g. $\wt{\bfh}, \wt{\bfsigma}.$

In this paper, we will deal with matrices with different indexing conventions. Some matrices will have indexing of their rows and columns beginning with zero, others beginning with one, and yet others will have rows beginning with one and columns beginning with zero. We will be explicit about the appropriate indexing convention when each matrix is defined, but in all cases the convention is chosen to agree with the physical interpretation. For a matrix $M$ with zero-indexed rows and columns and $t\in \N,$ we let $M^{(t)} = \{M_{ss'}\}_{s,s'=0}^t$, and for a matrix with one-indexed rows and columns, we define $M^{(t)} = \{M_{ss'}\}_{s,s'=1}^t$. Even when the matrix $M$ is $(t+1)\times(t+1)$ (resp. $t\times t$), we will sometimes write $M^{(t)}$ to remind the reader of this fact. Moreover, for indices $s\leq s',t\leq t'$, we use $M_{s:s',t:t'}$ to refer to the block of $M$ formed by rows $s$ to $s'$ (inclusive) and columns $t$ to $t'$ (inclusive), and let $M_{s:s', t}:=M_{s:s', t:t}$ and $M_{s, t:t'}:=M_{s:s, t:t'}.$  Similarly, if we have a sequence of vectors or scalars $v^0,\dots, v^t$ or $v^1,\dots,v^t,$ we let $v^{(t)}$ denote the tuple $(v^{0}, \dots, v^t)$ or $(v^1, \dots, v^t)$, respectively.

\subsection{Paper organization}
The paper is organized as follows. In \Cref{sec:technical-overview} we present our result and some of its consequences. A formal statement of our main theorem is found in \Cref{sec:main-result}. In \Cref{sec:useful-lemmas} we gather several definitions and useful lemmas that are used throughout our proofs. \Cref{sec:finite-delta-dynamics}, \Cref{sec:delta-to-0-limit} and \Cref{sec:main-thm-proof} make up the proof of our main result. In particular, in \Cref{sec:finite-delta-dynamics}, we analyze
the aforementioned recursive Gaussian process
which we formally introduce in \Cref{sec:technical-overview} and which serves as a limiting approximation of the $\delta$-block dynamics. 
 Then, in \Cref{sec:delta-to-0-limit} and \Cref{sec:main-thm-proof}, we obtain the description of our dynamics as a limit of these modified dynamics. Finally, in \Cref{sec:numerics}, we numerically solve our equations, compare to simulations, and draw some conceptual conclusions.

\ignore{

\newpage

\dg{BEGIN OLD PART. DELETE WHEN DONE}

The \emph{Sherrington-Kirpatrick (SK) model} is the prototypical example of a model of a \emph{disordered system}. It was proposed in 1975 \cite{SK75} as a simplified, mean-field version of the Edwards-Anderson model \cite{EA75}, whose original motivation in physics was to understand the intriguing properties of certain dilute magnetic alloys. The mathematical definition is as follows. Let $J$ be a $\GOE(N)$ matrix, i.e., an $N\times N$ matrix with centered, jointly Gaussian entries with covariance $\E J_{ij}J_{kl} = (\delta_{ik}\delta_{jl} + \delta_{il}\delta_{jk})/N.$ The \emph{Sherrington-Kirpatrick Hamiltonian}, also referred to below as the \emph{energy}, is the random function $H:\{\pm 1\}^N \to \R$ given by 
\begin{equation}\label{def:ham}
    H(\bfsigma) = \frac{1}{2}\brac{\bfsigma, J\bfsigma}.
\end{equation}
Then, given an inverse temperature parameter $\beta \in [0,\infty),$ the SK model is the Gibbs measure associated to $H,$ i.e., the random measure on $\{\pm1\}^n$ given by
\begin{align}\label{eq:Gibbs-measure}
    \mu_\beta(\bfsigma) &= \frac{1}{Z_\beta} \exp(\beta H(\bfsigma)),
\end{align}
where $Z_\beta$ is the normalization constant so that $\sum_{\bfsigma} \mu(\bfsigma)=1,$ called the \emph{partition function.} A few years after this model was proposed, in a series of groundbreaking papers \cite{ParisiRSB1, ParisiRSB2, ParisiRSB3}, Parisi non-rigorously found an exact formula for the limiting \emph{free energy} $\lim_N \frac{1}{N}\E \log Z_\beta.$ The subsequent interpretation of Parisi's solution led to a remarkably complete non-rigorous understanding of the structure of the model at equilibrium (i.e., of the measure $\mu_\beta).$ Many of the most important aspects of this picture have now been proved; chiefly among them, the Parisi formula for the free energy \cite{Guerra03, Talagrand06} and the ultrametricity conjecture \cite{Panchenko13}.

Once the equilibrium properties of the model were relatively well-understood, physicists started studying \emph{dynamical} aspects of the model (see, e.g., \cite{SZ82, CK93}, among many others). This was of particular interest because many of the intriguing features observed in the materials that motivated the study of these models in the first place were dynamical in nature, like the so-called \emph{aging} phenomenon (see \cite{BaDG01} for a mathematical treatment). Since the model \eqref{eq:Gibbs-measure} has the hypercube $\{\pm 1\}^N$ as configuration space, the most natural dynamics are Markov chains on this state space which have $\mu_\beta$ as stationary measure, like the \emph{Glauber dynamics}. However, dynamics of this kind, with ``discrete updates,'' have so far been out of reach of exact asymptotic analysis,\footnote{What we mean by an ``exact asymptotic analysis'' is described in \Cref{sec:technical-overview}.} rigorous or not, for physicists and mathematicians alike. To circumvent this problem, Cugliandolo and Kurchan \cite{CK93} studied a model related to \eqref{eq:Gibbs-measure}, known as the \emph{pure spherical $p$-spin model}, where the quadratic form \eqref{def:ham} is replaced by a homogeneous degree-$p$ polynomial with Gaussian coefficients, and the state space is relaxed from the hypercube to the sphere in $\R^N.$ In this model, the natural dynamics is the Langevin diffusion on the sphere. 
Cugiandolo and Kurchan non-rigorously derived an exact asymptotic description of these Langevin-type dynamics for the spherical $p$-spin models in the \emph{out of equilibrium} regime, where one runs the dynamics for a time independent of the system size $N$ ---the continuous analogue of running Glauber dynamics for linearly in $N$ many updates. These \emph{Cugliandolo-Kurchan equations} were proven rigorously by Ben Arous, Dembo and Guionnet \cite{BaDG06} after earlier work by the same authors \cite{BaG95, BaG97, Gionnet97,  BaDG01}. Moreover, the original SK model itself can be treated via Langevin dynamics \cite{CK-SK94} by relaxing the hypercube $\{\pm 1\}$ to the real line, and adding a potential which penalizes the dynamics for being far from $\{\pm 1\}$.

Dynamical behavior directly on the hypercube was treated in the physics literature, but so far only for the case of synchronous or parallel dynamics where every spin is updated at every time step \cite{eissfeller1992new,eissfeller1994mean,VBKZ24}. The properties of the parallel dynamics are, however, very different from those of sequential dynamics where the spins are changing one by one. For example, in the latter the energy is non-decreasing in time, whereas the in parallel dynamics energy can be non-monotone, and limiting cycles can appear in large times; see examples of this in \cite{VBKZ24}. 

When it comes to dynamics which update one spin at a time on the hypercube, their analysis has remained an outstanding open problem in both the physics and mathematics of disordered systems. No analogue of the asymptotically exact Cugliandolo-Kurchan equations is known in that case. 
Due to the lack of exact limiting equations, physicists have resorted instead to simulations to study these dynamics. Even very basic questions such as the energy at which sequential zero temperature dynamics stops are widely open. From numerical investigations it is clear that the answer depends strongly on the details of the spin update protocol. 
In particular, Parisi \cite{Parisi-experimental}  ran simulations on three natural dynamics at \emph{zero temperature}. He called them the \emph{sequential algorithm}, the \emph{greedy algorithm} and the \emph{reluctant algorithm.} In the greedy algorithm, one finds at each step the single spin-flip which would result in the greatest energy increase, and performs it. In the reluctant dynamics, one finds the single spin-flip which would result in the minimal positive energy change, and performs it. In the sequential algorithm, which we refer to below as the \emph{systematic scan} following \cite{dyer2008dobrushin,dyer2006systematic},
one starts with a uniformly-random configuration $\bfsigma^0\in \{\pm 1\}^N$ and repeatedly scans through the spins from left to right, flipping sequentially each one if that results in an increase in energy. 
The literature also refers to this type of update as ordered sequential update \cite{fouladvand1999multi}, or systematic update \cite{peres2013can}. Variants of these protocols were studied in many follow-up numerical papers, e.g. \cite{bussolari2003energy,contucci2005interpolating,contucci2005finding,VBKZ24}, but an exact asymptotic characterization of any of the asynchronous ones remains open.  

In this paper, we derive and prove the exact asymptotics of a modification of the systematic scan dynamics, which we term the \emph{block} systematic scan (see \Cref{sec:technical-overview} for a definition). We conjecture that our equations also describe the exact asymptotics of the systematic scan dynamics themselves, and hence give the first exact asymptotic description of a natural sequential dynamics with discrete updates.  We will address this conjecture in an upcoming work. For now, we verify this conjecture experimentally in \Cref{sec:numerics}. Moreover, in \Cref{sec:technical-overview}, we explain how our results can be used to characterize the asymptotics of modifications of the greedy and reluctant dynamics as well, although in this case we are unsure whether these modifications agree asymptotically with the greedy and reluctant dynamics themselves.
In further upcoming work, we also characterize the limiting equations of the Glauber dynamics, where one updates a uniformly-random spin at each step---in this case, the limiting equations themselves are somewhat more complex.

\fran{Things which would be good to mention at some point:
\begin{itemize}
    \item Work on synchronous dynamics ---this is clearly relevant because our ``block updates" are ``synchronous'' in a sense. Maybe this can go in a ``further related work'' after section 2, because it will be easier to explain the relation to our work after the block sequential dynamics have been defined. This could also be the section where we compare to MacLaurin and Grunwald. {\color{blue}LZ: I incorporated this briefly in the intro, see above. This is necessary I think. We can expand.}
    \item LZ: Are we proving on the way rigorously the DMFT for the synchronous update in SK? I.e. the equations we used with Vittorio and that Opper uses .... that is worth mentioning no? 
\end{itemize}
}
\yatin{todo: mention that Proof technique is based on iterative conditioning, first developed for TAP/AMP and recently adapted to DMFT in the context of machine learning}

Our equations follow the framework of Dynamical Mean-Field Theory (DMFT) in Statistical Physics \cite{Sompolinsky_88}. However, unlike the usual setting of a finite-number of updates, the single-pass structure in our equations leads to a system of ordinary differential equations with certain intriguing properties.

}

\section{Main result}\label{sec:technical-overview}
In this section, we formally define the problem and describe some consequences of our result. For a formal statement of our main theorem in its full generality, see \Cref{sec:main-result}.

Our main result gives an asymptotic characterization of the dynamics on $\{\pm 1\}^N$ of a certain algorithm associated with the Hamiltonian given by \Cref{def:ham}. The algorithm we analyze is closely related to a very well known and widely used algorithm known as the \emph{systematic scan} (see, e.g., \cite{dyer2006systematic, dyer2008dobrushin}), which we describe first in order to contrast it with our algorithm.

The systematic scan is simplest in its \emph{greedy}, or \emph{zero-temperature}, version, which is the one Parisi \cite{Parisi-experimental} studied: given a number of passes $T$, we start from a uniformly-random configuration $\bfsigma^0 \in \{\pm 1\}^N,$ and we scan through all the indices $1,\dots,N$ from left to right (in a fixed order) $T$ times, sequentially flipping each bit whenever that results in an increase in energy (the value of $H$). In the general systematic scan, instead of deterministically flipping each bit if that results in an increase in energy, we flip it with a probability described as a function $p:\R \to [0,1]$ of the energy change that would result from the flip. If, for a given constant $\beta>0,$ the function $p=p_\beta$ is chosen as $p(x)=1/(1+e^{-2\beta x}),$ this is known as the \emph{systematic Glauber dynamics at inverse temperature $\beta$}, and in this case it can be checked that this algorithm gives a reversible Markov chain with stationary measure given by \eqref{eq:Gibbs-measure}.

Given a configuration $\bfsigma \in \{\pm 1\}^N,$ the change in energy from flipping a given spin $i$ is known, up to a sign, as the \emph{local field} at $i.$ Formally, given $i \in [N]$ and a configuration $\bfsigma \in \{\pm 1\}^N,$ we define the local field at $i$, denoted $\bfh_i$, as follows:
   \begin{equation}\label{eq:field}
       \bfh_i =\bfh_i(\bfsigma) = (J\bfsigma)_i.
   \end{equation}
Hence, the general systematic scan can be phrased in the following way which will be convenient for us below. First, we fix a function $c:\Omega \times \R \to \{\pm 1\}$, called an \emph{update rule}, where $\Omega$ is a probability space, and for $x\in \R,$ we sometimes use the notation $c(x)$ to denote the random variable $\omega \mapsto c(\omega, x).$ To avoid degenerate cases, we assume three things about $c$: (1) if $G$ is a standard Gaussian random variable, we have $0<\Pr[c(G)=1]<1,$ (2) we take $\Omega = [0,1]$, equipped with the Lebesgue measure, and (3) the boundary $\partial (c^{-1}(1))$ has Lebesgue measure zero in $[0,1]\times \R.$ Then, we start from a random $\bfsigma^0 \in \{0,1\}^N$, and we sequentially scan through all the spins from left to right. If we reach spin $i$ on configuration $\bfsigma$, we replace $\bfsigma_i \gets c(\bfh_i)$, where $\bfh_i=\bfh_i(\bfsigma),$ and where the randomness of $c$ is independent across different updates. If we take $c$ such that $\Pr[c(x)=1]=  p_\beta(x)$ for all $x,$ we recover the systematic Glauber dynamics.

The algorithm that we analyze in this work is a modification of the systematic scan to \emph{block updates.} To define it, we fix a constant $\delta \in [0,1]$ and divide $[N]$ into $1/\delta$ equally-sized ``blocks" $B(j)=\{i\in [N]: (j-1)\delta N < i\leq j \delta N\}$, $j=1,\dots, 1/\delta  $. Then, starting from a uniformly-random configuration $\bfsigma^0 \sim \Unif\{\pm 1\}^N$, we sequentially scan through the blocks from left to right. When we reach a given block on configuration $\bfsigma \in \{\pm 1\}^N$, we compute the local fields of all the spins within the block in parallel, and then we update all the spins in the block simultaneously according to these ``pre-computed'' local fields. This defines a sequence of vectors $\bfsigma^{t,j} \in \{\pm 1\}^N$ for $t\in \N$ and $j\in [1/\delta]$ which correspond to the state of the algorithm immediately after updating the $j$'th block in the $t$'th pass. These vectors are formally defined below. It will be convenient in the analysis for us to have access to the ``random seeds'' used in the updates to $c.$ Hence, we assume that $\{U_i^t:i\in [N], t\geq 0\}$ are i.i.d. $\Unif[0,1]$ random variables sampled in advance.
\begin{algorithm}[H]
\caption{Block Systematic Scan}
\label{alg:t-passes}
\SetAlgoLined
$\bfsigma^{0,0} \leftarrow \bfsigma^0 \sim \Unif\{\pm 1\}^N$

\For{$t \gets 1$ \KwTo $T$}{
  \For{$j \gets 1$ \KwTo $1/\delta$}{
    \For{$i \in B(j)$}{
      $\bfh^t_i \gets (J \bfsigma^{t,j-1})_i$
    }
    \For{$i \in [N]$}{
      $\bfsigma^{t,j}_i \leftarrow \begin{cases}
        c(U_i^t, \bfh^t_i) & i \in B(j)\\
        \bfsigma^{t,j-1}_i & \text{otherwise}
      \end{cases}$
    }
  }
  $\bfsigma^{t+1,0} \gets \bfsigma^{t,\frac{1}{\delta}}$ \tcp*[r]{initialization for pass $t+1$.}
}
\end{algorithm}
We note that, as the notation suggests, the vector $\bfh^t$ does not depend on $j.$ Indeed, each block $\{\bfh^t_i : i \in B(j)\}$ of $\bfh^t$ gathers the local fields of the sites in that block immediately before the block is updated. Hence, these blocks of values are computed once within each pass and do not evolve beyond that. Finally, for each $1\leq t\leq T,$ we let $\bfsigma^t := \bfsigma^{t,\frac{1}{\delta}}$ be the configuration at the end of pass $t.$

Our main result is an exact characterization of Algorithm~\ref{alg:t-passes} (in a sense specified below), in the double limit where first $N\to\infty$ and then $\delta \to 0$, as the unique solution to a system of explicit integro-difference equations. We conjecture that this agrees with the limiting dynamics of the systematic scan itself (without block updates). As mentioned in the introduction, we will address this conjecture in an upcoming work, as part of a more general framework to prove universality results for out of equilibrium discrete-update dynamics. For now, we verify this conjecture experimentally in \Cref{sec:numerics}.

We first describe what our results say about Algorithm~\ref{alg:t-passes}. For the purposes of this section, we will describe a weaker version of our main theorem (\Cref{thm:main}), but one that is easier to describe succinctly. Suppose we fix $x\in [0,1]$ and a sequence of indices $i_N \in [N]$ such that $i_N/N \to x \in [0,1]$ as $N\to\infty.$ Let $\nu_{N, i_N, \delta, T}$ be the joint law of the random variables $\bfsigma^t_{i_N},t=0,\dots,T$ and $\bfh_{i_N}^t, t=1,\dots,T$ from Algorithm~\ref{alg:t-passes}, which we call the \emph{spin-field history at} $i_N$. A consequence of \Cref{thm:main} is that the weak limit $\nu_{x,T}:= \lim_{\delta \to 0}\lim_{N\to\infty}\nu_{N,i_N, \delta, T}$ exists. Moreover, the spin-field histories become ``weakly decoupled'' across $i=1,\dots,N$, in the following sense. If we let 
\[
\widehat{\nu}_{N,\delta, T} = \frac{1}{N}\sum_{i=1}^N \delta_{(\{\bfsigma_i^t\}_{t=0}^T,\{\bfh_i^t\}_{t=1}^T)}
\]
be the empirical measure of spin-field histories, then 
\[
\lim_{\delta \to 0}\lim_{N\to\infty}\widehat{\nu}_{N,\delta,T} = \int_0^1 \nu_{x,T}dx,
\]
where the convergence is weakly in probability. More precisely, for every (pseudo-Lipschitz, see \Cref{def:pseudo-lipshitz-fucntion}) test function $\varphi,$ we have that $\int \varphi d\widehat{\nu}_{N,\delta,T}$ converges in probability to a deterministic limit which depends on $\delta$, and as $\delta \to 0$, this limit converges in $\R$ to $\int_0^1 \int \varphi d\nu_{x,T}dx.$ Essentially every macroscopic observable of interest is a function of the empirical measure over spin-field histories. Examples include the energy, the magnetization, the two-point overlap (the normalized inner product between the configuration of the dynamics at two different times), etc. Hence, our result shows that all of these quantities concentrate, and their deterministic asymptotic values can be computed from $\nu_{x,T}.$ 

Most importantly, our result gives an explicit description of this limit law $\nu_{x,T}$, which we now describe by specifying a procedure to obtain a sample $\{\sigma_x^t\}_{t=0}^T, \{h_x^t\}_{t=1}^T$ from $\nu_{x,T}.$ To do this, we will first describe a generic procedure which takes as input a PSD matrix $K^{(T)} \in \R^{T\times T}$, with indexing starting from one, and a sequence of vectors $v^t \in \R^t, t=1,\dots,T,$ and outputs a joint sample $\{\sigma^t\}_{t=0}^T, \{h^t\}_{t=1}^T$. Then, the sample $\{\sigma_x^t\}_{t=0}^T,\{h_x^t\}_{t=1}^T$ is obtained by choosing $K^{(T)} = \Sigma^{(T)}(x)$ and $v^t = f^t(x), t=1,\dots,T$, where $\Sigma^{(T)}(x)$ and $f^t(x),t=1,\dots,T$ are specified below by a system of ODE.

We now describe the sampling procedure given $K^{(T)}$ and $v^{(T)}:= (v^1,\dots, v^T).$  First, we sample $\sigma^0 \sim \Unif\{\pm 1\}.$ Then, independently of $\sigma^0$, we sample $T$ jointly Gaussian random variables $G^1,\dots, G^T$ of zero mean and covariance matrix $K^{(T)}$. Finally, we recursively define, for $t=1,\dots,T,$
\begin{equation}\label{eq:h-joint-law-alone}
\begin{cases}
        h^t = G^{t} + \brac{v^t, \sigma^{(t-1)}}\\
        \sigma^t = c(h^{t}),
\end{cases}
\end{equation}
where for $s\in \N,$ recall that $\sigma^{(s)} = (\sigma^0,\dots, \sigma^s) \in \{\pm 1\}^{s+1},$ and where the randomness of $c$ is independent across $t=1,\dots,T.$ Below, we use the notation $\E_{K^{(T)}, v^{(T)}}$ to denote the expectation operator with respect to the joint law given by \eqref{eq:h-joint-law-alone}.

To give a full description of the limiting law $\nu_{x,T},$ it remains to specify the quantities $\Sigma^{(T)}(x)\in \R^{T\times T}$ and $f^t(x) \in \R^t$ for $t=1,\dots, T.$ These, viewed as functions of $x\in [0,1],$ are the unique solution to a system of equations which we now specify. 
We first state them as a sequence of ordinary differential equations, since in this form it becomes evident that there is a unique solution. Later we shall write them in the equivalent integro-difference form, which is closer to the physical interpretation. To describe the equations, we first need to define three matrix-valued functions parameterized by the time horizon $T\in \N,$ which we suppress in the notation to avoid clutter. These are $\calA, \calC : \R^{T\times T} \times \left(\times_{s=1}^T \R^s\right) \to \R^{(T+1)\times (T+1)}$ and $\calR : \R^{T\times T} \times \left(\times_{s=1}^T \R^s\right) \to \R^{T\times (T+1)}$. We will index the rows and columns of the output of $\calA$ and $\calC$ from 0 to $T$ to keep it consistent with the physical interpretation as the pass number, with the zeroth pass representing the initial condition. Similarly, the rows of the output of $\calR$ will be indexed from $1$ to $T,$ while the columns will be indexed from $0$ to $T.$ We define $\calC$ and $\calR$ as follows:
\begin{align}
    \calC_{s,t}(K^{(T)}, v^{(T)}) &= \E_{K^{(T)}, v^{(T)}}[\sigma^s \sigma^t] &&0\leq s,t\leq T\label{eq:calC}\\
    \calR_{s,t}(K^{(T)}, v^{(T)}) &= \E_{K^{(T)}, v^{(T)}}[G^s \sigma^t]  &&1\leq s\leq T,0\leq t\leq T.\label{eq:calR}
\end{align}
Finally, the value of $\calA(K^{(T)}, v^{(T)})$ is defined recursively as the unique upper-triangular matrix $A \in \R^{(T+1)\times (T+1)}$ with zeros on the diagonal which satisfies the recursion
\begin{align}\label{eq:calA}
    A_{0:t-1,t} &= \left(A^{(t-1)} - (K^{(t)})^{-1} R_{1:t, 0:t-1}\right)(C^{(t-1)})^{-1}C_{0:t-1,t} + (K^{(t)})^{-1}R_{1:t, t},
\end{align}
where $C = \calC(K^{(T)}, v^{(T)})$ and  $R = \calR(K^{(T)}, v^{(T)})$. Above, we are assuming that the input to $\calA$ is such that $K^{(t)}$ and $C^{(t-1)}$ are invertible for $t=1,\dots,T$. If this is not the case, we define $\calA(K^{(T)}, v^{(T)})$ to be the all-zero matrix, although as we show in \Cref{prop:finite-delta-regularity}, this case never arises in the solution trajectories.

Now, the sequence of ODEs which specify $\Sigma^{(T)}(x)$ and $f^t(x),t=1,\dots,T$ is as follows
\begin{align}
    \frac{d}{dx}\Sigma^{(T)}(x) &= \calC_{1:T, 1:T}(\Sigma^{(T)}(x), f^{(T)}(x))  - \calC_{0:T-1, :T-1}(\Sigma^{(T)}(x), f^{(T)}(x)) \label{eq:Sigma-ODE}\\
    \frac{d}{dx}f^t(x) &= \calA_{0:t-1, t}(\Sigma^{(T)}(x), f^{(T)}(x))  - \begin{bmatrix}
        0 \\
        \calA_{0:t-2, t-1}(\Sigma^{(T)}(x), f^{(T)}(x))
    \end{bmatrix}  && t=1,\dots, T,\label{eq:f-ODE}
\end{align}
with the initial condition 
\begin{align}
    \Sigma^{(T)}(0) &= \int_0^1 \calC_{0:T-1,0:T-1}(\Sigma^{(T-1)}(y), f^{(T-1)}(y))dy \label{eq:sigma-initial}\\
    f^{t}(0) &= \int_0^1 \begin{bmatrix}
        0 \\
        \calA_{0:t-2, t-1}(\Sigma^{(T-1)}(y), f^{(T-1)}(y))
    \end{bmatrix}dy && t=1,\dots, T.\label{eq:f-initial}
\end{align}
This is a \emph{sequence} of ODEs because it must be solved iteratively in $T=1,2,\dots$, at each step feeding the solution of the previous step as the initial condition. The base case of the iteration is at $T=1,$ where the initial condition is $\Sigma^{(1)}(0)=[1]$ and $f^1(0)=0$ (we use the convention $\calA_{0:-1,0}:=0$, and $\calC_{0,0}=1$ is true by definition \eqref{eq:calC}).

The same equations can also be written in integro-difference form. This formulation is closer to the physics, but makes it less transparent that there is a unique solution. Below, for the equations to be closed, one has to interpret $\{\sigma_x^t\}_{t=0}^T,\{h_x^t\}_{t=1}^T$ as being sampled, for each $x,$ from \eqref{eq:h-joint-law-alone} with $K^{(T)}=\Sigma^{(T)}(x)$ and $v^t = f^t(x)$ for $t=1,\dots,T.$ Moreover, we write $C(x) := \calC(\Sigma^{(T)}(x), f^{(T)}(x))$, $R(x) := \calR(\Sigma^{(T)}(x), f^{(T)}(x))$ and $A(x) := \calA(\Sigma^{(T)}(x), f^{(T)}(x))$ to make the equations more compact. Then we have
\begin{align}
        \Sigma^{(T)}(x) &= \int_0^x C_{1:T, 1:T}(y)dy + \int_x^1 C_{0:T-1, 0:T-1}(y)dy \label{eq:Sigma}\\
    f^t(x) &= \int_0^x A_{0:t-1, t}(y)dy + \int_x^1 \begin{bmatrix}
        0 \\
        A_{0:t-2, t-1}(y)
    \end{bmatrix}dy &&t=1,\dots, T\label{eq:f}\\ 
    C_{st}(x) &= \E[\sigma_x^s\sigma_x^t] &&s,t=0,\dots, T\label{eq:C-eqn-asymp}\\
    R_{st}(x) &= \E[G_x^s\sigma_x^t] &&s,t=0,\dots, T\label{eq:R-eqn-asymp}\\
    A_{0:t-1, t}(y) &= \left( A^{(t-1)}(y) - \Sigma^{(t)}(y)^{-1} R_{1:t,0:t-1}(y) \right) C^{(t-1)}(y)^{-1} C_{0:t-1,t}(y)\\
    &\qquad + \Sigma^{(t)}(y)^{-1} R_{1:t, t}(y) \nonumber&& t=1,\dots,T \\
     A_{st}(y)&=0 && \text{if } s\geq t \text{ or $t=0$}.\label{eq:A-upper-diag}
\end{align}

Two remarks are in order. First, note that if the $(t,x)$-pairs, for $(t,x)\in \N \times \R,$ are ordered lexicographically, then the equations are causal with respect to this ordering. Second, the functions $C$ and $R$ above are exactly what is known in the literature on out of equilibrium dynamics of disordered systems as the \emph{correlation} and \emph{response} functions. While we chose to write our equations in the dual form with $\Sigma^{(T)}$ and $f^t, t=1,\dots,T$ as the variables (what appears on the left-hand side of the ODE formulation \eqref{eq:Sigma-ODE}, \eqref{eq:f-ODE}), one can equivalently write them with $C$ and $R$ as the variables. Then this would be exactly in agreement with the choice of order parameters of Cugliandolo-Kurchan in the spherical setting.

In summary, equations \eqref{eq:h-joint-law-alone}-\eqref{eq:f-initial}, or equivalently \eqref{eq:h-joint-law-alone} and \eqref{eq:Sigma}-\eqref{eq:A-upper-diag}, completely specify the limiting law $\nu_{x,T}$ of the spin-field histories, and by solving these equations, we can compute the limiting value of any observable which is a function of the empirical measure of spin-field histories. In \Cref{sec:numerics} we numerically solve these equations, compute some observables of interest, and compare to experimental data obtained through simulations of the systematic scan dynamics.





\subsection{Main Theorem}\label{sec:main-result}
We now give the formal statement of our main theorem. To describe the weak convergence in our main result, we introduce the following class of test functions.
\begin{defn}[Pseudo-Lipschitz function]\label{def:pseudo-lipshitz-fucntion}
We say that a function $f \colon \mathbb{R}^m \to \mathbb{R}$ is \emph{pseudo-Lipschitz of order $k$} if there exists a constant $L$ such that for any $x,y \in \mathbb{R}^m$, we have
\begin{align*}
    \lvert{f(x) - f(y)\rvert} \leq 
    L (
        1 
        + \|{x}\|^{k} 
        + \|{y}\|^{k}
    )
    \|{x - y}\| \,.
\end{align*}
\end{defn}
We now state our main result.
    
\begin{thm}\label{thm:main}
Fix $x \in [0,1]$ and $T>0.$
For each $\delta \in (0,1)$, let $j=\lfloor \frac{x}{\delta}\rfloor \in \mathbb{N}$. Let 
$\bfsigma^t=\bfsigma^t(\delta), t=0,\dots,T$ and $\bfh^t=\bfh^t(\delta), t=1,\dots,T$ be the vectors produced by Algorithm~\ref{alg:t-passes}. Then, for all $\varphi:\{\pm 1\}^{T+1} \times \mathbb{R}^T \rightarrow \mathbb{R}$, pseudo-Lipschitz of finite order in the second argument, we have:
\begin{align*}
 \lim_{\delta \to 0}\lim_{N\to\infty}\frac{1}{\delta N} \sum_{i= j \delta N}^{(j+1)\delta N -1}\varphi(\bfsigma^{(T)}_i, \bfh^{(T)}_i)    = \E[\varphi(\sigma_x^{(T)},h_x^{(T)})],
\end{align*}
    where the first limit is in probability, the second is in $\R$, and the expectation on the right hand side is with respect to the distribution explicitly described by the equations \eqref{eq:h-joint-law-alone}-\eqref{eq:f-initial} above.
\end{thm}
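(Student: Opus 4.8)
The plan is to prove the theorem in two stages, corresponding to the two nested limits, following the architecture announced in the introduction: first analyze the $\delta$-block dynamics at fixed $\delta$ as $N\to\infty$ via iterative Gaussian conditioning, then pass to the $\delta\to 0$ limit.

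\textbf{Stage 1: the finite-$\delta$ state evolution.} For fixed $\delta$, Algorithm~\ref{alg:t-passes} performs $T/\delta$ block updates, each of which computes $\delta N$ new local fields as a linear projection $J\bfsigma^{t,j-1}$ and then applies the pointwise, randomized map $c$. Since the total number of scans $T/\delta$ is bounded, the total number of ``linear probes'' of the GOE matrix $J$ is bounded. I would therefore set up an iterative Gaussian conditioning argument in the spirit of \cite{bolthausen2014iterative, bayati2011dynamics}: process the updates in the natural causal order $(t,j)$, and at each step reveal $J$ conditioned on all the spin vectors and fields produced so far. By \Cref{prop:gen_gauss_cond}, this conditional law decomposes into a deterministic part (expressible through the already-revealed overlaps, packaged into the matrices $\hat C$, $\hat R$, etc.) plus an independent fresh GOE copy $\wt J$, whose contribution is an approximately Gaussian vector with an explicitly computable covariance. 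Iterating this, one proves by induction on $(t,j)$ that the empirical distribution of the spin-field histories within block $B(k)$ converges, as $N\to\infty$, to an $N$-independent low-dimensional stochastic process — this is the ``Effective Dynamics'' alluded to in the notational conventions. I expect the statement of this stage to be a proposition asserting: for every pseudo-Lipschitz $\varphi$, $\frac{1}{\delta N}\sum_{i\in B(j)}\varphi(\bfsigma_i^{(T)},\bfh_i^{(T)}) \to \E\varphi$ in probability, where the limiting law is the recursively-defined Gaussian process of \Cref{sec:finite-delta-dynamics}. The main technical care here is handling the fact that the block structure means the ``number of vectors'' conditioned on is $O(T/\delta)$, not $O(T)$ — so the covariance bookkeeping grows as $\delta\to 0$, and one must track constants uniformly; also, the rule $c$ is only assumed to have a measure-zero discontinuity set, so one needs the standard argument that pseudo-Lipschitz test functions composed with $c$ still concentrate (dominated-convergence / Portmanteau around the boundary $\partial(c^{-1}(1))$).

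\textbf{Stage 2: the $\delta\to 0$ limit.} Here I would show that the discrete-in-$j$ recursion defining the finite-$\delta$ Effective Dynamics is an Euler-type discretization of the system of ODEs \eqref{eq:Sigma-ODE}–\eqref{eq:f-initial}, with step size $\delta$. Concretely, within a single scan $t$, moving from block $j-1$ to block $j$ changes the overlap matrix and the drift vectors $f^t$ by an increment of size $\delta$ times a difference of $\calC$/$\calA$ evaluated at the current state; comparing this to \eqref{eq:Sigma-ODE}–\eqref{eq:f-ODE} identifies the limit. The transition between scans ($\bfsigma^{t+1,0}=\bfsigma^{t,1/\delta}$) supplies the initial conditions \eqref{eq:sigma-initial}–\eqref{eq:f-initial}. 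To make this rigorous I would: (i) establish a priori regularity/boundedness of the finite-$\delta$ trajectories uniformly in $\delta$ (this is where \Cref{prop:finite-delta-regularity} enters, guaranteeing $K^{(t)}$ and $C^{(t-1)}$ stay invertible so that $\calA$ is given by the non-degenerate formula and is Lipschitz in its arguments on the relevant compact set); (ii) invoke local Lipschitz continuity of $\calC,\calR,\calA$ to get a Grönwall estimate showing the piecewise-constant interpolation of the finite-$\delta$ recursion converges uniformly on $[0,1]$ to the unique ODE solution as $\delta\to 0$; (iii) conclude by the induction on $T$ built into the ODE system, since the initial data at horizon $T$ is an integral of the horizon-$(T-1)$ solution.

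\textbf{Assembling the two stages.} Finally, combining Stage 1 (convergence of the finite-$\delta$ block empirical average to $\E_{\Sigma_\delta^{(T)}(x),f_\delta^{(T)}(x)}\varphi$, where $\Sigma_\delta,f_\delta$ are the finite-$\delta$ Effective Dynamics evaluated at block $j=\lfloor x/\delta\rfloor$) with Stage 2 (convergence $\Sigma_\delta^{(T)}(\lfloor x/\delta\rfloor\cdot\delta)\to\Sigma^{(T)}(x)$, and likewise for $f$) and continuity of $(K,v)\mapsto\E_{K,v}\varphi$ in the non-degenerate regime yields the iterated-limit statement of \Cref{thm:main}.

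\textbf{Main obstacle.} I expect the hard part to be Stage 1 executed \emph{uniformly enough in $\delta$} to survive Stage 2: the Gaussian-conditioning error terms in a single $N\to\infty$ analysis are typically $o_N(1)$ with constants depending on the number of conditioned vectors, which is $\Theta(T/\delta)$ here. One must either show these constants are benign (polynomial in $1/\delta$, hence killed by taking $N\to\infty$ first, which is fine since the limits are iterated and not simultaneous — so in fact for \emph{each fixed} $\delta$ the $N\to\infty$ limit is clean) and then separately control the $\delta\to 0$ behavior of the \emph{limiting} ($N$-free) recursion. Since the theorem only asks for iterated limits, the cleanest route is: for each fixed $\delta$, Stage 1 gives an exact $N$-independent limit with no uniformity needed; then Stage 2 is a purely deterministic ODE-discretization argument. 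The real subtlety is then proving the a priori non-degeneracy (\Cref{prop:finite-delta-regularity}) that makes $\calA$ well-behaved along the whole trajectory — without it, neither the recursion nor the ODE is guaranteed Lipschitz, and the Grönwall comparison collapses.
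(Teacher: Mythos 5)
Your proposal follows essentially the same route as the paper: a fixed-$\delta$ state-evolution analysis via iterative Gaussian conditioning (the content of \Cref{sec:finite-delta-dynamics}), a purely deterministic $\delta\to 0$ passage from the difference equations to the ODE system resting on the uniform-in-$\delta$ regularity of \Cref{prop:finite-delta-regularity}, and the final assembly via continuity of $(K^{(T)},v^{(T)})\mapsto\E_{K^{(T)},v^{(T)}}\varphi$ together with uniform convergence of $\Sigma^{(T),j},f^{t,j}$ to $\Sigma^{(T)}(x),f^t(x)$, exactly as in \Cref{sec:delta-to-0-limit} and \Cref{sec:main-thm-proof}. The only deviation is that for the $\delta\to 0$ step you propose a direct Euler/Gr\"onwall discretization estimate, whereas the paper uses Arzel\`a--Ascoli precompactness of the interpolated trajectories plus Cauchy--Lipschitz uniqueness to identify the limit (\Cref{thm:convergence-of-delta-to-0}); both arguments hinge on the same non-degeneracy bounds, so this difference is inessential.
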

We obtain as an immediate consequence the convergence of the full empirical measure of spin-field histories.
\begin{cor}\label{cor:main}
    Let $\{\bfsigma^t\}_{t=0}^T, \{\bfh^t\}_{t=1}^T$ be as in \Cref{thm:main}. Then we have 
    \begin{align*}
     \lim_{\delta \to 0}\lim_{N\to\infty}\frac{1}{N} \sum_{i= 1}^N \varphi(\bfsigma^{(T)}_i, \bfh^{(T)}_i)    = \int_0^1 \E[\varphi(\sigma_x^{(T)},h_x^{(T)})]dx,
    \end{align*}
    where the first limit is in probability, the second in $\R$, and the expectations on the right hand side are with respect to the distribution explicitly described by the equations \eqref{eq:h-joint-law-alone}-\eqref{eq:f-initial} above.
\end{cor}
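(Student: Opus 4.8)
The plan is to derive \Cref{cor:main} from \Cref{thm:main} by a block decomposition followed by a Riemann-sum / dominated-convergence argument. Fix a pseudo-Lipschitz test function $\varphi$ and a horizon $T$, and fix $\delta\in(0,1)$ with $1/\delta\in\N$. Partitioning the coordinate set $[N]$ into the blocks $B(1),\dots,B(1/\delta)$ of \Cref{alg:t-passes}, the global empirical average splits as
\begin{equation*}
\frac{1}{N}\sum_{i=1}^{N}\varphi\bigl(\bfsigma_i^{(T)},\bfh_i^{(T)}\bigr)
=\delta\sum_{j=1}^{1/\delta}\Bigl(\frac{1}{\delta N}\sum_{i\in B(j)}\varphi\bigl(\bfsigma_i^{(T)},\bfh_i^{(T)}\bigr)\Bigr).
\end{equation*}
For \emph{fixed} $\delta$ this is a sum of the fixed number $1/\delta$ of terms, and by \Cref{thm:main} applied with an $x$ lying in the $j$-th block (the index range there differs from $B(j)$ by only $O(1)$ coordinates, which is irrelevant in the limit), each inner block average converges in probability as $N\to\infty$ to a deterministic constant $\ell_\delta(j)$. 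A finite sum of convergences in probability is again one, so
\begin{equation*}
\lim_{N\to\infty}\frac{1}{N}\sum_{i=1}^{N}\varphi\bigl(\bfsigma_i^{(T)},\bfh_i^{(T)}\bigr)=\delta\sum_{j=1}^{1/\delta}\ell_\delta(j)\qquad\text{in probability.}
\end{equation*}

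Now I would let $\delta\to 0$ on the right-hand side. Writing $g_\delta:[0,1)\to\R$ for the step function $g_\delta(x)=\ell_\delta(\lfloor x/\delta\rfloor+1)$, we have $\delta\sum_{j=1}^{1/\delta}\ell_\delta(j)=\int_0^1 g_\delta(x)\,dx$, and by the second (i.e.\ $\delta\to 0$) limit in \Cref{thm:main}, $g_\delta(x)\to\E[\varphi(\sigma_x^{(T)},h_x^{(T)})]=:g(x)$ for every fixed $x\in[0,1)$. It therefore remains to pass the limit through the integral, i.e.\ to show $\int_0^1 g_\delta\to\int_0^1 g$, and I would do this by dominated convergence. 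A pseudo-Lipschitz $\varphi$ of order $k$ satisfies $|\varphi(z)|\le C(1+\|z\|^{k+1})$ for a constant $C$; the spin part of the history is bounded by $\sqrt{T+1}$ in norm, and the effective fields generated by the $\delta$-block dynamics have moments of all orders bounded uniformly in the block index $j$ and in $\delta$ — this follows from the a priori regularity estimates for the $\delta$-block recursive process established in \Cref{sec:finite-delta-dynamics} (of the type of \Cref{prop:finite-delta-regularity}, controlling in particular the field covariances $\Sigma^{(T)}$ and the drift vectors). Hence the $g_\delta$ are uniformly bounded on $[0,1)$ and dominated convergence applies. Combining the two displays via the triangle inequality — first $N\to\infty$ at fixed $\delta$, then $\delta\to 0$ — gives exactly the double limit asserted in \Cref{cor:main}.

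The only step that is not pure bookkeeping is the uniform moment control underpinning the dominated-convergence step: one must know that the effective field variances $\Sigma^{(T)}(x)$ and their finite-$\delta$ counterparts do not blow up as the block location $x$ ranges over $[0,1]$ or as $\delta\to0$. These bounds are a byproduct of the well-posedness and a priori estimates for the ODE system \eqref{eq:Sigma-ODE}--\eqref{eq:f-initial} and of the analysis of the finite-$\delta$ recursion carried out in the course of proving \Cref{thm:main}, so no new ingredient is needed; everything else reduces to the finitely-many-blocks decomposition above together with the elementary fact that a bounded, pointwise-convergent sequence of step functions on $[0,1)$ has convergent integrals.
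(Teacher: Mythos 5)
Your proposal is correct and follows essentially the same route as the paper: decompose the global average into the $1/\delta$ block averages, apply \Cref{thm:main} blockwise, and pass $\delta\to 0$ via bounded/dominated convergence, with the uniform domination supplied by the $\delta$-independent bounds of \Cref{prop:finite-delta-regularity} (bounded drift vectors, unit-variance Gaussian fields). The paper's own proof is just a terser version of this, citing the Bounded Convergence Theorem and the uniform subgaussianity of the fields, while you make explicit the Riemann-sum/step-function bookkeeping and the fact that the domination must hold uniformly in the block index and in $\delta$.
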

\begin{proof}
    The corollary follows from \Cref{thm:main} and the Bounded Convergence Theorem, since
    \[
    \sup_{x\in [0,1]}\E[\varphi(\sigma_x^{(T)},h_x^{(T)})]<\infty
    \]
    as $\sup_{t\in [T]} \sup_{x\in [0,1]}\norm{f^t(x)}_2 < \infty$ by \Cref{prop:finite-delta-regularity} and, for all $x\in [0,1],t\in [T],$ we have $\Var(G^t|G^{(t-1)})\leq \Sigma^{(T)}_{t,t}(x)=1$, so $\{h_x^t:x\in [0,1], t\in [T]\}$ are uniformly subgaussian.
\end{proof}

\subsection{Energy achieved by the dynamics}\label{sec:computing-the-energy}

As an example application of \Cref{thm:main}, we show how it can be used to compute the energy achieved by the dynamics of Algorithm~\ref{alg:t-passes}. We then specialize this application to the simple case $T=1$, where the formulas become more explicit. We refer the reader to \Cref{sec:numerics} for plots of numerical computations of the energy as well as other quantities of interest. 

Let $\{\bfsigma^t\}_{t=0}^T, \{\bfh^t\}_{t=1}^T$ be as in \Cref{thm:main}. Note that the energy change induced by updating spin $i$ in pass $t$ is exactly given by $(\bfsigma^t_i - \bfsigma^{t-1}_i) \bfh_i^t.$ Hence, we have
\begin{align*}
    \frac{1}{N}(H(\bfsigma^T) - H(\bfsigma^0))&= \frac{1}{N}\sum_{i=1}^N\sum_{t=1}^T(\bfsigma^t_i - \bfsigma^{t-1}_i) \bfh_i^t.
\end{align*}
Choosing $\varphi(\{u^t\}_{t=0}^T,\{v^t\}_{t=1}^T) = \sum_{t=1}^T(u^t - u^{t-1})v^t$ and applying \Cref{cor:main}, we obtain that the energy attained by Algorithm~\ref{alg:t-passes} in the double limit $N\to \infty$ and $\delta\to 0$ is exactly
\begin{align*}
    \int_0^1 \E\left[\sum_{t=1}^T (\sigma_x^t - \sigma_x^{t-1})h_x^t\right]dx &= \int_0^1\left(\sum_{t=1}^T R_{t,t}(x) - R_{t,t-1}(x) + \brac{f^t(x), C_{0:t-1,t}(x) - C_{0:t-1,t-1}(x)}\right) dx 
\end{align*}

Next we assume $T=1.$ Denote by $c$ the Glauber update rule, so that $\Pr[c(x)=1] = 1/(1+e^{-2\beta x}).$ For $T=1$, we have
\[
\Sigma_{1,1}(x) = \int_0^x C_{1,1}(x)dx + \int_x^1 C_{0,0}(x)dx = 1,
\]
where we have used that $C_{s,s}(x) = \E (\sigma_x^s)^2=1$ for all $s$ and $x.$ Moreover, since $A$ is upper triangular with zeros in the diagonal, we have $A_{0,0}=0$ and thus
\[
A_{0,1}(y) = -R_{1,0}(y)C_{0,1}(y) + R_{1,1}(y) = R_{1,1}(y),
\]
where we have used that $R_{s,0}(x)=0$ for all $s \geq 1$ since $\sigma^0_x$ is independent of $G_x^s.$ Now, denoting $f(x):=f^1(x)$ and $G:=G_x^1\sim N(0,1)$, we have
\begin{align*}
    R_{1,1}(x) &= \E \sigma^1_x G \\
    &= \E \tanh\left(\beta\left(G + f(x)\sigma^0_x\right)\right) G \\
    &= \beta \E \tanh'\left(\beta\left(G + f(x)\sigma^0_x\right)\right)\\
    &= \beta \E \tanh'\left(\beta\left(G + f(x)\right)\right)\\
    &= \beta \Big(1-\E \tanh^2\left(\beta\left(G + f(x)\right)\right)\Big),
\end{align*}
where we have used that $\tanh'$ is even and Stein's Lemma (see \Cref{lem:Stein} below). Hence, by \eqref{eq:f-ODE}, $f$ is the solution to the differential equation
\begin{align*}
    f'(x) &= \beta \Big(1-\E \tanh^2\left(\beta\left(G + f(x)\right)\right)\Big)
\end{align*}
with the initial condition $f(0)=0.$ Finally, we have 
\[
C_{0,1}(x) = \E \tanh\left(\beta\left(f(x) + G \right)\right).
\]
From this, we can compute the energy $E_1$ at the end of the first pass:
\begin{align*}
    E_1 &=  \int_0^1 \Big( R_{1,1}(x) - R_{1,0}(x)+ f(x)(C_{0,1}(x) - C_{0,0}(x))\Big) dy \\
    &= \int_0^1 \Big( R_{1,1}(x)+ f(x)\Big(\E \tanh\left(\beta\left(f(x) + G \right)\right) - 1\Big)\Big) dy \\
    &= f(1) + \int_0^1 f(x)\Big(\E \tanh\left(\beta\left(f(x) + G \right)\right) - 1\Big)dx,
\end{align*}
where we have used that $R_{1,1}(x) = f'(x).$

\subsection{Organization of the proof}\label{sec:proof-outline}

We organize the proof of \Cref{thm:main} as follows. In \Cref{sec:finite-delta-dynamics}, we analyze Algorithm~\ref{alg:t-passes} for a fixed value of $\delta > 0.$ We show, via the Gaussian conditioning formalism pioneered in \cite{bolthausen2014iterative, bayati2011dynamics}, that the dynamics of Algorithm~\ref{alg:t-passes} converge as $N\to\infty$ to the solution of a system of difference equations which are a $\delta>0$ analogue of \eqref{eq:Sigma}-\eqref{eq:A-upper-diag}. In \Cref{sec:delta-to-0-limit}, we show that the solutions to these $\delta>0$ difference equations converge, as $\delta \to 0$, to the solution of \eqref{eq:Sigma}-\eqref{eq:A-upper-diag}. Along the way, we also prove that \eqref{eq:h-joint-law-alone}-\eqref{eq:f-initial} has a global unique solution in $[0,1]$. Finally, in \Cref{sec:main-thm-proof} we conclude the proof of \Cref{thm:main}.

\section{Preliminaries}\label{sec:useful-lemmas}

Let $\bfsigma^{t,j}, \bfh^t\in \R^N$ denote the spins, fields in Algorithm \ref{alg:t-passes}. To simplify our subsequent analysis, it will be convenient to  isolate spins and contributions to the fields arising from different blocks in $\bfsigma$. To this end, we introduce the following additional notation. First, for a block $j\in [1/\delta],$ we let $D^j\in \R^{N\times N}$ be the diagonal matrix which projects onto the $j$'th block, i.e., 
\[
(D^j)_{k\ell} = \1\{k=\ell \in B(j)\}.\]
Then we define $\bfsigma^{t\vert j}:= D^j\bfsigma^t.$ Thus, $\bfsigma^{t\vert j}_i \in \mathbb{R}^N$ denotes the spin configuration at pass $t$ after the update to block $j$, but with the values of all blocks except the $j$'th  set to zero. We additionally denote by $\bfh^{t,k \rightarrow j}:= D^j J \bfsigma^{t|k}$ the field contributed by block $k$ at pass $t$ to the field at block $j$. Note that we have, for all $i \in B(j),$
\begin{equation}\label{eq:field_decomp}
    \bfh^t_i = \sum_{k<j} \bfh^{t,k \rightarrow j}_i  + \sum_{k\geq j} \bfh^{t-1,k \rightarrow j}_i.
\end{equation}
Finally, using the notation
\begin{align*}
    \bfsigma^{(t)|j}&:= (\bfsigma^{0|j}, \dots, \bfsigma^{t|j}) \in \{\pm 1\}^{N\times(t+1)} \\
    \bfh^{(t),k\to j}&:= (\bfh^{0,k\to j}, \dots, \bfh^{t,k\to j}) \in \R^{N\times(t+1)},
\end{align*}
we define the overlap matrices
\begin{align}
    \hat{C}^{(t),j}&= \frac{1}{\delta N} (\bfsigma^{(t)\vert j})^\top \bfsigma^{(t)\vert j} \in \mathbb{R}^{(t+1)\times (t+1)}\label{eq:hatc} \\
    \hat{Q}^{(t),k \rightarrow j} &= \frac{1}{\delta^{2} N} (\bfh^{(t),k \rightarrow j})^\top \bfsigma^{(t)\vert j}\in \mathbb{R}^{(t+1)\times (t+1)}.\label{eq:hatq}
\end{align}

Our proof heavily relies on the use of projection matrices, the notation for which we specify next.
For any $d \in \mathbb{N}$ and a linear subspace $U \subseteq \mathbb{R}^d$ of dimension $k \leq d$ we denote by $P_U$ the projection linear operator $\mathbb{R}^d \rightarrow \mathbb{R}^d$ onto $U.$ For any set of vectors $\bfu^1,\cdots, \bfu^p$ such that $\operatorname{span}(\bfu^1,\cdots, \bfu^p) = U$, $P_U$ admits the following explicit representation:
\begin{align}\label{eq:projection-pseudoinverse}
    P_U = \bfu^{(p)}((\bfu^{(p)})^\top \bfu^{(p)})^\dagger (\bfu^{(p)})^\top,
\end{align}
where $\bfu^{(p)}:=(\bfu^1,\dots,\bfu^p)\in \R^{N\times p}$ and $(\cdot )^\dagger$ denotes the Moore-Penrose pseudo-inverse. When $(\bfu^{(p)})^\top \bfu^{(p)}$ is full-rank, we have
\begin{equation}\label{eq:pudef}
    P_U = \bfu^{(p)}((\bfu^{(p)})^\top \bfu^{(p)})^{-1} (\bfu^{(p)})^\top.
\end{equation}
We now introduce a notion of asymptotic equivalence which will be used throughout our proofs.
\begin{defn}
    For $p\geq 1$ a constant and two random matrices $X,Y \in \mathbb{R}^{N\times p}$, we use the notation
    \begin{align*}
        X \stackrel{P}{\simeq} Y
    \end{align*}
    to denote
    \begin{align*}
        \frac{1}{\sqrt{N}}\norm{X-Y}_F \xrightarrow[N \rightarrow \infty]{P} 0,
    \end{align*}
    where $\norm{\cdot}_F$ denotes the Frobenious norm and $\xrightarrow[N \rightarrow \infty]{P}$ denotes convergence in probability.
\end{defn}
The above notion is closed under multiplication and addition, in the following sense.
\begin{lem}\label{lem:add_mult}
    For any $X,Y \in \mathbb{R}^{N \times p}$ with $X \stackrel{P}{\simeq} Y$ and $u,v \in \mathbb{R}^p$ with $\norm{u- v} \xrightarrow[N \rightarrow \infty]{P} 0$, we have
    \begin{align*}
        Xu \stackrel{P}{\simeq} Yv.
    \end{align*}
\end{lem}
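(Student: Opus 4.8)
The plan is a one-line triangle-inequality estimate exploiting that $p$ is a fixed constant. Write
\[
Xu - Yv \;=\; X(u-v) \;+\; (X-Y)v .
\]
For any matrix $M\in\R^{N\times p}$ and vector $w\in\R^p$ we have $\norm{Mw}\le \norm{M}_\op\,\norm{w}\le\norm{M}_F\,\norm{w}$. Applying this to both summands and normalizing by $\sqrt N$,
\[
\frac{1}{\sqrt N}\norm{Xu-Yv}\;\le\;\norm{u-v}\cdot\frac{\norm{X}_F}{\sqrt N}\;+\;\norm{v}\cdot\frac{\norm{X-Y}_F}{\sqrt N}.
\]

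I would then dispatch the two terms. The factor $\frac{1}{\sqrt N}\norm{X-Y}_F\to 0$ in probability is exactly the hypothesis $X\pequiv Y$, and $\norm{v}\le\norm{u}+\norm{u-v}$ is bounded in probability (as $\norm{u}$ is bounded and $\norm{u-v}\to 0$ in probability), so the second term is $o_P(1)$. For the first term, $\frac{1}{\sqrt N}\norm{X}_F\le\frac{1}{\sqrt N}\norm{Y}_F+\frac{1}{\sqrt N}\norm{X-Y}_F$ is bounded in probability — using $X\pequiv Y$ once more together with the standing fact that the objects $Y$ we substitute have $\frac{1}{\sqrt N}\norm{Y}_F=O_P(1)$ — while $\norm{u-v}\to0$ in probability, so this term is $o_P(1)$ too. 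Summing the two bounds yields $\frac{1}{\sqrt N}\norm{Xu-Yv}\to 0$ in probability, which is exactly $Xu\pequiv Yv$.

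There is no genuine obstacle here: the lemma is a bookkeeping device for the $\pequiv$-calculus, and the proof consists of the two estimates above. The only point deserving a word of care is the uniform (in-probability) boundedness of $\norm{u}$, $\norm{v}$, and $\frac{1}{\sqrt N}\norm{X}_F$; this is automatic in every application in the paper, since the low-dimensional sequences $u$ converge to finite limits (so $\norm{v}$ inherits boundedness from $\norm{u-v}\to0$) and the high-dimensional $X$ always lies at $\pequiv$-distance from an object with bounded normalized Frobenius norm. One could alternatively fold these mild conditions into the hypotheses; I would simply remark on them rather than clutter the statement.
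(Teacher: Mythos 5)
Your proof is correct in substance, and it is the right way to argue; but it is worth noting that it is \emph{not} literally the paper's argument, and the comparison is instructive. The paper's proof is a one-liner: it claims the inequality $\norm{Xu-Yv}_2 \leq \norm{X-Y}_F\,\norm{u-v}_2$ and invokes Slutsky. As written that inequality is false (take $X=Y$ and $u\neq v$: the right side vanishes while the left side is $\norm{X(u-v)}$), so the paper's displayed bound cannot be the whole story; your two-term decomposition $Xu-Yv = X(u-v)+(X-Y)v$ is the correct repair, and it exposes exactly what the terse proof sweeps under the rug. The price, which you correctly identify, is that the conclusion then requires $\norm{v}=O_P(1)$ and $\tfrac{1}{\sqrt N}\norm{X}_F=O_P(1)$ (equivalently $\tfrac{1}{\sqrt N}\norm{Y}_F=O_P(1)$, by $X\pequiv Y$), neither of which appears among the lemma's stated hypotheses; indeed without some such boundedness the statement as printed is not true (e.g.\ $X=Y$ with $\tfrac{1}{\sqrt N}\norm{X}_F$ growing faster than $\norm{u-v}^{-1}$ shrinks). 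In the paper's applications these bounds are always available — the low-dimensional vectors $\hat\psi,\hat\theta,\hat b$ converge to finite deterministic limits, and the high-dimensional matrices are spins (entries $\pm1$) or fields whose normalized moments are controlled in Theorem~\ref{thm:finite_del_ind} — so your remark that the missing hypotheses are satisfied wherever the lemma is used is accurate, and flagging them (or folding them into the statement) is a genuine improvement over the paper's own proof rather than a deviation from it.
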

\begin{proof}
The result is a direct consequence of Slutsky's theorem and the following inequality
  \begin{align*}
       \norm{Xu-Yv}_2 \leq  \norm{X-Y}_F\norm{u-v}_2.
  \end{align*}  
\end{proof}

We now list several additional useful lemmas which will be used throughout.
\begin{lem}[Existence of tempered distributional derivatives]\label{lem:distributional-derivative}
    Let $f:\R\to \R$ a Lebesgue measurable function. Suppose $f$ is of at most polynomial growth: there exists $k\in \N$  and $C>0$ such that $|f(x)|\leq C(1+|x|)^k$ for a.e. $x \in \R.$ Let $\calS$ be the Schwartz space of all functions $\phi \in C^\infty(\R)$ such that, for all $k,m\in \N,$ the function $x^m\frac{d^k}{dx^k}\phi(x)$ is bounded. Then the functional 
    \[
    T(\phi) = -\int f(x)\phi'(x)dx.
    \]
    is continuous as a map $T:\calS \to \R$ (in the suitable topology). Below, we abuse notation and formally write
    \[
    \int f'(x)\phi(x)dx := T(\phi).
    \]
\end{lem}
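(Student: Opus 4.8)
The plan is to establish continuity of the functional $T:\calS \to \R$ directly from the polynomial growth bound on $f$, using the standard family of seminorms that defines the Schwartz topology. Recall that a sequence $\phi_n \to \phi$ in $\calS$ means that for every pair of nonnegative integers $k, m$ we have $\sup_{x\in\R} |x^m \frac{d^k}{dx^k}(\phi_n - \phi)(x)| \to 0$, and since $\calS$ is metrizable it suffices to check sequential continuity, and by linearity it suffices to check continuity at $\phi = 0$.

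First I would reduce to a clean pointwise bound on $|f(x)\phi'(x)|$. Given $\phi \in \calS$ and the growth exponent $k$ from the hypothesis, write
\begin{align*}
    |f(x)\phi'(x)| \leq C(1+|x|)^k |\phi'(x)| \leq C \, 2^k (1 + |x|^k)|\phi'(x)|.
\end{align*}
Then bound the right-hand side by introducing two more powers of $|x|$ to gain integrability: for $|x| \geq 1$,
\begin{align*}
    (1+|x|^k)|\phi'(x)| \leq \frac{1}{x^2}\Big( |x^2 \phi'(x)| + |x^{k+2}\phi'(x)| \Big) \leq \frac{1}{x^2}\Big( p_{2,1}(\phi) + p_{k+2,1}(\phi)\Big),
\end{align*}
where $p_{m,j}(\phi) := \sup_x |x^m \frac{d^j}{dx^j}\phi(x)|$ is the standard Schwartz seminorm, while for $|x| \leq 1$ one simply bounds $(1+|x|^k)|\phi'(x)| \leq 2 \, p_{0,1}(\phi)$. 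Integrating over $\R$ (splitting into $|x|\leq 1$ and $|x|\geq 1$, where $\int_{|x|\geq 1} x^{-2}\,dx = 2$) yields
\begin{align*}
    |T(\phi)| = \Big| \int f(x)\phi'(x)\,dx \Big| \leq C \, 2^k \Big( 4\, p_{0,1}(\phi) + 4\, p_{2,1}(\phi) + 4\, p_{k+2,1}(\phi) \Big),
\end{align*}
which is a finite linear combination of Schwartz seminorms of $\phi$. This simultaneously shows the integral defining $T(\phi)$ is absolutely convergent (so $T$ is well-defined on all of $\calS$) and that $T$ is continuous: if $\phi_n \to 0$ in $\calS$ then each seminorm $p_{0,1}(\phi_n), p_{2,1}(\phi_n), p_{k+2,1}(\phi_n) \to 0$, hence $T(\phi_n) \to 0$.

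I do not expect any serious obstacle here; this is the textbook argument that polynomially-bounded locally integrable functions define tempered distributions, and the only mild care needed is (i) confirming $f$ is locally integrable, which is immediate from measurability plus the local boundedness implied by polynomial growth, and (ii) being slightly careful that the "for a.e. $x$" qualifier in the hypothesis does not cause trouble — it does not, since the estimates are under an integral sign. The final sentence of the lemma, defining $\int f'(x)\phi(x)\,dx := T(\phi)$, is purely a notational convention (the integration-by-parts heuristic that motivates it) and requires no proof. If one wanted, one could also remark that $T$ is by construction the distributional derivative of the tempered distribution associated to $f$, justifying the notation, but this is not needed for the stated claim.
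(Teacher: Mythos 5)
Your argument is correct: the seminorm estimate $|T(\phi)|\leq C\,2^k\bigl(4p_{0,1}(\phi)+4p_{2,1}(\phi)+4p_{k+2,1}(\phi)\bigr)$ is exactly the standard proof that a measurable, polynomially bounded function defines a tempered distribution whose distributional derivative is continuous on $\calS$, and your side remarks about local integrability and the a.e.\ qualifier are the only points needing care. The paper does not spell out a proof at all—it defers to a textbook reference (Folland, Section 9)—and your write-up is precisely the argument that reference contains, so there is nothing further to reconcile.
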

We refer the reader to standard textbooks for a proof as well as general discussion of tempered distributions, e.g. \cite[Section 9]{folland1999real}. We now state and prove a version of Stein's Lemma for distributional derivatives suitable for our setting.
\begin{lem}[Stein's Lemma]\label{lem:Stein}
    Let $f:\R^m\to\R^n$ be a Lebesgue measurable function of at most polynomial growth in the sense above. We interpret the output of $f$ as a row vector, and let $Z=(Z_1,\dots,Z_m)$ be a centered Gaussian row vector with non-singular covariance $K \in \R^{m\times m},$ so that $Z^\top f(Z) \in \R^{m\times n}.$ Then
    \begin{align*}
        \E[Z^\top f(Z)] &= K \E[\partial_Z f(Z)],
    \end{align*}
    where $\partial_Z f(Z)$ denotes the (distributional) Jacobian $(\partial_Z f(Z))_{ij} = \partial_{Z_i} f_j(Z)$.
\end{lem}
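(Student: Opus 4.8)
The plan is to reduce the statement to a one–dimensional, scalar–output Gaussian integration by parts, with the distributional derivative interpreted exactly as in \Cref{lem:distributional-derivative}. First I would reduce to the case $n=1$: both sides of the asserted identity are built entrywise from the scalar components $f_1,\dots,f_n$ of $f$, so it suffices to show, for every $i\in[m]$ and every measurable scalar $f:\R^m\to\R$ of polynomial growth, that
\begin{align*}
\E[Z_i f(Z)] \;=\; \sum_{j=1}^m K_{ij}\,\E[\partial_j f(Z)],
\end{align*}
where, following the convention of \Cref{lem:distributional-derivative}, the right–hand side is defined by $\E[\partial_j f(Z)] := \int_{\R^m}(\partial_j f)(z)\,\gamma_K(z)\,dz := -\int_{\R^m} f(z)\,\partial_j\gamma_K(z)\,dz$, with $\gamma_K$ the density of $Z$. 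To make this well-posed I would first record the (routine) $\R^m$ analogue of \Cref{lem:distributional-derivative}: for $f$ of polynomial growth, $\phi\mapsto -\int f\,\partial_j\phi$ is a well-defined continuous linear functional on $\calS(\R^m)$, proved by the identical density/integration-by-parts argument (or slicewise via Fubini and the one-dimensional lemma); and since $K$ is nonsingular, $\gamma_K\in\calS(\R^m)$, so it is an admissible test function. Both integrals above converge absolutely because $|f(z)|\le C(1+\|z\|)^k$ while $\gamma_K$ and all of its partial derivatives decay faster than any polynomial.

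The core computation is the integration-by-parts identity at the level of the density. From $\gamma_K(z) = (2\pi)^{-m/2}(\det K)^{-1/2}\exp(-\tfrac12 z^\top K^{-1}z)$ one gets $\partial_j\gamma_K(z) = -(K^{-1}z)_j\,\gamma_K(z)$, and therefore
\begin{align*}
\sum_{j=1}^m K_{ij}\,\partial_j\gamma_K(z) \;=\; -\bigl(KK^{-1}z\bigr)_i\,\gamma_K(z) \;=\; -z_i\,\gamma_K(z).
\end{align*}
Substituting this and interchanging the finite sum with the integral (legitimate by absolute convergence),
\begin{align*}
\sum_{j=1}^m K_{ij}\,\E[\partial_j f(Z)] \;=\; -\sum_{j=1}^m K_{ij}\int f(z)\,\partial_j\gamma_K(z)\,dz \;=\; -\int f(z)\Bigl(\sum_{j=1}^m K_{ij}\,\partial_j\gamma_K(z)\Bigr)dz \;=\; \int f(z)\,z_i\,\gamma_K(z)\,dz \;=\; \E[Z_i f(Z)].
\end{align*}
Re-assembling over $i$ and over the components of $f$ yields the matrix identity $\E[Z^\top f(Z)] = K\,\E[\partial_Z f(Z)]$.

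I do not anticipate a genuine obstacle; the only point needing care is the bookkeeping that $\E[\partial_j f(Z)]$ is meaningful, i.e.\ that $\gamma_K$ is a legitimate test function and that the distributional pairing is precisely $-\int f\,\partial_j\gamma_K$ — this is exactly the content of (the $\R^m$ version of) \Cref{lem:distributional-derivative}. If one prefers to avoid invoking a multidimensional extension, an equivalent route is to mollify, $f_\eps = f*\phi_\eps$, apply classical Stein to the smooth, polynomially-bounded $f_\eps$, and let $\eps\to0$: the left side converges by dominated convergence, since $|Z_i|\,(1+\|Z\|)^k\in L^1$ dominates and $f_\eps(Z)\to f(Z)$ almost surely (as $Z$ has a density), while the right side, rewritten as $-\int f_\eps\,\partial_j\gamma_K$, converges to $-\int f\,\partial_j\gamma_K$ by the same domination; this reproduces the identity without ever differentiating $f$ pointwise.
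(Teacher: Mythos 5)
Your proposal is correct and follows essentially the same route as the paper: both rest on the identity $\partial_x \gamma_K(x) = -\,\gamma_K(x)\,K^{-1}x^\top$ together with integration by parts interpreted via the distributional pairing of $f$ against the Schwartz-class Gaussian density. Your write-up merely adds the entrywise bookkeeping, the explicit remark that \Cref{lem:distributional-derivative} needs its (routine) $\R^m$ analogue, and an optional mollification detour — none of which changes the substance of the argument.
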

\begin{proof}
    By definition of the distributional derivative, we have
    \begin{align*}
        \E Z^\top f(Z) &= \int \frac{1}{(2\pi)^{m/2}|K|^{1/2}} \exp\left(-\frac{1}{2}x K^{-1}x^\top\right) x^\top f(x) dx \\
        &= K \int \frac{1}{(2\pi)^{m/2}|K|^{1/2}} \exp\left(-\frac{1}{2}x K^{-1}x^\top\right) K^{-1}x^\top  f(x) dx \\
        &= K \int \frac{1}{(2\pi)^{m/2}|K|^{1/2}} \exp\left(-\frac{1}{2}x K^{-1}x^\top\right) \partial_x f(x) dx \\
        &= K \E[\partial_Z f(Z)],
    \end{align*}
   where in the third step we use the fact that $\partial_x \exp(-\frac{1}{2}x K^{-1}x^\top) = \exp(-\frac{1}{2}x K^{-1}x^\top)K^{-1}x^\top$ and then integration by parts, which works for distributional derivatives by definition.
\end{proof}
\begin{lem}[Vanishing projections on GOE matrices] 
\label{lem:proj_gauss}
Suppose that $W \sim \text{GOE}(N)$ and let $p,q \in \mathbb{N}$ be independent of $N$. Let $U,V \in \mathbb{R}^{N \times p}, \mathbb{R}^{N \times q}$ denote any sequence of matrices independent of $W$ such that $\E \norm{U}_F, \E\norm{V}_F =o(N^{3/4})$ as $N \rightarrow \infty$. Then
\begin{align*}
    \frac{1}{N}\norm{U^\top W V}_F \xrightarrow[N \rightarrow \infty]{P} 0.
\end{align*}
\end{lem}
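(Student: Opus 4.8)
## Proof Plan for Lemma (Vanishing projections on GOE matrices)

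The plan is to prove the bound via a second-moment (variance) argument, conditioning on $U$ and $V$. First I would condition on the sigma-algebra generated by $U$ and $V$; since $W$ is independent of these, we may treat $U,V$ as fixed deterministic matrices and then take expectations. The key quantity to control is $\E \|U^\top W V\|_F^2 = \sum_{i,j} \E (U^\top W V)_{ij}^2$. Writing $(U^\top W V)_{ij} = \sum_{k,\ell} U_{ki} W_{k\ell} V_{\ell j}$, and using that the entries of a GOE matrix are centered with $\E W_{k\ell} W_{k'\ell'} = (\delta_{kk'}\delta_{\ell\ell'} + \delta_{k\ell'}\delta_{\ell k'})/N$, a direct computation gives
\begin{align*}
    \E (U^\top W V)_{ij}^2 &= \frac{1}{N}\sum_{k,\ell} \left( U_{ki}^2 V_{\ell j}^2 + U_{ki} V_{kj} U_{\ell i} V_{\ell j}\right) = \frac{1}{N}\left( \|U_{\cdot i}\|^2 \|V_{\cdot j}\|^2 + \brac{U_{\cdot i}, V_{\cdot j}}^2\right),
\end{align*}
where $U_{\cdot i}$ denotes the $i$-th column of $U$. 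Summing over $i \in [p], j\in[q]$ and applying Cauchy–Schwarz to the second term, this is at most $\frac{2}{N}\|U\|_F^2 \|V\|_F^2$.

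Next I would remove the conditioning. Taking expectations and using Cauchy–Schwarz in the form $\E[\|U\|_F^2 \|V\|_F^2] \le (\E\|U\|_F^4)^{1/2}(\E\|V\|_F^4)^{1/2}$ would require fourth-moment control, which is not directly assumed; so instead I would argue more carefully. A clean route: fix $\eps > 0$ and let $\calE_N$ be the event $\{\|U\|_F \le N^{3/4}/\omega_N\} \cap \{\|V\|_F \le N^{3/4}/\omega_N\}$ for a slowly growing sequence $\omega_N \to \infty$ chosen so that, by Markov's inequality and the hypothesis $\E\|U\|_F, \E\|V\|_F = o(N^{3/4})$, we have $\Pr[\calE_N^c] \to 0$. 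On $\calE_N$, the conditional bound gives $\E[\|U^\top W V\|_F^2 \mid U, V] \le \frac{2}{N}\cdot \frac{N^{3/2}}{\omega_N^4} = \frac{2 N^{1/2}}{\omega_N^4}$, hence by conditional Markov, $\Pr[\frac{1}{N}\|U^\top W V\|_F > \eps \mid U,V] \le \frac{2}{\eps^2 N^{3/2}\omega_N^4} \cdot \mathbbm{1}_{\calE_N} + \mathbbm{1}_{\calE_N^c}$... more simply, on $\calE_N$ we get $\frac{1}{N^2}\E[\|U^\top W V\|_F^2\mid U,V] \le \frac{2}{N^{3/2}\omega_N^4} \to 0$. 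Combining $\Pr[\frac{1}{N}\|U^\top W V\|_F > \eps] \le \Pr[\calE_N^c] + \E[\mathbbm{1}_{\calE_N} \Pr[\cdot > \eps \mid U,V]]$ and sending $N\to\infty$ then $\eps \to 0$ yields the claim. Alternatively, one can sidestep the truncation entirely by noting that convergence in probability to $0$ follows from $\frac{1}{N}\|U^\top W V\|_F \to 0$ in $L^1$ whenever we additionally know $\E\|U\|_F^2\E\|V\|_F^2 = o(N^2)$; since the excerpt only gives the weaker first-moment hypothesis, the truncation argument is the safe choice.

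The main obstacle is the gap between the hypothesis (control of $\E\|U\|_F$, a first moment) and what the variance computation naturally wants (a bound on $\E[\|U\|_F^2\|V\|_F^2]$). The truncation/good-event device described above is exactly what bridges this: the exponent $3/4$ in the hypothesis is chosen precisely so that $\frac{1}{N}\cdot\frac{1}{N}\cdot (N^{3/4})^2 = N^{-1/2} \to 0$, leaving room to absorb the $\omega_N$ factors. Everything else is a routine second-moment computation using the GOE covariance structure, so I expect no further difficulty.
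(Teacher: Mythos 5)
Your proposal is correct and follows essentially the same route as the paper, whose entire proof is the conditional second-moment computation (the bound $\E\left[\norm{U^\top W V}_F^2 \mid U,V\right] \leq \frac{2}{N}\norm{U}_F^2\norm{V}_F^2$, stated there as a one-line identity) followed by an implicit Markov step; your truncation device correctly bridges the gap, which the paper leaves silent, between the first-moment hypothesis on $\norm{U}_F,\norm{V}_F$ and the second-moment bound. One caveat: your arithmetic on the good event is off — the correct bound is $\frac{1}{N^2}\E\left[\norm{U^\top W V}_F^2 \mid U,V\right] \leq 2/\omega_N^4$, not $2/(N^{3/2}\omega_N^4)$ — and accordingly the exponent $3/4$ is exactly critical (with $\norm{U}_F,\norm{V}_F \asymp N^{3/4}$ the normalized second moment is $O(1)$, not $O(N^{-1/2})$), so the room to absorb $\omega_N$ comes entirely from the little-$o$ hypothesis via the choice $\omega_N \to \infty$ with $\Pr[\calE_N^c]\to 0$, not from a leftover power of $N$; the argument as you structured it still closes.
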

\begin{proof}
    This follows from the fact that 
    \begin{align*}
        \E \left(\frac{1}{N}\norm{U^TWV}_F\right)^2 &= \frac{1}{N^3} \norm{U}_F^2 \norm{V}_F^2.
    \end{align*}
\end{proof}

\begin{lem}[Law of projection on GOE]\label{lem:GOE_proj_law}
Suppose that $W \sim \text{GOE}(N)$. Then, for any deterministic vector $\bfu  \in \mathbb{R}^{N}$, we have
\begin{align*}
    W\bfu \sim \mathcal{N}(\mathbf{0}, (\norm{\bfu}^2 I+\bfu\bfu^\top)/N)
\end{align*}
\end{lem}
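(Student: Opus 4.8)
The plan is to verify this by direct computation, since $W\bfu$ is an explicit linear function of the entries of $W$. First I would observe that each coordinate $(W\bfu)_i = \sum_{k=1}^N W_{ik}u_k$ is a fixed linear combination of the jointly Gaussian variables $\{W_{ik}\}_{k\in[N]}$, so the whole vector $W\bfu$ is jointly Gaussian; and since every entry $W_{ik}$ is centered, $W\bfu$ has mean zero. It therefore suffices to identify the covariance matrix.

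Second, I would compute, for arbitrary $i,j \in [N]$,
\[
\E\bigl[(W\bfu)_i(W\bfu)_j\bigr] = \sum_{k=1}^N\sum_{\ell=1}^N \E[W_{ik}W_{j\ell}]\, u_k u_\ell,
\]
and substitute the GOE covariance $\E[W_{ik}W_{j\ell}] = (\delta_{ij}\delta_{k\ell} + \delta_{i\ell}\delta_{jk})/N$ coming from the definition of $\GOE(N)$ in \eqref{def:ham}. The term with $\delta_{ij}\delta_{k\ell}$ contributes $\delta_{ij}\sum_{k} u_k^2/N = \delta_{ij}\norm{\bfu}^2/N$, and the term with $\delta_{i\ell}\delta_{jk}$ contributes $u_i u_j/N$. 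Hence $\E[(W\bfu)(W\bfu)^\top] = (\norm{\bfu}^2 I + \bfu\bfu^\top)/N$, which is exactly the claimed covariance, and the lemma follows.

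There is essentially no obstacle here: the only mild point of care is that the GOE covariance formula already encodes the normalization in which diagonal entries $W_{ii}$ have variance $2/N$ and off-diagonal entries have variance $1/N$, so the single display above handles the diagonal and off-diagonal indices uniformly, and no case analysis on whether $i=j$ or $k=\ell$ is needed. (Alternatively, this is the $p=1$ instance of the Gaussian conditioning decomposition in \Cref{prop:gen_gauss_cond}, with no conditioning vectors, but the direct argument above is shorter.)
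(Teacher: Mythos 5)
Your proof is correct and follows essentially the same route as the paper: a direct computation of the mean and covariance of $W\bfu$, with the GOE covariance formula handling the diagonal variance $2/N$ that the paper treats via explicit case analysis. The extra remark that $W\bfu$ is jointly Gaussian (being a linear image of jointly Gaussian entries) is a harmless, if anything slightly more complete, addition.
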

\begin{proof}
    The fact that $\E W \bfu$ is immediate. Moreover, we have
    \begin{align*}
        N\cdot \E \left( (W\bfu)(W\bfu)^\top\right)_{ij} &= N\cdot\E \sum_{k,\ell}W_{ik}\bfu_k\bfu_\ell W_{j\ell} \\
        &= \delta_{ij} \left(2\bfu_i^2 + \sum_{k\neq i}\bfu_k^2\right) + (1-\delta_{ij}) \bfu_j\bfu_i \\
        &= (\norm{\bfu}^2I + \bfu\bfu^\top)_{ij},
    \end{align*}
    as desired.
\end{proof}

\begin{lem}[Gaussian conditioning]\label{lem:GOE-ind}
Suppose that $W \sim \text{GOE}(N)$. Let $\bfu^1, \cdots, \bfu^p \in \mathbb{R}^N$ denote a fixed sequence of vectors, and let $V \in \mathbb{R}^{N \times p}$ denote a matrix such that $V^\top\bfu^i = 0$ for all $ i \in [p]$. Then, the random variables $V^\top W V$ and $\{W\bfu^i\}_{i \in [p]}$ are independent.
\end{lem}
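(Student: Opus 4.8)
The plan is to exploit the elementary fact that a jointly Gaussian family is independent across a partition of its coordinates if and only if the cross-covariances vanish. Since $W \sim \mathrm{GOE}(N)$ has jointly Gaussian entries and both $V^\top W V$ and each $W\bfu^i$ are images of $W$ under fixed (deterministic, since the $\bfu^i$ are fixed and $V$ is a fixed matrix) linear maps, the stacked collection of all entries of $V^\top W V$ together with all entries of $W\bfu^1,\dots,W\bfu^p$ forms a single jointly Gaussian vector. Hence it suffices to show that every entry of $V^\top W V$ is uncorrelated with every entry of every $W\bfu^i$.

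To do this I would compute the covariance directly: for $a,b\in[p]$, $i\in[p]$, and $m\in[N]$,
\[
\E\big[(V^\top W V)_{ab}\,(W\bfu^i)_m\big] \;=\; \sum_{k,\ell,\ell'} V_{ka}\,V_{\ell b}\,\bfu^i_{\ell'}\;\E[W_{k\ell}W_{m\ell'}],
\]
then substitute the GOE second-moment structure $\E[W_{k\ell}W_{m\ell'}] = (\delta_{km}\delta_{\ell\ell'} + \delta_{k\ell'}\delta_{\ell m})/N$ and collapse the Kronecker deltas, which yields
\[
\frac{1}{N}\Big( V_{ma}\,(V^\top\bfu^i)_b \;+\; (V^\top\bfu^i)_a\,V_{mb}\Big).
\]
This is identically zero because $V^\top\bfu^i = 0$ by hypothesis. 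Consequently the covariance matrix of the stacked Gaussian vector is block diagonal with respect to the partition $\{V^\top W V\} \,\sqcup\, \{W\bfu^i : i\in[p]\}$, so the two groups are independent, which is the claim.

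I do not expect a genuine obstacle here: the only points that warrant a little care are (i) invoking joint Gaussianity correctly — everything in sight is a deterministic linear function of the one Gaussian object $W$ — and (ii) using the correct GOE covariance, including the factor $2$ on the diagonal, which is already baked into the formula $\E[W_{k\ell}W_{m\ell'}] = (\delta_{km}\delta_{\ell\ell'}+\delta_{k\ell'}\delta_{\ell m})/N$ employed above (this is the same convention as for $J$ in \eqref{def:ham}). Everything else is a routine index computation.
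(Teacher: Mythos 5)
Your proposal is correct and follows essentially the same route as the paper: compute the cross-covariance entrywise using the GOE second-moment formula, observe that every term carries a factor of $V^\top \bfu^i = 0$, and conclude independence from joint Gaussianity (a step the paper leaves implicit but you make explicit). Nothing further is needed.
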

\begin{proof}
    We have
    \begin{align*}
        \E (V^\top W V)_{k\ell}(W\bfu^i)_j &= \E \sum_{srt}V_{sk}W_{sr}V_{r\ell}W_{jt}\bfu_t^i \\
        &= \frac{1}{N}\sum_r V_{jk}V_{r\ell}\bfu^i_r + \frac{1}{N}\sum_s V_{sk}V_{j\ell}\bfu_s^i \\
        &= \frac{V_{jk}}{N} (V^\top \bfu^i)_\ell + \frac{V_{j\ell}}{N} (V^\top \bfu^i)_k \\
        &= 0,
    \end{align*}
    as desired.
\end{proof}

\begin{lem}\label{lem:p-conv-empirical-conv}
    Let $X,Y \in \mathbb{R}^{N \times k}$ be two (sequences of) random variables with $X \stackrel{P}{\simeq} Y$ and such that for every $k>0$ there exists a constant $B_k>0$ such that
    \[
    \lim_{N\to \infty}\Pr\left(\frac{1}{N}\sum_{i=1}^N\norm{X_i}^k \geq B_k\right) = 0
    \] and
    \[
    \lim_{N\to \infty}\Pr\left(\frac{1}{N}\sum_{i=1}^N\norm{Y_i}^k \geq B_k\right) = 0.
    \]
    Let $\widehat{\nu}_X, \widehat{\nu}_Y$ be the measures in $\R^k$ given by  $\widehat{\nu}_X= \frac{1}{N}\sum_{i=1}^N \delta_{X_i}$ and $\widehat{\nu}_Y= \frac{1}{N}\sum_{i=1}^N \delta_{Y_i}$. Then, for every pseudo-Lipschitz function $\varphi,$ we have 
    \[
    \left|\int \varphi d\widehat{\nu}_X - \int \varphi d\widehat{\nu}_Y \right| \xrightarrow[N \rightarrow \infty]{P}  0.
    \]
\end{lem}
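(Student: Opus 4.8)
The plan is to reduce the whole statement to one application of the Cauchy--Schwarz inequality, combined with the definition of $\pequiv$ and the moment hypotheses. Write $X_i,Y_i\in\R^k$ for the $i$-th rows of $X,Y$, so that
\[
\int \varphi\, d\widehat\nu_X - \int \varphi\, d\widehat\nu_Y = \frac{1}{N}\sum_{i=1}^N \bigl(\varphi(X_i) - \varphi(Y_i)\bigr),
\]
and it suffices to show this tends to $0$ in probability. Suppose $\varphi$ is pseudo-Lipschitz of order $r$ with constant $L$ (in the sense of \Cref{def:pseudo-lipshitz-fucntion}). Then by the triangle inequality and the pseudo-Lipschitz bound applied termwise,
\[
\left|\frac{1}{N}\sum_{i=1}^N \bigl(\varphi(X_i) - \varphi(Y_i)\bigr)\right| \le \frac{L}{N}\sum_{i=1}^N \bigl(1 + \norm{X_i}^r + \norm{Y_i}^r\bigr)\,\norm{X_i - Y_i}.
\]

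First I would apply Cauchy--Schwarz to the sum on the right to get the upper bound
\[
L\left(\frac{1}{N}\sum_{i=1}^N \bigl(1 + \norm{X_i}^r + \norm{Y_i}^r\bigr)^2\right)^{1/2}\left(\frac{1}{N}\sum_{i=1}^N \norm{X_i - Y_i}^2\right)^{1/2}.
\]
The second factor equals $\bigl(\tfrac1N \norm{X-Y}_F^2\bigr)^{1/2}$, which tends to $0$ in probability by the assumption $X \pequiv Y$. For the first factor, using $(1+a+b)^2 \le 3(1+a^2+b^2)$ it is at most $\sqrt{3}\,\bigl(1 + \tfrac1N\sum_i \norm{X_i}^{2r} + \tfrac1N\sum_i\norm{Y_i}^{2r}\bigr)^{1/2}$, and by the moment hypotheses applied with exponent $2r$ (the constant $B_{2r}$) this is at most $\sqrt{3(1 + 2B_{2r})}$ with probability tending to $1$. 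Hence the first factor is bounded in probability while the second is $o_P(1)$, so their product is $o_P(1)$ by Slutsky's theorem, which completes the proof.

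There is no real obstacle here; the only points requiring mild care are (i) that the order $r$ of the test function $\varphi$ is a priori arbitrary, so one needs moment control at \emph{every} exponent, which is precisely what the hypothesis supplies via $B_{2r}$; and (ii) that the ``weight'' factor and the ``increment'' factor are controlled only in probability, not deterministically, so the conclusion must be assembled through Slutsky rather than a pointwise bound. Both are routine.
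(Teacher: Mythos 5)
Your proposal is correct and follows essentially the same route as the paper's proof: the same pointwise pseudo-Lipschitz bound, the same Cauchy--Schwarz split, the same $(1+a+b)^2\le 3(1+a^2+b^2)$ step, and the same use of the $B_{2k}$ moment hypothesis. The only cosmetic difference is that the paper finishes with an explicit union bound over the three bad events rather than invoking Slutsky, which is an equivalent way of assembling the convergence in probability.
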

\begin{proof}
    We have 
    \begin{align*}
        \left|\int \varphi d\widehat{\nu}_X - \int \varphi d\widehat{\nu}_Y \right| &\leq \frac{1}{N} \sum_{i=1}^N| \varphi(X_i) - \varphi(Y_i)| \\
        &\leq \frac{1}{N} \sum_{i=1}^N L(1+ \norm{X_i}^k + \norm{Y_i}^k)\norm{X_i-Y_i}\\
        &\leq \sqrt{\frac{L^2}{N} \sum_{i=1}^N (1+ \norm{X_i}^k +\norm{Y_i}^k)^2} \sqrt{\frac{1}{N} \sum_{i=1}^N\norm{X_i-Y_i}^2} \\
        &\leq \sqrt{\frac{3L^2}{N} \sum_{i=1}^N 1+ \norm{X_i}^{2k} +\norm{Y_i}^{2k}}\sqrt{\frac{1}{N} \sum_{i=1}^N\norm{X_i-Y_i}^2}.
    \end{align*}
    Hence, for every $\eps>0,$ we have
    \begin{align*}
        \Pr\left( \left|\int \varphi d\widehat{\nu}_X - \int \varphi d\widehat{\nu}_Y \right|\geq \eps\right) &\leq \Pr\left(\frac{1}{N}\sum_{i=1}^N\norm{X_i}^{2k} \geq B_{2k}\right) + \Pr\left(\frac{1}{N}\sum_{i=1}^N\norm{Y_i}^{2k} \geq B_{2k}\right) \\
        &\qquad + \Pr\left(\frac{1}{N} \sum_{i=1}^N\norm{X_i-Y_i}^2 \geq \frac{ \eps^2}{3L^2(1 + 2B_{2k})}\right).
    \end{align*}    
    Each of the three terms vanish by assumption, concluding the proof.
\end{proof}
Finally, we prove a technical result that will be useful when we construct couplings between solutions to our equations and the dynamics of Algorithm~\ref{alg:t-passes}. It essentially says that, to construct such a coupling, it suffices to couple the fields, since the spins become automatically coupled as a consequence.
\begin{lem}\label{lem:spins}
    Let $c:[0,1] \times \R \to \{\pm 1\}$ be an update rule such that the boundary $D = \partial (c^{-1}(+1))$ has Lebesgue measure zero. Let $(\widetilde{\bfh}, \widetilde{\bfsigma}) \in (\R^{T} \times \{\pm 1\}^{T+1})^N$ have rows sampled i.i.d. from \Cref{eq:h-joint-law-alone} with some $v^{(T)}$ and a non-singular $K^{(T)}.$ Recall that $(\bfh, \bfsigma) \in (\R^{T} \times \{\pm 1\}^{T+1})^N$ is the process outputted by Algorithm~\ref{alg:t-passes}. Assume the following:
    \begin{itemize}
        \item The randomness of $c$ has been coupled between $(\widetilde{\bfh}, \widetilde{\bfsigma})$ and $(\bfh, \bfsigma)$ as follows. There exist i.i.d. $\Unif[0,1]$ random variables $\{U_i^t: i\in [N], t\in [T]\}$ such that for all $i\in [N]$ and $t \in [T],$ we have $\widetilde{\bfsigma}_i^t = c(U_i^t,\widetilde{\bfh}_i^t)$ and $\bfsigma_i^t = c(U_i^t,\bfh_i^t)$, where $U_i^t$ is independent of $\widetilde{\bfh}_i^t$ and $\bfh_i^t$.
        \item There exists a block $j \in [1/\delta]$ such that $\bfsigma^{0,j} = \tilde{\bfsigma}^{0,j}$ and $\frac{1}{N}\norm{\bfh^{t,j} - \widetilde{\bfh}^{t,j}}_2^2\to 0$ as $N\to\infty$ in probability for all $t\in [T].$
    \end{itemize}
    Then, we have
    \[
        \frac{1}{N}\#\{i\in B(j) : \exists s\leq T \text{ such that }\bfsigma^s_i \neq \tilde{\bfsigma}^s_i \} \to 0
    \]
    as $N\to\infty$ in probability.
\end{lem}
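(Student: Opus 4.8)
The plan is to reduce the statement to a single pass and then apply a union bound over $t$. Fix $t\in\{1,\dots,T\}$; I will argue that
\[
\tfrac1N\#\{i\in B(j):\bfsigma_i^t\neq\widetilde{\bfsigma}_i^t\}\xrightarrow[N\to\infty]{P}0 .
\]
Granting this for each $t$, the lemma follows: since $\bfsigma_i^0=\widetilde{\bfsigma}_i^0$ for all $i\in B(j)$ by hypothesis, the set of $i\in B(j)$ whose spin histories ever disagree is contained in $\bigcup_{t=1}^{T}\{i\in B(j):\bfsigma_i^t\neq\widetilde{\bfsigma}_i^t\}$, so a union bound over the $T$ passes suffices. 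The key first step is a purely geometric observation that converts $\ell^2$-closeness of the fields into proximity to the discontinuity locus of $c$. Because $\bfsigma_i^t=c(U_i^t,\bfh_i^t)$ and $\widetilde{\bfsigma}_i^t=c(U_i^t,\widetilde{\bfh}_i^t)$ use the \emph{same} seed $U_i^t$, the event $\bfsigma_i^t\neq\widetilde{\bfsigma}_i^t$ forces one of the points $(U_i^t,\bfh_i^t),(U_i^t,\widetilde{\bfh}_i^t)$ to lie in $c^{-1}(+1)$ and the other in its complement; the horizontal segment joining them is connected and meets both sets, hence it meets $D=\partial(c^{-1}(+1))$. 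Writing $D_\eta:=\{z\in[0,1]\times\R:\mathrm{dist}(z,D)\le\eta\}$ and $\eta_i:=|\bfh_i^t-\widetilde{\bfh}_i^t|$, this yields the implication $\bfsigma_i^t\neq\widetilde{\bfsigma}_i^t\implies(U_i^t,\widetilde{\bfh}_i^t)\in D_{\eta_i}$.

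With this in hand, for every $\eps>0$ I would split
\[
\tfrac1N\#\{i\in B(j):\bfsigma_i^t\neq\widetilde{\bfsigma}_i^t\}\;\le\;\tfrac1{\eps^2}\cdot\tfrac1N\sum_{i\in B(j)}\eta_i^2\;+\;\tfrac1N\#\{i\in B(j):(U_i^t,\widetilde{\bfh}_i^t)\in D_\eps\},
\]
using that $\eta_i\le\eps$ together with $(U_i^t,\widetilde{\bfh}_i^t)\in D_{\eta_i}$ implies membership in $D_\eps$. The first term equals $\eps^{-2}\,\tfrac1N\norm{\bfh^{t,j}-\widetilde{\bfh}^{t,j}}_2^2$, which for each fixed $\eps$ tends to $0$ in probability by hypothesis. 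For the second term, the pairs $(U_i^t,\widetilde{\bfh}_i^t)$ are i.i.d.\ across $i$ (the rows of $(\widetilde{\bfh},\widetilde{\bfsigma})$ are i.i.d.\ from \eqref{eq:h-joint-law-alone}, and the seeds $U_i^t$ are i.i.d.\ uniform and independent of them), and $D_\eps$ is closed, hence Borel; since $|B(j)|=\delta N$, the strong law of large numbers gives that this term converges almost surely to $\delta\,\Pr[(U_1^t,\widetilde{\bfh}_1^t)\in D_\eps]$.

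It then remains to send $\eps\to0$. Since $K^{(T)}$ is non-singular, $\Var(G^t\mid G^1,\dots,G^{t-1})>0$, so $\widetilde{\bfh}_1^t=G^t+\brac{v^t,\sigma^{(t-1)}}$ is a mixture of non-degenerate Gaussians and therefore absolutely continuous with respect to Lebesgue measure; as it is moreover independent of $U_1^t$, the pair $(U_1^t,\widetilde{\bfh}_1^t)$ has a law absolutely continuous with respect to Lebesgue measure on $[0,1]\times\R$. Because $D$ is closed, $\bigcap_{\eps>0}D_\eps=D$, and $\mathrm{Leb}(D)=0$ by assumption, so dominated convergence gives $\Pr[(U_1^t,\widetilde{\bfh}_1^t)\in D_\eps]\downarrow0$ as $\eps\downarrow0$. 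Combining the three steps: given $\tau>0$, choose $\eps$ with $\delta\,\Pr[(U_1^t,\widetilde{\bfh}_1^t)\in D_\eps]<\tau/2$, and then for $N$ large both the bulk term and the near-boundary term exceed $\tau/2$ with vanishing probability; hence $\Pr(\tfrac1N\#\{i\in B(j):\bfsigma_i^t\neq\widetilde{\bfsigma}_i^t\}>\tau)\to0$. This establishes the single-pass claim, and the union bound over $t\in\{1,\dots,T\}$ finishes the proof.

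I expect the main obstacle to be conceptual rather than computational: $c$ is discontinuous, so the $\ell^2$-closeness of the fields—a bulk statement—does not by itself control any single spin, and one must show that only a vanishing fraction of coordinates can land near the discontinuity set of $c$. Making this precise is exactly where the two structural hypotheses enter in tandem: the measure-zero assumption on $\partial(c^{-1}(1))$, and the absolute continuity of the limiting field law furnished by \eqref{eq:h-joint-law-alone} (which itself relies on the non-singularity of $K^{(T)}$). One must also be careful with the order of limits, taking $N\to\infty$ before $\eps\to0$.
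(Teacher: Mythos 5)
Your proposal is correct and follows essentially the same route as the paper's proof: decompose the disagreement set into indices whose fields differ by more than $\eps$ (controlled by the $\ell^2$ hypothesis) and indices whose point $(U_i^t,\widetilde{\bfh}_i^t)$ lies within $\eps$ of $D$ (controlled by absolute continuity of the limiting field law, non-singularity of $K^{(T)}$, and $\mathrm{Leb}(D)=0$), then take $N\to\infty$ before $\eps\to 0$ and union bound over $t$. The only cosmetic differences are that you invoke the SLLN where the paper simply bounds expectations via Markov, and you spell out the connectedness argument showing the horizontal segment must cross $D$, which the paper leaves implicit.
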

\begin{proof}
    By a union bound, it suffices to fix $t \in [T]$ and show that $\frac{1}{N}\norm{\bfsigma^{t,j} - \widetilde{\bfsigma}^{t,j}}_0\to 0$ as $N\to\infty$ in probability.
    The assumption that $K^{(T)}$ is non-singular ensures that the probability measure $\Unif[0,1]\times \mathrm{Law}(h^t)$ is absolutely continuous with respect to Lebesgue measure on $[0,1]\times \R$. For $\eps>0$, let 
    \begin{align*}
        A_\eps &= \{i \in B(j) : d((U_i^t, \tilde{\bfh}_i^t),D)<\eps\}, \\
        B_\eps &= \{i\in B(j): |\bfh^t_i - \tilde{\bfh}^t_i|<\eps\} \\
        \Delta &= \{i\in B(j): \bfsigma^t_i \neq \tilde{\bfsigma}^t_i\}.
    \end{align*}
        We have $\Delta \subseteq  A_\eps \cup B_\eps^C$. Hence, for all $\eps,\eps'>0,$ we have
        \begin{align*}
            \Pr\left(\frac{1}{N}\norm{\bfsigma^{t,j} - \widetilde{\bfsigma}^{t,j}}_0 \geq \eps'\right)&\leq \frac{1}{\eps' N}\left(\E |A_\eps| +\E |B_\eps^C| \right).
        \end{align*}
        We have $\frac{1}{N}\E|B_\eps^C|\to 0$ by assumption, because for all $\alpha>0,$
        \begin{align*}
            \frac{1}{N}\E|B_\eps^C| &\leq \alpha  + \Pr[|B_\eps^C| \geq \alpha N] \\
            &\leq \alpha  + \Pr\left[\frac{1}{N}\norm{\bfh^{t,j} - \widetilde{\bfh}^{t,j}}_2^2 \geq \alpha \eps^2\right] \\
            &\to \alpha.
        \end{align*}
        Moreover, we have
        \begin{align*}
            \frac{1}{N}\E |A_\eps| &= \E_{U_i^t, h^t} \1_{D_\eps}.
        \end{align*}
        By continuity of probability, taking $\eps\to 0$ the right hand side converges to $\E_{U_i^t, h^t} \1_{D},$ which is zero by absolute continuity. This concludes the proof.
\end{proof}

\section{Dynamics at a fixed $\delta>0$}\label{sec:finite-delta-dynamics}
This section is organized as follows.
\begin{enumerate}
    \item[(1)] In \Cref{sec:gcond}, we apply the Gaussian conditioning technique to obtain a description of the law of $J$ conditioning on iterates of Algorithm \ref{alg:t-passes}.
    \item[(2)] In \Cref{sec:decoup}, we utilize part (1) to obtain a description of the law of the field contributions $\bfh^{(t),j \rightarrow k}$ from Algorithm \ref{alg:t-passes}. We show by induction that this law is approximately given by a stochastic process which ``reveals'' new randomness at each step, and this randomness is encoded by a simple standard Gaussian vector in $\R^N$. We say that this process is ``approximately decoupled'' because its entries in $\R^N$ are approximately independent. The only source of dependence comes from the low-dimensional coefficient terms $\hat{C}, \hat{Q}$ which are functions of the empirical measure of spins and fields. Hence, one expects them to concentrate for $N$ large, leading to deterministic coefficients in the $N\to\infty$ limit and hence a truly decoupled process. However, this last statement is only proved in step (6) below.
    \item[(3)] In \Cref{sec:field-contributions}, we describe a low-dimensional stochastic process over fields $\{h^{(t),j} \in \mathbb{R}^{t}:t\geq 1, j\in [1/\delta]\}$ and spins $\{\sigma^{(t),j} \in \{\pm 1\}^t: t\geq 1, j\in [1/\delta]\}$ at any finite $\delta>0$. This stochastic process, which we term the Effective Dynamics, or \emph{Effective Process}, plays the role of equations \eqref{eq:h-joint-law-alone} and \eqref{eq:Sigma}-\eqref{eq:A-upper-diag} at a fixed $\delta>0.$
    \item[(4)] In \Cref{sec:lifted}, we define a (still low-dimensional) ``lifted" Effective Process over field contributions $h^{(t),j \rightarrow k} \in \mathbb{R}^{t+1}$ for $j,k \in [1/\delta]$ which subsumes the distribution over $h^{(t),j},\sigma^{(t),j}$ in part (3). We additionally derive some key simplifying properties satisfied by this lifted Effective Process.
    \item[(5)] In \cref{sec:mapdecoup}, we show that if one takes the ``approximately decoupled'' process from part (3) and pretends that the coefficients $\hat{C},\hat{Q}$ concentrate to quantities $C,Q$ which are given by the low-dimensional lifted Effective Process from part (4), then what one gets is exactly a high-dimensional process which is independent across $i\in [N]$, with each of its entries equal in distribution to the lifted Effective Process.
    \item[(6)] Finally, in \cref{sec:mainthm} and \Cref{sec:main_induct}, we utilize the results in (1)-(5) to inductively  establish concentration of the order parameters $\hat{C},\hat{Q}$ to their corresponding asymptotic values $C,Q,$ and hence convergence of the true process in Algorithm \ref{alg:t-passes} to independent draws from the Effective Process.
\end{enumerate}

\subsection{Dynamics of the conditional law of $J$}\label{sec:gcond}

In this section, we utilize the iterative conditioning technique introduced in  \cite{bolthausen2014iterative} to describe the evolution of $J$'s distribution, conditioned on the iterates in Algorithm \ref{alg:t-passes}. The resulting description splits the projection of the spin blocks $\bfsigma^{t|j}$ onto $J$ as follows:
\begin{equation}
    \bfh^{t,j\to k} \approx \text{Independent Gaussian noise}+\text{deterministic contributions from past iterates}.
\end{equation}

To proceed with the proof, we begin by noting the following elementary decomposition.
\begin{lem}\label{lem:decomp}
For any $V \in \mathbb{R}^{N \times p}$ and matrix $J \in \mathbb{R}^{N \times N}$, we have
\begin{align*}
    J = P^\perp_V J P^\perp_V + P_V J P^\perp_V + JP_V.
\end{align*}
\end{lem}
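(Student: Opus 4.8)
The plan is to reduce the identity to the single fact that $P_V + P_V^\perp = I_N$, where $P_V$ denotes the orthogonal projection onto $\operatorname{span}(V)$ (given explicitly by \eqref{eq:projection-pseudoinverse}) and $P_V^\perp := I_N - P_V$ is the complementary projection. First I would observe that no structural assumption on $J$ is needed: the claim is a purely algebraic rearrangement valid for an arbitrary matrix $J \in \mathbb{R}^{N\times N}$ (in particular $J$ need not be symmetric or GOE).

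Concretely, I would group the first two summands on the right-hand side and factor $J P_V^\perp$ out on the right:
\[
P_V^\perp J P_V^\perp + P_V J P_V^\perp = (P_V^\perp + P_V)\, J P_V^\perp = J P_V^\perp .
\]
Adding the last summand $J P_V$ and factoring once more gives $J P_V^\perp + J P_V = J(P_V^\perp + P_V) = J$, which is the assertion. The only point deserving a line of justification is that $I_N - P_V$ really is the orthogonal projection onto the orthogonal complement of $\operatorname{span}(V)$; this is immediate since $P_V$ as defined in \eqref{eq:projection-pseudoinverse} is symmetric and idempotent, hence so is $I_N - P_V$, with range exactly $(\operatorname{span} V)^\perp$.

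I do not anticipate any obstacle: this is a two-line computation whose role is purely organizational. Its purpose is to isolate the three pieces $P_V^\perp J P_V^\perp$ (the ``fresh'' component on which the Gaussian conditioning in \Cref{sec:gcond} will act), together with $P_V J P_V^\perp$ and $J P_V$ (the components that are measurable with respect to the conditioning data), so that each can subsequently be analyzed separately.
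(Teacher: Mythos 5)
Your proof is correct and is essentially the paper's argument run in reverse: the paper expands $J = J(P_V^\perp+P_V)$ and then inserts $I = P_V^\perp+P_V$ on the left of the first term, while you collapse the right-hand side using the same identity $P_V+P_V^\perp = I$ twice. No substantive difference.
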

\begin{proof}
    Recall that by definition $P^\perp_V + P_V = I$. The desired statement is then obtained by substituting the identity $I$ with $P^\perp_V + P_V$ as follows:
    \begin{align*}
        J & = JI\\
        &= J(P^\perp_V+P_V)\\
        &= IJ(P^\perp_V) + JP_V\\
        &= (P^\perp_V+P_V)J(P^\perp_V)+JP_V\\
        &=  P^\perp_V J P^\perp_V + P_V J P^\perp_V + JP_V.
    \end{align*}
\end{proof}

Next, we define the relevant filtration over which the conditioning will take place. Throughout, we assume that the pairs $(t,j)$, where $t\in\N$ and $j\in [1/\delta]$, are ordered lexicographically, i.e., we have $(t',j') < (t,j)$ if either $t' < t$ or both $t = t'$ and $j<j'$. Now, using the notation $\mathcal{F}(\{X_\alpha\})$ for the $\sigma$-field generated by the random variables $\{X_\alpha\}$, we define the filtration $(\mathcal{F}_j^t)_{t \in \mathbb{N},j \in [\frac{1}{\delta}]}$ as follows:
\begin{align}\label{eq:calS-defn}
    \mathcal{F}_j^t &= \mathcal{F}\left(\{\bfsigma^0\}\cup  \left(\bigcup_{s=0}^{t-1} \{J\bfsigma^{s|k}: k\in [1/\delta]\} \right) \cup \{J\bfsigma^{t|k} : k<j\}\right).
\end{align}
We further denote by $\mathcal{S}_j^t$ the set of random variables on which we are conditioning
\begin{equation}\label{eq:Stj}
\mathcal{S}_j^t \coloneqq \{\bfsigma^0\}\cup  \left(\bigcup_{s=0}^{t-1} \{J\bfsigma^{s|k}: k\in [1/\delta]\} \right) \cup \{J\bfsigma^{t|k} : k<j\},
\end{equation}
so that $\calF_j^t = \calF(\calS_j^t).$

Unlike the setting of \Cref{lem:GOE-ind}, Algorithm \ref{alg:t-passes} involves projections on random vectors dependent on the disorder matrix $J$. Following \cite{bolthausen2014iterative}, we handle such dependence by inductively applying \Cref{lem:GOE-ind} over the ordering $(t,j);$ see also \cite[Section 6.2]{feng2022unifying}. This results in the following proposition, describing the distribution of $J$ conditioned on $\mathcal{F}_j^t$ for all $t \in \mathbb{N}$, $j \in [1/k]$.

\begin{prop}\label{prop:gen_gauss_cond}
Consider the setting of Algorithm \ref{alg:t-passes} and define $P^{(t)|j} \in \mathbb{R}^{N \times N}$ to be the projector onto the subspace spanned by $\{\bfsigma^{0|j},\cdots, \bfsigma^{t|j}\}$.
   For all $t \in \mathbb{N}$ and $j \in [1/\delta]$, define
   \begin{align*}
     P^{(t)}(j) = \sum_{k < j}P^{(t)|k} +\sum_{k \geq j} P^{(t-1)|k}.
   \end{align*}
Then, conditioned on $\mathcal{F}^t_j$, $(I- P^{(t)}(j))J(I- P^{(t)}(j))$ is equal in law to $(I- P^{(t)}(j) )\wt{J} (I-  P^{(t)}(j) )$, where $\wt{J}$ is a GOE matrix independent of $\calF_j^t$ and $J.$ Equivalently,
   \begin{equation}\label{eq:J_cond}
\begin{split}
    J \Big\vert_{\mathcal{F}^j_t} &\stackrel{d}{=} (I- P^{(t)}(j) )\wt{J} (I-  P^{(t)}(j) ) +   P^{(t)}(j)  J   P^{(t)}(j)  +   P^{(t)}(j) J (I-  P^{(t)}(j)).
\end{split}
\end{equation}
\end{prop}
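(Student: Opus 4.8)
The plan is to prove \Cref{prop:gen_gauss_cond} by induction over the lexicographically ordered pairs $(t,j)$, with each inductive step invoking \Cref{lem:GOE-ind} together with the decomposition of \Cref{lem:decomp}. The base case is the very first conditioning event $\calF_1^1$ (or $\calF_1^0$, depending on how one indexes the initial pass): here we condition only on $\bfsigma^0$, which is independent of $J$, so $J \mid \calF_1^1 \overset{d}{=} J$ and the claimed decomposition is trivial since $P^{(0)}(1)$ projects onto $\mathrm{span}\{\bfsigma^{0|k}:k\geq 1\} = \mathrm{span}\{\bfsigma^0\}$ — wait, more carefully, one must check that the projector $P^{(t)}(j)$ as defined matches exactly the span of the vectors appearing in $\calS_j^t$, using that $\bfsigma^{s|k}$ for varying $k$ have disjoint supports so their spans are orthogonal and $P^{(t)|k}$ are mutually orthogonal projectors summing appropriately.

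For the inductive step, suppose the claim holds for the predecessor $(t,j)^- $ of $(t,j)$ in the lexicographic order. Going from $\calF_{(t,j)^-}$ to $\calF_j^t$ amounts to additionally conditioning on the single new field vector $J\bfsigma^{s|k}$ where $(s,k)$ is the block just updated — more precisely, on $J\bfsigma^{t|j-1}$ (or $J\bfsigma^{t-1|1/\delta}$ at a pass boundary). First I would apply the inductive hypothesis to write $J\mid \calF_{(t,j)^-} \overset{d}{=} (I-P^-)\wt J(I-P^-) + P^- J P^- + P^- J (I-P^-)$ with $P^- = P^{(t)}(j{-}1)$ (using the notation loosely). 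The newly revealed vector $\bfsigma^{t|j-1}$ is $\calF_{(t,j)^-}$-measurable (its value is determined by the fields computed in block $j-1$ of pass $t$, which are among the conditioned variables), so conditionally it is deterministic. Then $J\bfsigma^{t|j-1}$ decomposes into a $P^-$-measurable part plus $(I-P^-)\wt J(I-P^-)\bfsigma^{t|j-1}$; conditioning further on this is, by \Cref{lem:GOE-ind} applied to $\wt J$ with the vectors $\bfu = (I-P^-)\bfsigma^{t|j-1}$ and $V$ an orthonormal basis of the orthogonal complement of the enlarged span, equivalent to replacing $\wt J$ on that complement by a fresh GOE $\wt{\wt J}$. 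Re-expanding via \Cref{lem:decomp} with the enlarged projector $P^{(t)}(j) = P^- + P^{(t)|j-1}_{\text{new component}}$ and checking that the cross terms reorganize correctly gives the stated decomposition at level $(t,j)$.

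The main obstacle, and the step requiring the most care, is the bookkeeping: verifying that the projector $P^{(t)}(j)$ defined in the proposition is exactly the projector onto $\mathrm{span}(\calS_j^t \setminus\{\bfsigma^0\text{-only terms}\})$ after each incremental conditioning, and that the new vector added to the filtration at step $(t,j)$ is precisely a spin block $\bfsigma^{t|j-1}$ that (a) is measurable with respect to the previous $\sigma$-field and (b) increments the projector $P^{(t)|j-1}$ correctly — since $P^{(t)|k}$ for a single block $k$ is the projector onto $\mathrm{span}\{\bfsigma^{0|k},\dots,\bfsigma^{t|k}\}$, and passing from $P^{(t-1)|k}$ to $P^{(t)|k}$ adds exactly the vector $\bfsigma^{t|k}$. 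One must also handle the degenerate case where $(I-P^-)\bfsigma^{t|j-1} = 0$, i.e. the new spin block lies in the already-conditioned span, in which case no new randomness is revealed and the projector is unchanged — \Cref{lem:GOE-ind} still applies with the pseudoinverse form \eqref{eq:projection-pseudoinverse}, so this is harmless but should be noted. The argument is essentially the one in \cite[Section 6.2]{feng2022unifying} and \cite{bolthausen2014iterative}, adapted to the block structure; I would state the induction cleanly and defer the orthogonality/measurability verifications to a short sequence of remarks, since they follow from the disjoint-support structure of the blocks and the definition of Algorithm~\ref{alg:t-passes}.
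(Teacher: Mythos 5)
Your proposal is correct and follows essentially the same route as the paper's proof: induction over the lexicographically ordered $(t,j)$, using \Cref{lem:decomp} to split $J$ around the current projector and \Cref{lem:GOE-ind} (with the measurability of the newly updated spin block and the disjoint-support orthogonality of the blocks) to show that the freshly revealed field $J\bfsigma^{t|j}$ is conditionally independent of the doubly-projected part, so the conditional law is unchanged up to enlarging the projector. The only divergence is the base case: the paper's first filtration $\calF^1_0$ already contains \emph{all} pass-zero fields $\{J\bfsigma^{0|k}\}_k$ and is handled in one application of \Cref{lem:GOE-ind} (it is not just conditioning on $\bfsigma^0$), so your "trivial" base case only works if, as your inductive step implicitly allows, you run the induction through intermediate $\sigma$-fields that add the vectors $J\bfsigma^{0|k}$ one block at a time — a harmless reorganization rather than a gap.
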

\begin{proof}
    The proof proceeds by induction over $(t,j)$. First, consider the base case $t=1,j=0$. 
    We have, by definition,
    \begin{align*}
    \mathcal{F}^{1}_0 =  \mathcal{F}\left(\{\bfsigma^0\}\cup \{J\bfsigma^{0|k}: k \in [1/\delta]\} \right).
    \end{align*}
By \Cref{lem:decomp} with $V=U^0 := \operatorname{span}(\bfsigma^{0|k}:k\in [1/\delta])$, we have
    \begin{align*}
        J = (I-P_{U^0})J(I-P_{U^0}) + P_{U^0}JP_{U^0}+P_{U^0}J(I-P_{U^0}).
    \end{align*}
Since $\bfsigma^{0|k} \perp \bfsigma^{0|j}$ for $j \neq k$, we have
\begin{align*}
    P_{U^0} = \sum_{k 
    \in [1/\delta]} P^{(0)|k}
\end{align*}
and $(I-P_{U^0})\bfsigma^{0|k} = \mathbf{0}$ for all $k \in [1/\delta].$ Moreover, since $\{\bfsigma^{0|k}: k \in [1/\delta]\}$ are independent of $J$, by applying \Cref{lem:GOE-ind}, we obtain that $(I-P_{U^0})J(I-P_{U^0})$ is independent of $\mathcal{F}^{1}_0$, which proves \eqref{eq:J_cond} for $t=1, j=0$.
    
    

    Now, suppose the claim holds for some $t \in \mathbb{N}, j \in [1/\delta]$, i.e., the random variable
    \begin{align*}
        J^{t}_\perp(j) \coloneqq (I- P^{(t)}(j))J(I- P^{(t)}(j)),
    \end{align*}
 conditioned on $\mathcal{F}^t_j$, is distributed as 
\begin{equation}\label{eq:ind_eq}
       J^{t}_\perp(j) \Big\vert_{\mathcal{F}^j_t} \stackrel{d}{=} (I- P^{(t)}(j) )\wt{J} (I-  P^{(t)}(j)).
 \end{equation}
Now, let $(\tilde{t},\tilde{j})$ denote the time indices following $(t,j)$ under the lexicographical ordering:
\begin{align*}
    (\tilde{t},\tilde{j}) =\begin{cases}
        (t,j+1), & j < 1/\delta\\
        (t+1,0), & j = 1/\delta.
    \end{cases}
\end{align*}
Now, since
\begin{align*}
    (I- P^{(\tilde{t})}(\tilde{j})) = (I- P^{t|j})(I- P^{(t)}(j)) = (I- P^{(t)}(j))(I- P^{t|j}),
\end{align*}
we obtain
\begin{equation}\label{eq:J_perp}
\begin{split}
J^{\tilde{t}}_\perp(\tilde{j})&=(I- P^{(\tilde{t})}(\tilde{j}))J(I- P^{(\tilde{t})}(\tilde{j}))\\ 
&= (I- P^{t|j})(I- P^{(t)}(j))J(I- P^{(t)}(j))(I- P^{t|j})\\
    &=   (I- P^{t|j})J^{t}_\perp(j)(I- P^{t|j}).
\end{split}
\end{equation}
Now recall that $J^{t}_\perp(j)$ is Gaussian conditioned on $\mathcal{F}^t_j$, while $\sigma^{t|j}$ is measurable with respect to $\mathcal{F}^{t}_{j}$.  Therefore, conditioned on $\mathcal{F}^{t}_{j}$, the random variable $J^{\tilde{t}}_\perp(\tilde{j})$ is Gaussian as well, and independent of $J.$ In what remains of the proof, we will show that, conditional on $\calF_j^t$, the random variables $J\bfsigma^{t|j}$ and $J_\perp^{\tilde{t}}(j)$ are independent. Note that this suffices to conclude the proof, because since $\calF^{\tilde{t}}_{\tilde{j}} = \calF(\calS^t_j \cup\{J\bfsigma^{t|j}\}),$ we will conclude by \eqref{eq:J_perp} that $J_\perp^{\tilde{t}}(j)$ is Gaussian conditioned on $\calF^{\tilde{t}}_{\tilde{j}}$, with 
\[
J^{\tilde{t}}_\perp(\tilde{j}) \Big\vert_{\mathcal{F}^j_t} \stackrel{d}{=} (I- P^{(\tilde{t})}(\tilde{j}) )\wt{J} (I-  P^{(\tilde{t})}(\tilde{j})).
\]
To prove the independence of $J\bfsigma^{t|j}$ and $J_\perp^{\tilde{t}}(j)$ conditional on $\calF_j^t$, note that we have the decomposition
\begin{equation}\label{eq:split}
    J\bfsigma^{t|j} = \underbrace{J^{t}_\perp(j)\bfsigma^{t|j}}_{\bfz^{t|j}}+\underbrace{P^{(t)}(j) J (I-  P^{(t)}(j))\bfsigma^{t|j} +  J   P^{(t)}(j) \bfsigma^{t|j}}_{\bfzeta^{t|j}}.
\end{equation}
Observe that, conditioned on conditioned on $\mathcal{F}^{t}_{j}$, $\bfz^{t|j}$ is Gaussian while $\bfzeta^{t|j}$ is deterministic. 
By equations \eqref{eq:ind_eq}  and \eqref{eq:J_perp}, we obtain the joint equality in law
\begin{align*}
\left(J^{\tilde{t}}_\perp(\tilde{j}), \bfz^{t|j} \right)\Big\vert_{\mathcal{F}^j_t} \stackrel{d}{=} \left( (I- P^{(\tilde{t})}(\tilde{j}))\tilde{J}(I- P^{(\tilde{t})}(\tilde{j})),(I- P^{(t)}(j) )\wt{J} (I-  P^{(t)}(j))\bfsigma^{t|j} \right).
\end{align*}
Moreover, we have $(I- P^{(\tilde{t})}(\tilde{j}))(I-  P^{(t)}(j))\bfsigma^{t|j}=0$, so by \Cref{lem:GOE-ind}, we obtain that conditioned on  $\mathcal{F}^t_j$, $J^{\tilde{t}}_\perp(\tilde{j})$ is independent of $\bfz^{t|j}$, and hence also of $J\bfsigma^{t|j}.$ This concludes the proof.
\end{proof}

\subsection{Approximate law of the field contributions $\bfh^{t, j \rightarrow k}$}\label{sec:decoup}

We now leverage \Cref{prop:gen_gauss_cond} to obtain a coupling between the field contributions $ \bfh^{t,j\rightarrow k}, j, k \in [1/\delta]$ and the vectors $\bfsigma,\bfh$ generated by Algorithm \ref{alg:t-passes}. The resulting coupling is recursively described in the following lemma.

\begin{prop}\label{lem:ind_finite_N}
Consider the setting of Algorithm \ref{alg:t-passes} and recall the definitions of the random variables $\hat{C}^{(t)} \in \mathbb{R}^{(t+1) \times (t+1)}, \hat{Q}^{(t)} \in \mathbb{R}^{(t+1) \times (t+1)}$ from equations \eqref{eq:hatc},  \eqref{eq:hatq}, respectively. Define $\hat{b}^{t,j},  \hat{\psi}^{t,j},   \hat{\theta}^{t,j\rightarrow k}$ for $t \geq 1$ to be the following sequence of random variables:
\begin{align*}
    \hat{b}^{t,j} &= \sqrt{1- \hat{C}^{(t),j}_{t,0:t-1}(\hat{C}^{(t-1),j})^{\dagger} \hat{C}^{(t),j}_{0:t-1,t}} \;\; \in \R\\
   \hat{\psi}^{t,j}&=(\hat{C}^{(t-1),j})^{\dagger} \hat{C}^{(t),j}_{0:t-1,t} \;\; \in \R^t\\
   \hat{\theta}^{t,j\rightarrow k}&= \begin{cases}
       (\hat{C}^{(t),k})^{\dagger} \hat{Q}^{(t),k \rightarrow j}_{0:t,t}-(\hat{C}^{(t),k})^{\dagger} \hat{Q}^{(t),k \rightarrow j}_{0:t,0:t-1}(\hat{C}^{(t-1),j})^{\dagger} \hat{C}^{(t),j}_{0:t-1,t} &\in \R^{t+1}, k<j\\
       (\hat{C}^{(t-1),k})^{\dagger} \hat{Q}^{(t-1),k \rightarrow j}_{0:t-1,t}-(\hat{C}^{(t-1),k})^{\dagger} \hat{Q}^{(t-1),k \rightarrow j}(\hat{C}^{(t-1),j})^{\dagger} \hat{C}^{(t),j}_{0:t-1,t} &\in \R^{t},\;\;\;\, k \geq j.
   \end{cases}
\end{align*}
Let $\bfsigma, \bfh$ be given by Algorithm \ref{alg:t-passes} and let $(\mathcal{F}^t_j)_{t \in \mathbb{N}, j \in [1/\delta]}$ denote the filtration defined by \eqref{eq:calS-defn}. For $t \in \mathbb{N}, j \in [1/\delta]$ , let $W^t_j$ be independent random matrices distributed as $W^t_j \sim \text{GOE}(N)$. 
Define the extended filtration $\tilde{\mathcal{F}}^t_j$ for $t \in \mathbb{N}, j \in [1/\delta]$ as follows:
\begin{equation}\label{eq:ext_filt}
 \tilde{\mathcal{F}}^t_j = \mathcal{F}(\mathcal{S}^t_j \cup (W_{k}^s)_{(s,k)\leq (t,j)}),
\end{equation}
where $\mathcal{S}^t_j$ denotes the set of random variables defined in \eqref{eq:Stj}.

Then, there exists a sequence of Gaussian vectors $\{\bfxi^{t,j\rightarrow k}\}_{t\in \mathbb{N},j,k \in [\frac{1}{\delta}]}$ in $\R^N$ 
with $\bfxi^{t,j\rightarrow k} \sim \mathcal{N}(0,D^k)$ and vectors $\Delta^{t,j\to k}$ in $\R^t$ for $k<j$ and in $\R^{t+1}$ for $k\geq j$ such that the following hold:
\begin{enumerate}
    \item  $\bfxi^{t,j\rightarrow k}$ is measurable with respect to $\tilde{\mathcal{F}}^s_{\ell}$ whenever either $\ell>j$ or $s>t$.
    \item For any $k \in [1/\delta]$, $\bfxi^{t,j\rightarrow k}$ are independent across $j$ and $t$. 
    \item For all $j,k \in [1/\delta]$, we have
    \begin{equation}\label{eq:rech0j}
    \bfh^{0,j\rightarrow k} = \sqrt{\delta}\bfxi^{0,j\rightarrow k} + \bfsigma^{(0)|k}\Delta^{0,j \rightarrow k}.
    \end{equation}

    \item For all $t \geq 1$ and $k, j \in [1/\delta]$, we have
\begin{equation}\label{eq:rechtj}
    \bfh^{t,j\rightarrow k}  = \begin{cases}
        \hat{b}^{t,j} \sqrt{\delta}\bfxi^{t,j\rightarrow k}+ \bfh^{(t-1),j \rightarrow k} \hat{\psi}^{t,j}+\delta\bfsigma^{(t)|k}\hat{\theta}^{t,j\rightarrow k} +   \bfsigma^{(t)|k}\Delta^{t,j \rightarrow k}, & k < j \\
         \hat{b}^{t,j} \sqrt{\delta}\bfxi^{t,j\rightarrow k}+ \bfh^{(t-1),j \rightarrow k} \hat{\psi}^{t,j}+\delta\bfsigma^{(t-1)|k}\hat{\theta}^{t,j\rightarrow k}+   \bfsigma^{(t-1)|k}\Delta^{t,j \rightarrow k}, & k \geq j.
    \end{cases}
\end{equation}

\item For all $t\geq 0$ and $j,k\in [1/\delta],$ we have
\begin{align*}
\norm{\Delta^{(t),j \rightarrow k}}_F \xrightarrow[N \rightarrow \infty]{P} 0.
\end{align*}
\end{enumerate}

\end{prop}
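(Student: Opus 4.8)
The plan is to prove the proposition by induction on the pair $(t,j)$ in the lexicographic order, constructing the Gaussian vectors $\bfxi^{t,j\to k}$ and the low-dimensional corrections $\Delta^{t,j\to k}$ one step at a time; this is the block-by-block version of the iterative Gaussian conditioning of \cite{bolthausen2014iterative}, now available through \Cref{prop:gen_gauss_cond}. The inductive hypothesis at step $(t,j)$ records that (i) every field contribution $\bfh^{s,\ell\to k}$ strictly before $(t,j)$ in this order --- equivalently, every one that is $\calF^t_j$-measurable --- already obeys \eqref{eq:rech0j}--\eqref{eq:rechtj} with the asserted measurability/independence/error properties, and hence (ii) all overlap matrices $\hat{C}^{(\cdot),k},\hat{Q}^{(\cdot),\ell\to k}$ encountered so far, and the coefficients $\hat b^{s,\ell},\hat\psi^{s,\ell},\hat\theta^{s,\ell\to k}$ built from them, are $\calF^t_j$-measurable. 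The base case $t=0$ is immediate from \Cref{lem:GOE_proj_law}: conditionally on $\bfsigma^0$ (independent of $J$), the vector $\bfh^{0,j\to k}=D^kJ\bfsigma^{0|j}$ is centered Gaussian with covariance $\delta D^k$ when $k\ne j$, so one sets $\bfxi^{0,j\to k}=\delta^{-1/2}\bfh^{0,j\to k}$ and $\Delta^{0,j\to k}=0$; when $k=j$ the covariance carries the extra rank-one term $\bfsigma^{0|j}(\bfsigma^{0|j})^\top/N$, which one splits off as a $\bfsigma^{0|j}$-multiple with an $O_P(1/\sqrt N)$ coefficient.

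For the inductive step, fix $(t,j)$ with $t\ge1$, put $P=P^{(t)}(j)$, $P^\perp=I-P$, and write $\bfsigma^{(\cdot)|k},P^{(\cdot)|k}$ for $\bfsigma^{(t)|k},P^{(t)|k}$ when $k<j$ and for $\bfsigma^{(t-1)|k},P^{(t-1)|k}$ when $k\ge j$. A first, purely algebraic step: block-orthogonality of the spin blocks gives $P\bfsigma^{t|j}=P^{(t-1)|j}\bfsigma^{t|j}$, and \eqref{eq:projection-pseudoinverse} with \eqref{eq:hatc} turns this into $P\bfsigma^{t|j}=\bfsigma^{(t-1)|j}\hat\psi^{t,j}$; thus the residual $\mathbf{r}^{t|j}:=P^\perp\bfsigma^{t|j}=\bfsigma^{t|j}-\bfsigma^{(t-1)|j}\hat\psi^{t,j}$ is supported on $B(j)$ with $\|\mathbf{r}^{t|j}\|^2=\delta N(\hat b^{t,j})^2$. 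Applying \Cref{lem:decomp} to the true $J$ and multiplying by $D^k$,
\[
\bfh^{t,j\to k}=D^kJ\bfsigma^{t|j}=D^kJ\bfsigma^{(t-1)|j}\hat\psi^{t,j}+D^kPJ\mathbf{r}^{t|j}+D^kP^\perp J\mathbf{r}^{t|j}.
\]
The first summand equals $\bfh^{(t-1),j\to k}\hat\psi^{t,j}$ by definition of the field contributions. For the second, block structure gives $D^kP=P^{(\cdot)|k}$, so $D^kPJ\mathbf{r}^{t|j}$ is $\calF^t_j$-measurable; expanding it with \eqref{eq:projection-pseudoinverse}, using the symmetry of $J$ to replace $\langle\bfsigma^{s|k},J\mathbf{r}^{t|j}\rangle$ by $\langle\bfh^{s,k\to j},\mathbf{r}^{t|j}\rangle$ (legitimate since $\mathbf{r}^{t|j}$ lives in $B(j)$), then inserting the definition \eqref{eq:hatq} of $\hat Q$ and $\mathbf{r}^{t|j}=\bfsigma^{t|j}-\bfsigma^{(t-1)|j}\hat\psi^{t,j}$, one finds after unwinding the pseudo-inverses that this summand is precisely $\delta\,\bfsigma^{(\cdot)|k}\hat\theta^{t,j\to k}$ --- exactly the $\hat\theta$ term of \eqref{eq:rechtj}, in both regimes $k<j$ and $k\ge j$.

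It remains to treat the conditionally-random term $D^kP^\perp J\mathbf{r}^{t|j}$. Because $P^\perp\mathbf{r}^{t|j}=\mathbf{r}^{t|j}$ we have $D^kP^\perp J\mathbf{r}^{t|j}=D^k(P^\perp JP^\perp)\mathbf{r}^{t|j}$, which by \Cref{prop:gen_gauss_cond} is, conditionally on $\calF^t_j$, equal in law to $D^k\wt J\mathbf{r}^{t|j}-P^{(\cdot)|k}\wt J\mathbf{r}^{t|j}$ for a fresh GOE $\wt J$ independent of $\calF^t_j$, and by \Cref{lem:GOE_proj_law} $\wt J\mathbf{r}^{t|j}\mid\calF^t_j\sim\mathcal N\big(0,(\|\mathbf{r}^{t|j}\|^2 I+\mathbf{r}^{t|j}(\mathbf{r}^{t|j})^\top)/N\big)$. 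For $k\ne j$ the rank-one part dies under $D^k$, leaving $D^k\wt J\mathbf{r}^{t|j}\sim\mathcal N(0,(\hat b^{t,j})^2\delta D^k)$, so I set $\bfxi^{t,j\to k}:=(\hat b^{t,j}\sqrt\delta)^{-1}D^k\wt J\mathbf{r}^{t|j}\sim\mathcal N(0,D^k)$ (using the independent $W^t_j$ to define it freshly on the null event $\hat b^{t,j}=0$), while $-P^{(\cdot)|k}\wt J\mathbf{r}^{t|j}=-\bfsigma^{(\cdot)|k}w$, where by \eqref{eq:projection-pseudoinverse} the entries of $w$ are centered Gaussians of variance $\tfrac1N\|\bfsigma^{s|k}\|^2\|\mathbf{r}^{t|j}\|^2=O(N)$ (the cross term $\langle\bfsigma^{s|k},\mathbf{r}^{t|j}\rangle$ vanishing as these live in disjoint blocks), so $\|w\|=O_P(1/\sqrt N)$ once the nonzero spectrum of $\hat{C}^{(\cdot),k}$ is bounded below, and $\Delta^{t,j\to k}:=-w$ has vanishing norm. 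The diagonal case $k=j$ runs the same way except that the rank-one piece survives $D^j$ and produces, along $\mathbf{r}^{t|j}$, an additional $O_P(1/\sqrt N)$-scale contribution that must be folded into the error together with the $\operatorname{span}(\bfsigma^{(t-1)|j})$-part of $\hat b^{t,j}\sqrt\delta\,\bfxi^{t,j\to j}$ --- this is the most delicate point, and it is where one uses the a priori lower bound on the nonzero spectrum of $\hat{C}^{(t-1),j}$ from the regularity of the order parameters (\Cref{prop:finite-delta-regularity}). Properties (1)--(2) are then automatic: $\bfxi^{t,j\to k}$ is built from $\wt J$ (equivalently from $W^t_j$) and chosen independent of $\calF^t_j$, hence of all $\bfxi^{s,\ell\to k}$ with $(s,\ell)<(t,j)$ (which are $\calF^t_j$-measurable by the hypothesis), and it becomes $\wt{\calF}^t_{j+1}$-measurable --- and so $\wt{\calF}^s_\ell$-measurable whenever $(s,\ell)>(t,j)$ --- since $\bfh^{t,j\to k}$, the deterministic coefficients, and $W^t_j$ all are. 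This closes the induction.

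The step I expect to be the main obstacle is the bookkeeping inside the inductive step: verifying that the $\calF^t_j$-conditional mean of $D^kJ\bfsigma^{t|j}$, as generated by \eqref{eq:projection-pseudoinverse} and the symmetry of $J$, collapses to exactly $\bfh^{(t-1),j\to k}\hat\psi^{t,j}+\delta\bfsigma^{(\cdot)|k}\hat\theta^{t,j\to k}$ with the stated formulas for $\hat\psi^{t,j}$ and $\hat\theta^{t,j\to k}$ --- tracking correctly the $t$-versus-$(t-1)$ superscripts and which blocks have already been updated in the two regimes $k<j$ and $k\ge j$ --- together with the closely related matter of arranging the coupling so that $\bfxi^{t,j\to k}$ has \emph{exactly} law $\mathcal N(0,D^k)$ with the claimed measurability and independence while every leftover fluctuation is packaged into a genuinely low-dimensional $\Delta^{t,j\to k}$ of vanishing norm (the diagonal $k=j$ case, where the GOE rank-one term persists, being where this is hardest to arrange). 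The concentration estimates needed en route are routine second-moment computations of the type in \Cref{lem:proj_gauss}, modulo the regularity input just mentioned.
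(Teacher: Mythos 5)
Your route is the paper's: lexicographic induction on $(t,j)$, the same algebraic split of $\bfh^{t,j\to k}$ into the $\hat\psi$-term $D^kJP^{(t-1)|j}\bfsigma^{t|j}=\bfh^{(t-1),j\to k}\hat\psi^{t,j}$, the $\hat\theta$-term $P^{(\cdot)|k}J\bfu^{t,j}$ with $\bfu^{t,j}=(I-P^{(t-1)|j})\bfsigma^{t|j}$ and $\norm{\bfu^{t,j}}^2=\delta N(\hat b^{t,j})^2$, and a conditionally Gaussian remainder handled through \Cref{prop:gen_gauss_cond}. The genuine gap sits exactly at the step you defer: you define $\bfxi^{t,j\to k}:=(\hat b^{t,j}\sqrt{\delta})^{-1}D^k\wt{J}\bfu^{t,j}$, where $\wt{J}$ is the abstract fresh copy appearing only in the statement of the conditional law. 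The quantity actually determined by the algorithm's randomness is the projection-killed combination $D^k(I-P^{(\cdot)|k})J\bfu^{t,j}\stackrel{d}{=}D^k\wt{J}\bfu^{t,j}-P^{(\cdot)|k}\wt{J}\bfu^{t,j}$; the component of $\wt{J}\bfu^{t,j}$ inside $\operatorname{span}(\bfsigma^{(\cdot)|k})$ is not a function of $J$ and the iterates at all. Hence, as written, your $\bfxi^{t,j\to k}$ and $\Delta^{t,j\to k}$ are not measurable with respect to the extended filtration, and items (1), (3), (4) --- which are almost-sure identities and measurability statements, not distributional ones --- are not established. The parenthetical ``equivalently from $W^t_j$'' is also not right: a vector built solely from $W^t_j$ is independent of the genuinely new randomness in $J$, and then \eqref{eq:rechtj} fails; in the paper $W^t_j$ is not a tie-breaker for the null event $\hat b^{t,j}=0$ but the essential ingredient of the construction.

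The missing idea is the completion device: for $k\neq j$ set $\bfxi^{t,j\to k}\propto D^k(I-P^{(\cdot)|k})J\bfu^{t,j}+P^{(\cdot)|k}W^t_j\bfu^{t,j}$, i.e.\ use the auxiliary GOE to regenerate precisely the directions killed by the conditioning, verify that the cross-covariances cancel so the conditional law is exactly $\mathcal{N}\bigl(0,(\norm{\bfu^{t,j}}^2/N)D^k\bigr)$, and let $\Delta^{t,j\to k}$ be the coefficient vector of $-P^{(\cdot)|k}W^t_j\bfu^{t,j}$ in $\operatorname{span}(\bfsigma^{(\cdot)|k})$, which vanishes by \Cref{lem:proj_gauss}; for $k=j$ one does not discard the rank-one spike but whitens with the inverse square root of $(\norm{\bfu^{t,j}}^2I+\bfu^{t,j}(\bfu^{t,j})^\top)/N$ via Sherman--Morrison, and the resulting extra term again lies in $\operatorname{span}(\bfsigma^{(t)|j})$. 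With this construction property (2) still requires an argument: your earlier $\bfxi$'s are only $\tilde{\mathcal{F}}^t_j$-measurable (they involve the $W$'s), so independence must be obtained by conditioning on $\tilde{\mathcal{F}}^t_j$, using that the conditional law of the completed vector does not depend on the conditioning --- it is not ``automatic'' from independence of $\mathcal{F}^t_j$. Finally, your appeal to \Cref{prop:finite-delta-regularity} for a lower bound on the spectrum of $\hat C^{(t-1),j}$ is misplaced: that proposition concerns the deterministic $C$, and transferring it to $\hat C$ presupposes the concentration $\hat C\to C$, which is only proved later in \Cref{thm:finite_del_ind} and itself relies on the present proposition; the paper's proof is arranged (pseudo-inverses, and bounds that only need $\norm{\bfu^{t,j}}\le\sqrt{\delta N}$) so as not to need it here. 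The remaining bookkeeping you describe --- identifying $\hat\psi^{t,j}$ and $\hat\theta^{t,j\to k}$ from \eqref{eq:hatc}, \eqref{eq:hatq} in the two regimes $k<j$ and $k\ge j$ --- does match the paper.
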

\begin{proof}
We split the proof into the cases $t=0$ and $t\geq 1.$ In each case, the proof begins by explicitly constructing candidate vectors $\bfxi^{t,j\to k}$. Then, in the case $t=0,$ claims (1)-(3) and (5) will be verified, and in the case $t\geq 1,$ claims (1), (2), (4) and (5) will be verified.

\paragraph{Case $t=0$.} We begin by constructing the vectors $\bfxi^{0,j\to k}$. For $j \neq k$, we set
\begin{align*}
\bfxi^{0,j\rightarrow k} \coloneqq
\frac{1}{\sqrt{\delta}}D^kJ\bfsigma^{0|j},
\end{align*}
while for $j=k$, we set
\begin{equation}\label{eq:xi0j}
\bfxi^{0,j\rightarrow j} \coloneqq
 D^j \frac{1}{\sqrt{\delta}}\left(I + \frac{1}{\delta N} \bfsigma^{0|j} (\bfsigma^{0|j})^\top \right)^{-1/2}J\bfsigma^{0|j}.
\end{equation}
We now proceed to checking claims (1)-(3) and (5). Claim (1) is immediately verified by construction. We now verify claim (2). Note that for $j \in [1/\delta]$, $\bfsigma^{0|j}$ is independent of $J$ and $\norm{\bfsigma^{0|j}}^2=\delta N$. If we define $\mathcal{F}^0 \coloneqq  \mathcal{F}\left(\{\bfsigma^0\}\right),$ \Cref{lem:GOE_proj_law} implies that
\begin{equation}\label{eq:xi0}
    J\bfsigma^{0|j} \Big|_{\mathcal{F}^0} \sim \mathcal{N}\left(\mathbf{0},\delta I + \frac{1}{N} \bfsigma^{0|j} (\bfsigma^{0|j})^\top\right).
\end{equation}
Since $\bfh^{0, j \rightarrow k} = D^k J\bfsigma^{0|j}$, we obtain
\begin{align*}
    \bfh^{0, j \rightarrow k} \Big|_{\mathcal{F}^{0}} \sim \mathcal{N}\left(\mathbf{0},\delta D^k+\frac{1}{N} D^k\bfsigma^{0|j} (\bfsigma^{0|j})^\top D^k\right).
\end{align*}
When $j \neq k$, we have $D^k\bfsigma^{0|j} = 
 \mathbf{0},$ so \Cref{lem:GOE-ind} implies that for any $k \in [1/\delta]$,  conditioned on $\mathcal{F}^1_0$, $\bfxi^{0,j\rightarrow k}$ are independent across $j$, with
\begin{align*}
\bfxi^{0,j\rightarrow k} \Big|_{\mathcal{F}^{0}}  \sim \mathcal{N}(\mathbf{0}, D^k).
\end{align*}
We may therefore lift the conditioning to obtain claim (2), as well as the fact that
\begin{align*}
\bfxi^{0,j\rightarrow k}   \sim \mathcal{N}(\mathbf{0},D^k).
\end{align*}
Finally, for claims (3) and (5), we have that, for $j \neq k$,
\begin{align*}
    \bfh^{0,j\rightarrow k} = \sqrt{\delta}\bfxi^{0,j\rightarrow k},
\end{align*}
while for $j=k$, by \eqref{eq:xi0j}, we have
\begin{equation}\label{eq:xidiff0}
    \bfh^{0,j\rightarrow j}- \sqrt{\delta}\bfxi^{0,j\rightarrow j} = 
D^j \left(I-\left(I + \frac{1}{\delta N} \bfsigma^{0|j} (\bfsigma^{0|j})^\top \right)^{-1/2}\right)J\bfsigma^{0|j}.
\end{equation}
It's easy to verify that
\begin{equation}\label{eq:sqrt}
     \left(I + \frac{1}{\delta N} \bfsigma^{0|j} (\bfsigma^{0|j})^\top \right)^{1/2} =  \left(I + \frac{\sqrt{2}-1}{\delta N} \bfsigma^{0|j} (\bfsigma^{0|j})^\top \right).
\end{equation}
Hence applying the Sherman-Morrison formula, we have
\begin{align*}
    \left(I + \frac{1}{\delta N} \bfsigma^{0|j} (\bfsigma^{0|j})^\top \right)^{-1/2} = \left(I - \frac{\sqrt{2}-1}{\delta N \sqrt{2}} \bfsigma^{0|j} (\bfsigma^{0|j})^\top \right).
\end{align*}
Substituting back in \eqref{eq:xidiff0} yields
\begin{align*}
    \bfh^{0,j\rightarrow j}- \sqrt{\delta}\bfxi^{0,j\rightarrow j} &= 
D^j\frac{\sqrt{2}-1}{\sqrt{2} \delta N}\bfsigma^{0|j} (\bfsigma^{0|j})^\top J\bfsigma^{0|j}\\
&=\bfsigma^{0|j}\Delta^{0, j \rightarrow j},
\end{align*}
where
\begin{align*}
    \Delta^{0, j \rightarrow j} = \frac{\sqrt{2}-1}{\sqrt{2}\delta N}(\bfsigma^{0|j})^\top J\bfsigma^{0|j}.
\end{align*}
\Cref{lem:proj_gauss} then implies that
\begin{align*}
    \norm{\Delta^{0,j \rightarrow j}}_2 \xrightarrow[N \rightarrow \infty]{P} 0,
\end{align*}
as desired. 

\paragraph{Case $t\geq 1$.}  We again begin by constructing the vectors $\bfxi^{t
,j\to k}$. By \Cref{lem:decomp}, for all $t \in \mathbb{N}$ and $j,k \in [1/\delta]$, we have
\begin{equation}\label{eq:J-expansion}
    J = (I- P^{(t)}(j) )J (I-  P^{(t)}(j))+ P^{(t)}(j) J (I-  P^{(t)}(j))+ J   P^{(t)}(j),
\end{equation}
where we recall that
\begin{align*}
    P^{(t)}(j) = \sum_{\ell < j}P^{(t)|\ell} +\sum_{\ell \geq j} P^{(t-1)|\ell}.
\end{align*}
Now observe that for any $s, t \in \mathbb{N}$, and  $\ell \neq m \in [1/\delta]$, $\bfsigma^{s|\ell}$ and $\bfsigma^{t|m}$ have non-zero entries in  disjoint subsets of coordinates. Therefore, $\bfsigma^{s|\ell} \perp \bfsigma^{t|m}$ whenever $\ell \neq m$. 
As a consequence, we have
\begin{align*}
    P^{(t)}(j) \bfsigma^{t|j} &= P^{(t-1)|j} \bfsigma^{t|j}
\end{align*}
and
\begin{align*}
     D^k P^{(t)}(j)&= \begin{cases}
         P^{(t)|k} & k < j\\
          P^{(t-1)|k} & k \geq j.
     \end{cases}
\end{align*}
Substituting the above simplifications in \eqref{eq:J-expansion}, for $k<j$ we get
\begin{align*}
     \bfh^{t,j \rightarrow k} = \underbrace{D^k(I-P^{(t)|k})J (I-P^{(t-1)|j}) \bfsigma^{t|j}}_{H^{t,j \rightarrow k}_1} +  \underbrace{D^k J P^{(t-1)|j} \bfsigma^{t|j} +  P^{(t)|k} J (I-P^{(t-1)|j})\bfsigma^{t|j}}_{H^{t,j \rightarrow k}_2},
\end{align*}
while for $k \geq j$ we get
\begin{align*}
     \bfh^{t,j \rightarrow k} = \underbrace{D^k(I-P^{(t-1)|k})J (I-P^{(t-1)|j}) \bfsigma^{t|j}}_{H^{t,j \rightarrow k}_1} +  \underbrace{D^k J P^{(t-1)|j} \bfsigma^{t|j} +  P^{(t-1)|k} J (I-P^{(t-1)|j})\bfsigma^{t|j}}_{H^{t,j \rightarrow k}_2}.
\end{align*}

We first consider the terms $H^{t,j \rightarrow k}_1$. For $j\in [1/\delta],$ let $\bfu^{t,j} = (I-P^{(t-1)|j}) \bfsigma^{t|j}$. Then $\bfu^{t,j}$ is measurable with respect to $\mathcal{F}^t_j$. Therefore, applying \Cref{prop:gen_gauss_cond}, we obtain
\begin{equation}\label{eq:ht1}
\begin{split}
    H^{t,j \rightarrow k}_1 \Big|_{ \tilde{\mathcal{F}}^t_j} &\stackrel{d}{=} \begin{cases}
     D^k(I-P^{(t)}(j)) \tilde{J} (I-P^{(t)}(j))\bfu^{t,j} & k < j\\
         D^k(I-P^{(t)}(j))  \tilde{J} (I-P^{(t)}(j)) \bfu^{t,j} & k \geq j
    \end{cases}   \\
    &= \begin{cases}
     D^k(I-P^{(t)|k}) \tilde{J} \bfu^{t,j} & k < j\\
         D^k(I-P^{(t-1)|k}) \tilde{J} \bfu^{t,j} & k \geq j,
    \end{cases}  
\end{split}
\end{equation}
where $\tilde{J} \sim \text{GOE}(N)$. Now, recall $W^t_j \sim \text{GOE}(N)$ from \eqref{eq:ext_filt}. By \eqref{eq:ht1}, for $k<j$ we have
\begin{equation}\label{eq:ht1_dist1}
    H^{t,j \rightarrow k}_1+ P^{(t)|k} W^{t}_j \bfu^{t,j}  \Big|_{ \tilde{\mathcal{F}}^t_j } \stackrel{d}{=} D^k(I-P^{(t)|k}) \tilde{J} \bfu^{t,j} + P^{(t)|k} W^{t}_j \bfu^{t,j}, 
\end{equation}
and similarly for $k \geq j$,
\begin{equation}\label{eq:ht1_dist2}
    H^{t,j \rightarrow k}_1+ P^{(t-1)|k} W^{t}_j \bfu^{t,j}  \Big|_{ \tilde{\mathcal{F}}^t_j}  \stackrel{d}{=} D^k(I-P^{(t-1)|k}) \tilde{J} \bfu^{t,j} + P^{(t-1)|k} W^{t}_j \bfu^{t,j}.
\end{equation}
Let 
\[
\Sigma_{\bfu^{t,j}} \coloneqq \frac{1}{N}(\norm{\bfu^{t,j}}^2I+(\bfu^{t,j})(\bfu^{t,j})^\top).
\]
By
\Cref{lem:GOE_proj_law}, both $\tilde{J} \bfu^{t,j}$ 
and $W^t_j \bfu^{t,j}$ are distributed as
$\mathcal{N}(\mathbf{0}, \Sigma_{\bfu^{t,j}})$. Hence, note that the right hand sides of equations \eqref{eq:ht1_dist1}, \eqref{eq:ht1_dist2} are Gaussian with covariance given by $ D^k(\Sigma_{\bfu^{t,j}}) D^k$. Indeed, for \eqref{eq:ht1_dist1}, we have 
\begin{align*}
    \E&\left[(D^k (I-P^{(t)|k}) \tilde{J}\bfu^{t,j} + P^{(t)|k}W^{t}_j\bfu^{t,j})( (\tilde{J}\bfu^{t,j})^\top (I-P^{(t)|k}) D^k + (W^{t}_j\bfu^{t,j})^\top P^{(t)|k}) \;|\; \calF^t_j\right] \\
    &\qquad = D^k \Sigma_{\bfu^{t,j}} D^k - D^k \Sigma_{\bfu^{t,j}} P^{(t)|k} -  P^{(t)|k}\Sigma_{\bfu^{t,j}}  D^k + 2P^{(t)|k} \Sigma_{\bfu^{t,j}} P^{(t)|k} \\
    & \qquad = D^k \Sigma_{\bfu^{t,j}} D^k + \frac{\norm{\bfu^{t,j}}^2}{N}\left(-P^{(t)|k}-P^{(t)|k}+2P^{(t)|k}\right) \\
    & \qquad = D^k \Sigma_{\bfu^{t,j}} D^k,
\end{align*}
and similarly for \eqref{eq:ht1_dist2}. Hence, for $k<j$,
\begin{equation}\label{eq:htj_dist1}
     H^{t,j \rightarrow k}_1+ P^{(t)|k} W^{t}_j \bfu^{t,j}  \Big|_{ \tilde{\mathcal{F}}^t_j}  \stackrel{d}{=} \mathcal{N}\left(\mathbf{0}, \frac{1}{N}\left(\norm{\bfu^{t,j}}^2  D^k +D^k(\bfu^{t,j})(\bfu^{t,j})^\top D^k\right)\right)
\end{equation}
as well as for $k \geq j$,
\begin{equation}\label{eq:htj_dist2}
     H^{t,j}_1+ P^{(t-1)|k} W^{t}_j \bfu^{t,j}  \Big|_{ \tilde{\mathcal{F}}^t_j}  \stackrel{d}{=} \mathcal{N}\left(\mathbf{0}, \frac{1}{N}\left(\norm{\bfu^{t,j}}^2  D^k +D^k(\bfu^{t,j})(\bfu^{t,j})^\top D^k\right)\right).
\end{equation}
When $k \neq j$, $D^k\bfu^{t,j} = \mathbf{0}$. Motivated by this, we  set ${\bfxi}^{t,j \rightarrow k}$ as
\begin{equation}\label{eq:xidef}
    {\bfxi}^{t,j \rightarrow k} \coloneqq  \frac{\sqrt{N}}{\norm{\bfu^{t,j}}}\begin{cases}
        H^{t,j \rightarrow k}_1+ P^{(t)|k} W^{t}_j \bfu^{t,j} & k < j\\
         H^{t,j \rightarrow k}_1+ P^{(t-1)|k} W^{t}_j \bfu^{t,j} & k > j.
    \end{cases}
\end{equation}
For the case $j=k$, we must remove the spike  $(\bfu^{t,j})(\bfu^{t,j})^\top $ in the covariance. Because of this, we set
\begin{equation}\label{eq:xidefj}
     {\bfxi}^{t,j \rightarrow j} = \sqrt{N}\Sigma^{-1/2}_{\bfu^{t,j}}(H^{t,j \rightarrow j}_1+ P^{(t-1)|j} W^{t}_j \bfu^{t,j}).
\end{equation}
Having constructed the vectors $\bfxi^{t,j\to k}$, we now proceed to verifying claims (1), (2), (4) and (5) in the statement of the lemma. Claim (1) is once again immediate by construction. Equations \eqref{eq:htj_dist1} and  \eqref{eq:htj_dist2} imply that for any $k \in [1/\delta]$ we have
\begin{align*}
    {\bfxi}^{t,j \rightarrow k} \Big|_{  \tilde{\mathcal{F}}^t_j} \stackrel{d}{=} \mathcal{N}(0,  D^k).
\end{align*}
As a result, this conditional distribution of ${\bfxi}^{t,j}$ is independent of $\tilde{\calF}^t_j$ and $ {\bfxi}^{t,j \rightarrow k} \sim \mathcal{N}(0, D^k)$ unconditionally. In particular, ${\bfxi}^{t,j \rightarrow k}$ is independent of ${\bfxi}^{s,\ell \rightarrow k}$ for $s \leq t, \ell < j$ and  $s < t, \ell  > k$. This verifies claim (2). For the remainder of the proof, we verify claims (4) and (5).

First, we show that $H^{t,j \rightarrow k}_1\approx \sqrt{\delta}\hat{b}^{t,j}{\bfxi}^{t,j}$. For this, we first verify that we have $\hat{b}^{t,j} = \frac{\norm{\bfu^{t,j}}}{\sqrt{\delta N}}$. To see this, recall from \eqref{eq:projection-pseudoinverse} that we have 
\begin{align*}
    P^{(t)|j} = \bfsigma^{(t)|j}((\bfsigma^{(t)|j})^\top\bfsigma^{(t)|j}))^\dagger(\bfsigma^{(t)|j})^\top,
\end{align*}
which can be further simplified via \eqref{eq:hatc} to
\begin{align*}
    P^{(t)|j} = \frac{1}{\delta N}\bfsigma^{(t)|j}(\hat{C}^{(t),j})^{\dagger}(\bfsigma^{(t)|j})^\top.
\end{align*}
Therefore,
\begin{equation}\label{eq:bu}
\begin{split}
    \norm{\bfu^{t,j}}^2 &=
          \norm{(I-P^{(t-1)|j}) \bfsigma^{t|j}}^2\\
          &= \norm{\bfsigma^{t|j}}^2 - \norm{P^{(t-1)|j}\bfsigma^{t|j}}^2 \\
          &= \delta N - \frac{1}{(\delta N)^2} \norm{\bfsigma^{(t-1)|j}(\hat{C}^{(t-1),j})^{\dagger} (\bfsigma^{(t-1)|j})^\top \bfsigma^{t|j}}^2 \\
            &= \delta N - \hat{C}^{(t),j}_{t,0:t-1 }(\hat{C}^{(t-1),j})^{\dagger}(\bfsigma^{(t-1)|j})^\top\bfsigma^{(t-1)|j} (\hat{C}^{(t-1),j})^{\dagger} C^{(t),j}_{0:t-1,t} \\
          &= \delta N- \delta N\hat{C}^{(t),j}_{t,0:t-1 }(\hat{C}^{(t-1),j})^{\dagger} C^{(t),j}_{0:t-1,t}\\
          &= \delta N (\hat{b}^{t,j})^2,
\end{split}
\end{equation}
as desired.
We next establish control over the error $H^{t,j \rightarrow k}_1- \sqrt{\delta} \hat{b}^{t,j} {\bfxi}^{t,j \rightarrow k}$. For the case $j \neq k$, equations \eqref{eq:xidef} and \eqref{eq:bu} imply that 
\begin{align*}
  H^{t,j \rightarrow k}_1- \sqrt{\delta}\hat{b}^{t,j}{\bfxi}^{t,j \rightarrow k}= \begin{cases}
       P^{(t)|k} W^{t}_j \bfu^{t,j}, & k < j\\
      P^{(t-1)|k} W^{t}_j \bfu^{t,j} & k \geq j.
  \end{cases}
\end{align*}
In either case, by the definition of $P^{(t)|k}$ and $P^{(t-1)|k}$, the above difference lies in the span of $\bfsigma^{(t)|k}$:
\begin{align*}
  H^{t,j \rightarrow k}_1- \hat{b}^{t,j}{\bfxi}^{t,j \rightarrow k} =\bfsigma^{(t)|k} \Delta^{t, j \rightarrow k}.
\end{align*}
Next, we show that $\Delta^{t, j \rightarrow k} \xrightarrow[N \rightarrow \infty]{P} 0$. Note that $\Delta^{t, j \rightarrow k}$ is a projection of a GOE matrix onto a subspace of dimension which does not grow with $N$ (either $t$ or $t+1$). Hence, to show that it vanishes in probability as $N\to \infty$ it suffices to check that $\frac{1}{\sqrt{N}}\norm{\bfu^{t,j}}$ remains bounded as $N\to \infty$. But this is immediate from the fact that $\norm{\bfu^{t,j}} \leq\norm{\bfsigma^{t|j}}=\sqrt{\delta N}.$

For the case $j=k$, we additionally require control over the contribution from $\Sigma^{-1/2}_{\bfu^{t,j}}$. Similar to \eqref{eq:sqrt}, we have
\begin{align*}
    (\norm{\bfu^{t,j}}^2I+(\bfu^{t,j})(\bfu^{t,j})^\top)^{1/2}= \left(\norm{\bfu^{t,j}}I+\frac{\sqrt{2}-1}{\norm{\bfu^{t,j}}}(\bfu^{t,j})(\bfu^{t,j})^\top\right).
\end{align*}
The Sherman-Morrison formula then implies:
\begin{align*}
\sqrt{N}\Sigma^{-1/2}_{\bfu^{t,j}}= \sqrt{N}(\norm{\bfu^{t,j}}^2I+(\bfu^{t,j})(\bfu^{t,j})^\top)^{-1/2} = \frac{\sqrt{N}}{\norm{\bfu^{t,j}}}I- \frac{\sqrt{N}(\sqrt{2}-1)}{ \norm{\bfu^{t,j}}^3}(\bfu^{t,j})(\bfu^{t,j})^\top.
\end{align*}
Therefore, \eqref{eq:xidefj} yields
\begin{equation}\label{eq:diff_eq}
\begin{split}
 &H^{t,j \rightarrow j}_1-  \sqrt{\delta} \hat{b}^{t,j}{\bfxi}^{t,j} \\
 &= H^{t,j \rightarrow j}_1-  \frac{\norm{\bfu^{t,j}}}{\sqrt{N}} \left(\frac{\sqrt{N}}{\norm{\bfu^{t,j}}}I- \frac{\sqrt{N}(\sqrt{2}-1)}{ \norm{\bfu^{t,j}}^3}(\bfu^{t,j})(\bfu^{t,j})^\top\right) (H^{t,j \rightarrow k}_1+ P^{(t-1)|j} W^{t}_j \bfu^{t,j})\\
 &= \frac{\sqrt{N} (\sqrt{2}-1)}{\norm{\bfu^{t,j}}^2}(\bfu^{t,j})(\bfu^{t,j})^\top H^{t,j \rightarrow j}_1 - \left(I- \frac{\sqrt{2}-1}{\norm{\bfu^{t,j}}^2}(\bfu^{t,j})(\bfu^{t,j})^\top\right) P^{(t-1)|j} W^{t}_j \bfu^{t,j}.
\end{split}
\end{equation}
By definition of $\bfu^{t,j}$ and $P^{(t-1)|j}$, we note that $H^{t,j \rightarrow j}_1-  \sqrt{\delta} \hat{b}^{t,j}{\bfxi}^{t,j}$ lies in the span of $\bfsigma^{(t)|j}$. Therefore, we may write
\begin{align*}
    H^{t,j \rightarrow j}_1-  \sqrt{\delta} \hat{b}^{t,j}{\bfxi}^{t,j} = \bfsigma^{(t)|j}  \Delta^{t,j \rightarrow j}.
\end{align*}
Now, since $\norm{\bfu^{t,j}}^2 \leq \sqrt{\delta N}$, by \eqref{eq:ht1},  we have
\begin{equation}\label{eq:htorth}
(\bfu^{t,j})^\top H^{t,j \rightarrow j}_1 \xrightarrow[N \rightarrow \infty]{P} 0.
\end{equation}
\Cref{lem:proj_gauss} further implies that
\begin{equation}\label{eq:resid_pt}
     \|\bfsigma^{(t-1)}_j W^t_j \bfu^{t,j}\|  \xrightarrow[N \rightarrow \infty]{P} 0,
\end{equation}
so equations \eqref{eq:htorth} and \eqref{eq:resid_pt} together imply that
\begin{align*}
    \Delta^{t,j \rightarrow k} \xrightarrow[N \rightarrow \infty]{P} 0.
\end{align*}

Finally, to conclude the proof of claims (4) and (5), and hence of the lemma, it remains to control the remaining terms $H^{t,j \rightarrow k}_2$:
\begin{equation}\label{eq:H2jk}
    H^{t,j \rightarrow k}_2 = \begin{cases}
        D^k J P^{(t-1)|j} \bfsigma^{t|j} +  P^{(t)|k} J (I-P^{(t-1)|j})\bfsigma^{t|j}, & k < j\\
        D^k J P^{(t-1)|j} \bfsigma^{t|j} +  P^{(t-1)|k} J (I-P^{(t-1)|j})\bfsigma^{t|j}, & k \geq j.
    \end{cases}  
\end{equation}
We treat the two terms in the right hand side separately. In what follows, we will focus on the case $k < j$, and the case $k \geq j$ will follow similarly. Substituting $P^{(t)|k}, P^{(t)|j}$ in $H^{t,j \rightarrow k}_2$, we get
\begin{equation}\label{eq:H2jk1}
    D^k J P^{(t-1)|j} \bfsigma^{t|j} = \frac{1}{\delta N}D^k J \bfsigma^{(t-1)|j}(\hat{C}^{(t-1),j})^{\dagger}(\bfsigma^{(t-1)|j})^\top \bfsigma^{t|j}, 
\end{equation}
and
\begin{equation}\label{eq:H2jk2}
\begin{split}
    P^{(t)|k} J (I-P^{(t-1)}_j)\bfsigma^{t|j} &= \frac{1}{\delta N}\bfsigma^{(t)|k}(\hat{C}^{(t),k})^{\dagger}(\bfsigma^{(t)|k})^\top J \bfsigma^{t|j}\\ &- \frac{1}{\delta N}\bfsigma^{(t)|k}(\hat{C}^{(t),k})^{\dagger}(\bfsigma^{(t)|k})^\top J \bfsigma^{(t-1)|j}(\hat{C}^{(t-1),j})^{\dagger}(\bfsigma^{(t-1)|j})^\top \bfsigma^{t|j} 
\end{split}
\end{equation}
Recalling the definition of $\hat{C}^{(t)}$ and $\bfh^{(t),j \rightarrow k}$, \eqref{eq:H2jk1} simplifies to
\begin{equation}\label{eq:p1}
\begin{split}
    D^k J P^{(t-1)|j} \bfsigma^{t|j}&=\frac{1}{\delta N} D^k J \bfsigma^{(t-1)|j}(\hat{C}^{(t-1),j})^{\dagger}(\bfsigma^{(t-1)|j})^\top  \bfsigma^{t|j}\\&= \frac{1}{N}\bfh^{(t-1),j \rightarrow k}(\hat{C}^{(t-1),j})^{\dagger} (\bfsigma^{(t-1)|j})^\top  \bfsigma^{t|j}\\
    &=\bfh^{(t-1),j \rightarrow k}(\hat{C}^{(t-1),j})^{\dagger}\hat{C}^{(t),j}_{0:t-1,t}
\end{split}
\end{equation}
Recognizing $(\hat{C}^{(t-1),j})^{\dagger}\hat{C}^{(t),j}_{0:t-1,t}$ as $\hat{\psi}^{t,j}$ yields the second term in the right hand side of \eqref{eq:rechtj}. We similarly proceed with \eqref{eq:H2jk2}. Recall that $D^j J \bfsigma^{(t)|k} = \bfh^{(t)|k \rightarrow j}$. Therefore,
\begin{align*}
     (\bfsigma^{(t)|k})^\top J \bfsigma^{t|j}  &=   (\bfh^{(t),k \rightarrow j})^\top \bfsigma^{t|j}\\
    &= \delta^2 N\hat{Q}^{(t), k \rightarrow j}_{0:t:,t},
\end{align*}
where we used the definition of $\hat{Q}^{(t)}$ in \eqref{eq:hatq}. Substituting the above into the terms appearing in the right hand side of \eqref{eq:H2jk2} leads to the following simplification
\begin{equation}\label{eq:theta_1}
\begin{split}
\frac{1}{\delta N}\bfsigma^{(t)|k}(\hat{C}^{(t),k})^{\dagger}(\bfsigma^{(t)|k})^\top J \bfsigma^{t|j} &= \frac{1}{\delta}\bfsigma^{(t)|k}(\hat{C}^{(t),k})^{\dagger} (\bfh^{(t),k \rightarrow j})^\top \bfsigma^{t|j} \\
&= \delta\bfsigma^{(t)|k}(\hat{C}^{(t),k})^{\dagger} \hat{Q}_{0:t,t}^{(t),k \rightarrow j}.
\end{split}
\end{equation}
Finally, we have
\begin{equation}\label{eq:theta_2}
\begin{split}
    \frac{1}{\delta N}\bfsigma^{(t)|k}(\hat{C}^{(t),k})^{\dagger}(&\bfsigma^{(t)|k})^\top J \bfsigma^{(t-1)|j}(\hat{C}^{(t-1),j})^{\dagger}(\bfsigma^{(t-1)|j})^\top \bfsigma^{t|j}\\ 
    &= \frac{1}{\delta N}\bfsigma^{(t)|k} (\hat{C}^{(t),k})^{\dagger}(\bfh^{(t),k \rightarrow j})^\top \bfsigma^{(t-1)|j}(\hat{C}^{(t-1),j})^{\dagger}(\bfsigma^{(t-1)|j})^\top \bfsigma^{t|j}\\
    &= \delta\bfsigma^{(t)|k}(\hat{C}^{(t),k})^{\dagger} \hat{Q}^{(t),k \rightarrow j}_{0:t, 0:t-1}(\hat{C}^{(t-1),j})^{\dagger} \hat{C}^{(t),j}_{0:t-1,t}.
\end{split}
\end{equation}
Substituting equations \eqref{eq:theta_1} and \eqref{eq:theta_2} into \eqref{eq:H2jk2} yields
\begin{equation}\label{eq:p2}
    P^{(t)|k} J (I-P^{(t-1)|j})\bfsigma^{t|j} = \delta \bfsigma^{(t)|k} \left((\hat{C}^{(t),k})^{\dagger} \hat{Q}^{(t),k \rightarrow j}_{0:t,t}-(\hat{C}^{(t),k})^{\dagger} \hat{Q}^{(t),k \rightarrow j}_{0:t,0:t-1}(\hat{C}^{(t-1),j})^{\dagger} \hat{C}^{(t),j}_{0:t-1,t}\right).
\end{equation}
We recognize the right hand side as $ \delta \bfsigma^{(t)|k}\hat{\theta}^{t,j}$. Substituting equations \eqref{eq:p1}, \eqref{eq:p2} back into \eqref{eq:H2jk}, we obtain
\begin{align*}
    H^{t,\ell \rightarrow j}_2 = \bfh^{(t-1),j \rightarrow k} \hat{\psi}^{t,j}+\delta\bfsigma^{(t)|k}\hat{\theta}^{t,j\rightarrow k}.
\end{align*}
Proceeding similarly for $k \geq j$ concludes the proof of the lemma.
\end{proof}

\subsection{Asymptotics at a fixed $\delta>0$}\label{sec:field-contributions}

In this section, we describe the limiting dynamics at a fixed $\delta>0$, which we call the Effective Dynamics, and which are the analogue of the equations \eqref{eq:h-joint-law-alone} and \eqref{eq:Sigma}-\eqref{eq:A-upper-diag} at a fixed $\delta>0.$ As discussed at the beginning of \Cref{sec:finite-delta-dynamics}, it will turn out that the process $\bfh, \bfsigma$ from Algorithm \ref{alg:t-passes} is approximately given by i.i.d. draws from this Effective Process. 

For any $j \in [\frac{1}{\delta}]$, we define the following Effective Dynamics stochastic processes over fields $ \{h^{t,j} \in \mathbb{R}\}_{t=1}^T$ and spins $ \{\sigma^{t,j} \in \{\pm 1\}\}_{t=0}^T$. We let $\sigma^{0,j}\sim \Unif\{\pm1\}$, and for $t=1,\dots,T,$
\begin{equation}\label{eq:h-joint-law-del}
\begin{cases}
        h^{t,j} = G^{t,j} + \brac{f^{t,j}, \sigma^{(t-1),j}}\\
         \sigma^{t,j} = c(h^{t,j}).
\end{cases}
\end{equation}
Above, $G^{1,j},\cdots, G^{T,j}$ are zero-mean Gaussian random variables with covariance $\Sigma^{(T),j}$, independent of $\sigma^{0,j}$, and $\Sigma^{(T),j}\in \R^{T\times T}, f^{t,j}\in \R^t,t=1,\dots,T$ are specified by the difference equations
\begin{equation}\label{eq:finite_del-dim}
\begin{split}
        &\Sigma^{(T),j} = \delta\sum_{k < j} C_{1:T, 1:T}^{(T),k}  + \delta\sum_{k \geq  j} C^{(T-1),k}\\
    &f^{t,j} =\delta \sum_{k < j} A_{0:t-1, t}^{(t),k} + \delta\sum_{k > j}\begin{bmatrix}
        0 \\
        A_{0:t-2, t-1}^{(t-1),k}
    \end{bmatrix} + \delta A_{0:t-1, t-1}^{(t-1),j} \qquad\qquad\;\;\;\;\, t=1,\dots,T\\ 
    &C_{st}^{(T),j} = \E[\sigma^{s,j} \sigma^{t,j}] \qquad\qquad\qquad\qquad \qquad\qquad\qquad\qquad \qquad\qquad \, s,t=0,\dots,T\\\\
    &R_{st}^{(T),j} =  \E[G^{s,j} \sigma^{t,j}]\qquad\qquad\qquad\qquad \qquad\qquad\qquad\qquad \qquad\qquad s=1,\dots,T,t=0,\dots,T\\\\
    &A_{0:t-1, t}^{(T),j} = (\Sigma^{(T),j})^{-1} R_{1:t, t}^{(T),j} +\left( A^{(T-1),j} - (\Sigma^{(T),j})^{-1} R_{1:t,0:t-1}^{(T),j} \right) (C^{(T-1),j})^{-1} C_{0:t-1,t}^{(T),j},
\end{split}
\end{equation}
where $A$ is upper triangular with zeros on the diagonal, and where we define $\Sigma^{(1),1}:=1$ and $A_{-1,0}:=0.$
We recall that it is proved in \Cref{prop:finite-delta-regularity} that the above matrix inverses are well-defined along the solution trajectory. Finally, we define three more quantities which will be used in our proofs below. These are
\begin{align}
    \psi^{t,j} &= (C^{(t-1),j})^\dagger C_{0:t-1,t}^{(t),j}\in \mathbb{R}^{t}\label{eq:psi-defn}  \\
    \theta^{t,j\rightarrow k}&= \begin{cases}
    (C^{(t),k})^{-1} Q^{(t),k \rightarrow j}_{0:t,t}-(C^{(t),k})^{-1} Q^{(t),k \rightarrow j}_{0:t,0:t-1}(C^{(t-1),j})^{-1} C^{(t),j}_{0:t-1,t}  &\in \R^{t+1}, k<j\\
       (C^{(t-1),k})^{-1} Q^{(t),k \rightarrow j}_{0:t-1,t}-(C^{(t-1),k})^{-1} Q^{(t-1),k \rightarrow j}(C^{(t-1),j})^{-1} C^{(t),j}_{0:t-1,t} &\in \R^{t},\;\;\;\, k \geq j.
   \end{cases} \label{eq:theta-defn} \\
   b^{t,j} &= \sqrt{1- C^{(t),j}_{t,0:t-1}(C^{(t-1),j})^{-1} C^{(t),j}_{0:t-1,t}} \in \R. \label{eq:b-defn}
\end{align}
We stress that the quantities defined in equations \eqref{eq:finite_del-dim}-\eqref{eq:b-defn} above are \emph{deterministic}, as opposed to the stochastic variants like $\hat{C}, \hat{Q}, \hat{\psi},\hat{\theta}$ and $\hat{b}$ defined in \Cref{sec:useful-lemmas} and \Cref{lem:ind_finite_N}.

\subsection{Lifted process over pairwise contributions}\label{sec:lifted}

An intermediate step towards obtaining \eqref{eq:finite_del-dim} as the limiting dynamics over spins and fields is to first define an Effective Process on the lifted space of field contributions which we can more directly compare with \Cref{lem:ind_finite_N}. We next describe this lifted Effective Dynamics and relate it to the Effective Dynamics specified by \eqref{eq:finite_del-dim}.

\begin{defn}[Lifted Effective Process]\label{def:joint_eff_proc}
  Let $\Sigma^{(T)}, f^{(T)}, C^{(T)}, A^{(T)}$ be given by the solution to the difference equations specified by \eqref{eq:finite_del-dim}.
  For $t \in [T], j,k \in [1/\delta]$, let $g^{(t), k \rightarrow j} \in \mathbb{R}^{t+1}$ be Gaussian random row vectors, independent across $j,k$ with covariance $\delta C^{(t),k}$, i.e
  \begin{align*}
      g^{(t), k \rightarrow j} \sim \mathcal{N}(0, \delta C^{(t),k}).
  \end{align*}
Let $\sigma^{0,j} \sim \Unif\{\pm1\}$, independently across $j$ and independent of the $g^{(t),k\to j}$, and for $t=1,\dots,T$, let
\begin{equation}\label{eq:law-lifted}
\begin{split}
\begin{cases}
        h^{t,j} = \sum_{k <j} h^{t,k \rightarrow j}+\sum_{k \geq j}h^{t-1,k \rightarrow j}\\
   \sigma^{t,j}  =  c(h^{t,j}),
\end{cases}
\end{split}
\end{equation}
where the $h^{t,k\to j}$ are defined as follows. 
\begin{align}\label{eq:kj-contribution-effective-process}
     h^{t,k \rightarrow j} &=\begin{cases}
         g^{t,k\to j} + \delta\sigma^{(t-1),j}A_{0:t-1,t}^{(t),k} & t=1,\dots,T \text{ and }k< j \\
             g^{t,k\to j} + \delta \sigma^{(t),j} A_{0:t,t}^{(t),k} & t=0,\dots,T-1 \text{ and } k=j\\
          g^{t,k\to j} + \delta \sigma^{(t),j} \begin{bmatrix}
        0 \\
        A_{0:t-1,t}^{(t),k}
    \end{bmatrix} & t=0,\dots,T-1  \text{ and } k > j.\\
     \end{cases} 
\end{align}
\end{defn}
The Effective Process defined above specifies a distribution for $\{\sigma^{t,j}\}_{t=0}^T, \{h^{t,j}\}_{t=1}^T$ which matches that of \eqref{eq:h-joint-law-del}.
\begin{prop}
The marginal law of $\{\sigma^{t,j}\}_{t=0}^T, \{h^{t,j}\}_{t=1}^T$ obtained from \Cref{def:joint_eff_proc} agrees with the law defined by \eqref{eq:h-joint-law-del}.
\end{prop}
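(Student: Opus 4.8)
The plan is a direct, block‑by‑block comparison of the two recursions; fix an arbitrary $j\in[1/\delta]$. First I would substitute \eqref{eq:kj-contribution-effective-process} into \eqref{eq:law-lifted}, splitting each contribution $h^{t,k\to j}$ into its Gaussian part $g^{t,k\to j}$ (the last, i.e.\ $t$-th, coordinate of the row vector $g^{(t),k\to j}$) and its deterministic-coefficient part. The only bookkeeping point is the shifted time index: the term $\sum_{k\ge j}h^{t-1,k\to j}$ forces one to read \eqref{eq:kj-contribution-effective-process} at time $t-1$, separately for $k=j$ (second case) and $k>j$ (third case). Collecting terms gives
\begin{align*}
  h^{t,j} \;=\; G^{t,j} \;+\; \delta\,\sigma^{(t-1),j}\Bigl(\,\textstyle\sum_{k<j}A^{(t),k}_{0:t-1,t} \;+\; A^{(t-1),j}_{0:t-1,t-1} \;+\; \sum_{k>j}\begin{bmatrix}0\\ A^{(t-1),k}_{0:t-2,t-1}\end{bmatrix}\Bigr),
\end{align*}
where $G^{t,j}:=\sum_{k<j}g^{t,k\to j}+\sum_{k\ge j}g^{t-1,k\to j}$. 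By inspection the coefficient of $\sigma^{(t-1),j}$ is exactly $f^{t,j}$ as defined in \eqref{eq:finite_del-dim}, so $h^{t,j}=G^{t,j}+\brac{f^{t,j},\sigma^{(t-1),j}}$, which matches the first line of \eqref{eq:h-joint-law-del}.

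Next I would identify the joint law of $(G^{1,j},\dots,G^{T,j})$. The $g^{(t),k\to j}$ are centered Gaussian and independent across $k$; for a fixed $k$ they are coupled across $t$ consistently, as nested restrictions of one Gaussian vector with covariance $\delta C^{(T),k}$ (legitimate because the equations \eqref{eq:finite_del-dim} are causal in the horizon $T$, so $C^{(t),k}$ is the top-left block of $C^{(T),k}$). Hence $(G^{1,j},\dots,G^{T,j})$ is centered and jointly Gaussian, and a short computation of $\E[G^{s,j}G^{t,j}]$ for $1\le s,t\le T$—in which the $k\ne k'$ cross terms vanish—gives $\delta\sum_{k<j}C^{(T),k}_{st}+\delta\sum_{k\ge j}C^{(T-1),k}_{s-1,t-1}$, which is precisely $\Sigma^{(T),j}_{st}$ under the indexing conventions of \eqref{eq:finite_del-dim}. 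Moreover $(G^{1,j},\dots,G^{T,j})$ is a function of the $g$'s alone, hence independent of $\sigma^{0,j}\sim\Unif\{\pm1\}$.

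Finally I would conclude: both processes feed the \emph{same} inputs—a $\Unif\{\pm1\}$ variable $\sigma^{0,j}$, an independent $\mathcal N(0,\Sigma^{(T),j})$ vector $(G^{1,j},\dots,G^{T,j})$, and update-rule randomness independent across $t$ and of everything else—through the \emph{identical} deterministic recursion $h^{t,j}=G^{t,j}+\brac{f^{t,j},\sigma^{(t-1),j}}$, $\sigma^{t,j}=c(h^{t,j})$. Equality of the full joint law of $\{\sigma^{t,j}\}_{t=0}^T,\{h^{t,j}\}_{t=1}^T$ then follows (couple the $c$-randomness identically; the output law depends only on the input laws). I do not expect a substantive obstacle here: the statement is essentially a verification, and the only care needed is the index bookkeeping in the first step and making the across-$t$ coupling of the $g^{(t),k\to j}$ precise in the second.
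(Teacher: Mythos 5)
Your proposal is correct and follows essentially the same route as the paper's proof: collect the Gaussian parts into $G^{t,j}$ (the paper's $\wt{G}^{t,j}$), check its covariance equals $\Sigma^{(T),j}$ using independence across $k$, verify the deterministic coefficient of $\sigma^{(t-1),j}$ is exactly $f^{t,j}$, and conclude by noting both laws arise from the same recursion applied to identically distributed, independent inputs. The index bookkeeping you carry out (splitting the $k\ge j$ sum at time $t-1$ into the $k=j$ and $k>j$ cases) matches the paper's computation.
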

\begin{proof}
For $t=1,\dots,T,$ let $\tilde{G}^{t,j}$ denote the random variable
\begin{align*}
   \wt{G}^{t,j}= \sum_{k <j} g^{t,k \rightarrow j}+\sum_{k \geq j}g^{t-1,k \rightarrow j}.
\end{align*}
Note that all terms in the sum above are independent, so for $1\leq s,t\leq T$, we have
\begin{align*}
    \E \wt{G}^{s,j} \wt{G}^{t,j} &= \sum_{k <j} \E g^{s,k \rightarrow j}g^{t,k \rightarrow j}+\sum_{k \geq j}\E g^{s-1,k \rightarrow j}g^{t-1,k \rightarrow j} \\
    &= \delta \sum_{k<j}C_{st}^{(T),k} + \delta \sum_{k\geq j}C_{s-1,t-1}^{(T-1),k} \\
    &= \Sigma_{s,t}^{(T),j},
\end{align*}
so $\wt{G}^{(t),j}$ agrees in law with $G^{(t),j}$ from \eqref{eq:h-joint-law-del}. Now according to \eqref{eq:law-lifted}, for $t=1,\dots,T,$ we have
\begin{align*}
        h^{t,j} &=  \tilde{G}^{t,j}+  \sigma^{(t-1),j}\left(\delta\sum_{k<j}A_{0:t-1,t}^{(t),k}  +\delta \sum_{k> j}\begin{bmatrix}
        0 \\
        A_{0:t-2, t-1}^{(t-1),k} 
    \end{bmatrix} + \delta A_{0:t-1,t-1}^{(t-1),j}
    \right) \\
    &= \tilde{G}^{t,j}+  \brac{f^{t,j}, \sigma^{(t-1),j}}.
\end{align*}
Since $\wt{G}^{(t),j}$ agrees in law with $G^{(t),j}$, and moreover the $\{\sigma^{t,j}\}_{t=0}^T, \{h^{t,j}\}_{t=1}^T$ from \Cref{def:joint_eff_proc} satisfy the same recursion as those from \eqref{eq:h-joint-law-del} with $\wt{G}$ in place of $G,$ and with the same distribution of $\sigma^{0,j}$, the proposition follows.
\end{proof}
We now derive a result that links certain parameters defined through the Effective Process in \Cref{def:joint_eff_proc} to the recursion on $A$ given by \eqref{eq:finite_del-dim}.  Define matrices $Q^{(T),k\to j}\in \R^{(T+1)\times(T+1)}$ for $k,j\in [1/\delta]$ as follows
\begin{equation}\label{eq:defQt}
   Q^{(T),k \rightarrow j}_{st}= \frac{1}{\delta}\mathbb{E}[h^{s, k \rightarrow j}\sigma^{t,j}].
\end{equation}
This matrix will be useful in the proof of the following lemma, as well as later when we have to compare the Effective Process to the vectors $\bfsigma^{(t)|j}, \bfh^{(t),k\to j}.$

\begin{lem}\label{lem:spin-term-simplifications}
Let $\psi^{t,j}$ and $\theta^{t,j\to k}$ be defined as in equations \eqref{eq:psi-defn} and \eqref{eq:theta-defn}, respectively. For $k < j$, we have
 \begin{equation}\label{eq:lemma7-k<j}
        \begin{bmatrix}
            0 \\A^{(t-1),j}\psi^{t,j}\end{bmatrix} +\theta^{t,j\rightarrow k} =  \begin{bmatrix}
            0\\ A_{0:t-1,t}^{(t),j}
            \end{bmatrix}, 
\end{equation}
and for $k \geq j$, we have
 \begin{equation}\label{eq:lemma7-k>j}
       A^{(t-1),j}\psi^{t,j}+\theta^{t,j\rightarrow k} =  A^{(t),j}_{0:t-1,t}.
\end{equation}
\end{lem}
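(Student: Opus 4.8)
I would prove this by first obtaining a closed form for the matrices $Q^{(t),k\to j}$ of \eqref{eq:defQt} in terms of the correlation/response matrices $C,R$ and of $A,\Sigma$, then feeding it into the definitions \eqref{eq:psi-defn}, \eqref{eq:theta-defn} of $\psi^{t,j}$ and $\theta^{t,j\to k}$, and finally collapsing the result with the $A$-recursion in \eqref{eq:finite_del-dim}. The whole argument is a Gaussian integration-by-parts computation followed by linear algebra.

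For the formula for $Q$, fix $k,j$. Inspecting \eqref{eq:law-lifted}--\eqref{eq:kj-contribution-effective-process} together with the identity $\wt G^{\tau,j}=\sum_{k'<j}g^{\tau,k'\to j}+\sum_{k'\geq j}g^{\tau-1,k'\to j}$ for the Gaussian part of $h^{\tau,j}$, one sees that the scalar $g^{s,k\to j}$ enters the entire block-$j$ subprocess $\{h^{\tau,j},\sigma^{\tau,j}\}_\tau$ only through the single Gaussian $\wt G^{s+\eps,j}$ into which it is summed, with $\eps=\1\{k\geq j\}$. Hence, applying Stein's Lemma (\Cref{lem:Stein}) to the Gaussians $\{g^{s',k\to j}\}_{s'}$, which are non-degenerate along the trajectory by \Cref{prop:finite-delta-regularity}, and the chain rule for distributional derivatives, I get $\tfrac1\delta\E[g^{s,k\to j}\sigma^{m,j}]=\sum_{s'}C^{(\cdot),k}_{s,s'}\,\rho_{s'+\eps,m}$ with $\rho_{a,m}:=\E[\partial_{\wt G^{a,j}}\sigma^{m,j}]$, while a second application of Stein's Lemma, this time to $\wt G^{(\cdot),j}$, gives $R^{(\cdot),j}=\Sigma^{(\cdot),j}\rho$. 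Substituting these, together with $\E[\sigma^{r,j}\sigma^{m,j}]=C^{(\cdot),j}_{r,m}$ for the deterministic part $\delta\langle\sigma^{(\cdot),j},A^{(\cdot),k}_{\,\cdot}\rangle$ of $h^{s,k\to j}$, into $Q^{(t),k\to j}_{s,m}=\tfrac1\delta\E[h^{s,k\to j}\sigma^{m,j}]$ yields an identity of the schematic form
\[
Q^{(t),k\to j}\;=\;C^{(\cdot),k}\,\big((\Sigma^{(\cdot),j})^{-1}R^{(\cdot),j}\big)^{\!\uparrow}\;+\;\big(A^{(\cdot),k}\big)^{\!\top}C^{(\cdot),j},
\]
where ${}^{\uparrow}$ stands for zero-row padding (or, in the $k\geq j$ case, an index shift), reconciling the $0$-indexed matrices $C,Q,A$ with the $1$-indexed $\Sigma,R$, and where the correct horizons and the $\eps$-shift are inserted in the two cases $k<j$ and $k\geq j$.

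Given this, the rest is bookkeeping. Substituting the displayed formula into \eqref{eq:theta-defn}, the part of $Q^{(t),k\to j}_{\cdot,t}-Q^{(t),k\to j}_{\cdot,0:t-1}\psi^{t,j}$ coming from $(A^{(\cdot),k})^{\top}C^{(\cdot),j}$ equals $(A^{(\cdot),k})^{\top}\big(C^{(\cdot),j}_{\cdot,t}-C^{(\cdot),j}_{\cdot,0:t-1}\psi^{t,j}\big)$; by the definitions \eqref{eq:psi-defn}, \eqref{eq:b-defn} of $\psi^{t,j}$ and $b^{t,j}$ the bracketed vector vanishes in all coordinates except possibly the last, where it equals $(b^{t,j})^2$, and since $A^{(\cdot),k}$ is upper triangular with zero diagonal its last row is zero, so this part is $0$. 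The surviving part is annihilated on the left by the factor $(C^{(\cdot),k})^{-1}$ in \eqref{eq:theta-defn}, leaving: for $k<j$, $\theta^{t,j\to k}$ is a leading $0$ followed by $(\Sigma^{(t),j})^{-1}\big(R^{(t),j}_{1:t,t}-R^{(t),j}_{1:t,0:t-1}\psi^{t,j}\big)$; and for $k\geq j$, $\theta^{t,j\to k}=(\Sigma^{(t),j})^{-1}\big(R^{(t),j}_{1:t,t}-R^{(t),j}_{1:t,0:t-1}\psi^{t,j}\big)$ outright. Finally, the $A$-recursion \eqref{eq:finite_del-dim} at horizon $T=t$ rearranges (using $\psi^{t,j}=(C^{(t-1),j})^{-1}C^{(t),j}_{0:t-1,t}$) to
\[
(\Sigma^{(t),j})^{-1}\big(R^{(t),j}_{1:t,t}-R^{(t),j}_{1:t,0:t-1}\psi^{t,j}\big)\;=\;A^{(t),j}_{0:t-1,t}-A^{(t-1),j}\psi^{t,j},
\]
and combining this with the two cases just obtained produces exactly \eqref{eq:lemma7-k<j} and \eqref{eq:lemma7-k>j}.

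The substantive work is entirely in obtaining the $Q$ formula, and the main obstacle there is pure bookkeeping: one must carefully track (i) the time-index shift $\eps=\1\{k\geq j\}$, since a field contribution produced at pass $s$ by a block $k\geq j$ is felt at pass $s+1$ by block $j$; (ii) the clash between the $0$-indexed matrices $C,Q,A$ and the $1$-indexed $\Sigma,R$, which is what the zero-padding/shift accounts for; and (iii) the consistency of $A^{(s),k}$ with the top-left block of $A^{(t),k}$, i.e.\ the causality of \eqref{eq:finite_del-dim}, which is needed to state the matrix identity uniformly across horizons. The two uses of Stein's Lemma are legitimate because the update rule $c$ has a Lebesgue-null boundary and all Gaussians in play are non-degenerate along the solution trajectory by \Cref{prop:finite-delta-regularity}.
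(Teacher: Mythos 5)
Your proposal is correct and follows essentially the same route as the paper's proof: compute $Q^{(t),k\to j}$ via two applications of Stein's Lemma together with the chain rule (which produces the index shift $\1\{k\geq j\}$ and the zero-padding), observe that the $A^{(\cdot),k}$-dependent terms cancel inside $\theta^{t,j\to k}$, and identify the remainder with the rearranged $A$-recursion of \eqref{eq:finite_del-dim}. The only differences are presentational: the paper organizes the computation as an induction over $t$ and treats the cases $k<j$, $k>j$, $k=j$ separately, whereas you unify them with the shift/padding bookkeeping — the underlying calculation is the same.
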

\begin{proof}
The proof is by induction in $t.$ For $t=0,$ equations \eqref{eq:lemma7-k<j}, \eqref{eq:lemma7-k>j} are immediate since in that case $Q^{(t),k\to j}_{0,0}=0$ by independence of $\sigma^0$ and $h^{0,k\to j}$ and hence $\theta^{0,j\to k} = 0$. In the rest of the proof we assume equations \eqref{eq:lemma7-k<j}, \eqref{eq:lemma7-k>j} at time $t-1$ and verify them at time $t.$ We split into the cases $j<k,j>k$ and $j=k.$

    We begin with the case $j<k.$ From \eqref{eq:theta-defn}, we have
    \begin{align*}
        \theta^{t, j\to k} &= (C^{(t-1),k})^{-1} Q^{(t),k \rightarrow j}_{0:t-1,t}-(C^{(t-1),k})^{-1} Q^{(t-1),k \rightarrow j}(C^{(t-1),j})^{-1} C^{(t),j}_{0:t-1,t}.
    \end{align*}
    Now, since $j< k,$ by \eqref{eq:kj-contribution-effective-process}, we have 
        \begin{align*}
        \delta Q^{(t-1), k\to j} &= \E \left( g^{(t-1),k\to j} + \delta \sigma^{(t-1),j} \begin{bmatrix}
        0 \\
        A_{0:t-2,0:t-1}^{(t-1),k}
    \end{bmatrix}\right)^\top\sigma^{(t-1),j} \\
    &= \E ( g^{(t-1),k\to j})^\top \sigma^{(t-1),j} + \delta \begin{bmatrix}
        0 \\
        A_{0:t-2,0:t-1}^{(t-1),k}
    \end{bmatrix}^\top C^{(t-1),j}.
    \end{align*}
    Now, recalling that $\partial_v$ for a vector $v$ to denotes the (weak) gradient with respect to $v$, by Stein's Lemma (\Cref{lem:Stein}), we have
    \begin{align*}
    \delta Q^{(t-1), k\to j}&= \delta C^{(t-1),k}\E[\partial_{g^{(t-1), k\to j}} \sigma^{(t-1),j}]+ \delta \begin{bmatrix}
        0 \\
        A_{0:t-2,0:t-1}^{(t-1),k}
    \end{bmatrix}^\top C^{(t-1),j},
    \end{align*}
    where we have used that $c$, the update rule, has a distributional derivative since it's bounded (see \Cref{lem:distributional-derivative}). Now, for each $s,s'\in \{0,\dots,t-1\}$ and $r\in \{1,\dots,t\}$, note from \Cref{def:joint_eff_proc} that, for $G^{r,j}:= \sum_{k <j} g^{r,k \rightarrow j}+\sum_{k \geq j}g^{r-1,k \rightarrow j},$ since $k>j$ we have $\partial_{g^{s',k\to j}} G^{r,j} = \delta_{r,s'+1}$. Hence, using the chain rule, we get
    \begin{align*}
        \partial_{g^{s',k\to j}} \sigma^{s,j} &= \sum_r (\partial_{g^{s',k\to j}} G^{r,j} )\partial_{G^{r,j}} \sigma^{s,j},\\
        &= \partial_{G^{s'+1,j}} \sigma^{s,j},
    \end{align*}
    so recalling that $G^{(t),j}=(G^{1,j},\dots, G^{t,j})$, we have
    \begin{align*}
    \delta Q^{(t-1), k\to j}&= \delta C^{(t-1),k}\E[\partial_{G^{(t),j}} \sigma^{(t-1),j}]+ \delta \begin{bmatrix}
    0 \\
    A_{0:t-2,0:t-1}^{(t-1),k}
\end{bmatrix}^\top C^{(t-1),j}.
\end{align*}
Now again by Stein's lemma, we get
\begin{align*}
    \delta Q^{(t-1), k\to j}&=  \delta C^{(t-1),k}(\Sigma^{(t),j})^{-1}\E[(G^{(t),j})^\top \sigma^{(t-1),j}]+ \delta \begin{bmatrix}
    0 \\
    A_{0:t-2,0:t-1}^{(t-1),k}
\end{bmatrix}^\top C^{(t-1),j} \\
&=  \delta C^{(t-1),k}(\Sigma^{(t),j})^{-1}R^{(t),j}_{1:t, 0:t-1}+ \delta \begin{bmatrix}
    0 \\
    A_{0:t-2,0:t-1}^{(t-1),k}
\end{bmatrix}^\top C^{(t-1),j}.
    \end{align*}
Similarly, we have
\begin{align*}
    Q^{k\to j}_{0:t-1,t} &= C^{(t-1),k}(\Sigma^{(t),j})^{-1} R^{(t),j}_{1:t,t} + \begin{bmatrix}
        0 \\
        A^{(t-1),k}_{0:t-2,0:t-1}
    \end{bmatrix} C^{(t),j}_{0:t-1,t}.
\end{align*}
This gives
\begin{align*}
    \theta^{t, j\to k} &= (\Sigma^{(t),j})^{-1} R_{1:t,t}^{(t),j} - (\Sigma^{(t),j})^{-1} R_{1:t,0:t-1}^{(t),j} (C^{(t-1),j})^{-1}C_{0:t-1,t}^{(t),j},
\end{align*}
and hence
\begin{align*}
    A^{(t-1)}\psi^{t,j} + \theta^{t,j\to k} &= (\Sigma^{(t),j})^{-1}  R_{1:t,t}^{(t),j} +\left( A^{(t-1)}- (\Sigma^{(t),j})^{-1}R_{1:t,0:t-1}^{(t),j} \right)(C^{(t-1),j})^{-1}C_{0:t-1,t}^{(t),j} \\
    &= A^{(t),j}_{0:t-1,t},
\end{align*}
as desired. 

Next we consider the case $j>k.$ From \eqref{eq:theta-defn}, we have
\begin{align*}
    \theta^{t,j\rightarrow k}&= 
    (C^{(t),k})^{-1} Q^{(t),k \rightarrow j}_{0:t,t}-(C^{(t),k})^{-1} Q^{(t),k \rightarrow j}_{0:t,0:t-1}(C^{(t-1),j})^{-1} C^{(t),j}_{0:t-1,t}.
\end{align*}
Since $j> k,$ by the same arguments as above, we have
    \begin{align*}
        \delta Q_{0:t,0:t-1}^{k\to j} &= \E (g^{(t), k\to j} + \delta \sigma^{(t-1),j}A_{0:t-1,0:t}^{(t),k})^\top\sigma^{(t-1),j} \\
        &= \E (g^{(t), k\to j})^\top \sigma^{(t-1),j} + \delta (A^\top)^{(t),k}_{0:t, 0:t-1}C^{(t-1),j} \\
        &= \delta C^{(t),k} \E[\partial_{g^{(t), k\to j}} \sigma^{(t-1),j}]+ \delta (A^\top)^{(t),k}_{0:t, 0:t-1}C^{(t-1),j} \\
        &= \delta C^{(t),k} \begin{bmatrix}
            0 \\
            \E[\partial_{G^{(t),j}} \sigma^{(t-1),j}]
        \end{bmatrix}+ \delta (A^\top)^{(t),k}_{0:t, 0:t-1}C^{(t-1),j} \\
         &= \delta C^{(t),k} \begin{bmatrix}
            0 \\
            (\Sigma^{(t),j})^{-1}\E[(G^{(t),j})^\top\sigma^{(t-1),j}]
        \end{bmatrix}+ \delta (A^\top)^{(t),k}_{0:t, 0:t-1}C^{(t-1),j}\\
        &= \delta C^{(t),k} \begin{bmatrix}
            0 \\
            (\Sigma^{(t),j})^{-1}R_{1:t,0:t-1}^{(t),j}
        \end{bmatrix}+ \delta (A^\top)^{(t),k}_{0:t, 0:t-1}C^{(t-1),j}
    \end{align*}
    and similarly
    \begin{align*}
        Q_{0:t,t}^{(t),k\to j} &= \delta C^{(t),k} \begin{bmatrix}
            0 \\
            (\Sigma^{(t),j})^{-1}R_{1:t,t}^{(t),j}
        \end{bmatrix}+ \delta (A^\top)^{(t),k}_{0:t, 0:t-1}C^{(t),j}_{0:t-1,t}.
    \end{align*}
    Hence
    \begin{align*}
        \theta^{t,j\rightarrow k}
    &=  \begin{bmatrix}
            0 \\
            (\Sigma^{(t),j})^{-1}R_{1:t,t}^{(t),j}
        \end{bmatrix} - \begin{bmatrix}
            0 \\
            (\Sigma^{(t),j})^{-1}R_{1:t,0:t-1}^{(t),j}
        \end{bmatrix}(C^{(t-1),j})^{-1} C^{(t),j}_{0:t-1,t} \\
    &= \begin{bmatrix}
            0 \\
            (\Sigma^{(t),j})^{-1}R_{1:t,t}^{(t),j} - (\Sigma^{(t),j})^{-1}R_{1:t,0:t-1}^{(t),j}(C^{(t-1),j})^{-1} C^{(t),j}_{0:t-1,t}
        \end{bmatrix},
    \end{align*}
    and we conclude
    \begin{align*}
        \begin{bmatrix}
            0 \\A^{(t-1),j}\psi^{t,j}\end{bmatrix} +\theta^{t,j\rightarrow k} =  \begin{bmatrix}
            0\\ A_{0:t-1,t}^{(t),j}
            \end{bmatrix}, 
    \end{align*}
    as desired.

    The only remaining case is $j=k.$ Here we have 
    \begin{align*}
          \theta^{t,j\rightarrow k}&= 
       (C^{(t-1),k})^{-1} Q^{(t),k \rightarrow j}_{0:t-1,t}-(C^{(t-1),k})^{-1} Q^{(t-1),k \rightarrow j}(C^{(t-1),j})^{-1} C^{(t),j}_{0:t-1,t},
    \end{align*}
    but since $k\geq j,$ we get
    \begin{align*}
     Q^{(t-1), k\to j} &= 
 C^{(t-1),k}(\Sigma^{(t),j})^{-1}R^{(t),j}_{1:t, 0:t-1}+  (A^{(t-1),k})^\top C^{(t-1),j},
    \end{align*}
    and
    \begin{align*}
        Q^{k\to j}_{0:t-1,t} &= C^{(t-1),k}(\Sigma^{(t),j})^{-1} R^{(t),j}_{1:t,t} + (A^{(t-1),k})^\top C^{(t),j}_{0:t-1,t}.
    \end{align*}
    so 
    \begin{align*}
        \theta^{t,j\to k} &= (\Sigma^{(t),j})^{-1} R^{(t),j}_{1:t,t} -  (\Sigma^{(t),j})^{-1} R^{(t),j}_{1:t,0:t-1} (C^{(t-1),j})^{-1}C^{(t),j}_{0:t-1,t},
    \end{align*}
    so from here onward, the situation is the same as in the case $j<k.$ This concludes the proof.
\end{proof}

\bigskip

\subsection{Mapping to the Effective Process}\label{sec:mapdecoup}
In what follows, our goal will be to prove inductively that the quantities $ \hat{b}^{t,j},  \hat{\psi}^{t,j}, \hat{\theta}^{t, j \rightarrow k}$ from \Cref{lem:ind_finite_N} converge to deterministic the limits $b^{t,j}, \psi^{t,j},\theta^{t,j\to k}$ corresponding to the Effective Process from \Cref{def:joint_eff_proc}. Assuming such convergence, we show that the ``approximately decoupled'' process from \Cref{lem:ind_finite_N} converges to i.i.d draws from the Effective Process from \Cref{def:joint_eff_proc}. 

Consider an instance of Algorithm \ref{alg:t-passes} with $U^t_i \in [0,1]$ denoting the random seeds utilized in the update $c$. Recall the vectors $\bfxi^{t,j\to k}$ defined in \Cref{lem:ind_finite_N}, and let $\tilde{\bfh}^{t,j \rightarrow k} , \tilde{\bfh}^{t,j}\in \mathbb{R}^N$ and $\tilde{\bfsigma}^{t|j}$ be defined recursively as follows. For $t=0,$ we define
\begin{align}
\tilde{\bfh}^{0,j\rightarrow k} &= \bfxi^{0,j\rightarrow k}, &&\forall j,k \in [1/\delta]\label{eq:tilh-initial-condition}, \\
\tilde{\bfsigma}^0 &= \bfsigma^0.
\end{align}
Then, recalling the definitions of $\psi^{t,j}, \theta^{t,j\to k}$ and $b^{t,j}$ from equations \eqref{eq:psi-defn}-\eqref{eq:b-defn}, for $t \geq 1$, we inductively define
\begin{equation} \label{eq:tilh-recursion}
\begin{split}
    \tilde{\bfh}^{t,j\rightarrow k}  &= \begin{cases}
        b^{t,j} \sqrt{\delta}\bfxi^{t,j\rightarrow k}+ \tilde{\bfh}^{(t-1),j \rightarrow k} \psi^{t,j}+\delta\tilde{\bfsigma}^{(t)|k}\theta^{t,j\rightarrow k}, & k < j \\
         b^{t,j} \sqrt{\delta}\bfxi^{t,j\rightarrow k}+ \tilde{\bfh}^{(t-1),j \rightarrow k} \psi^{t,j}+\delta\tilde{\bfsigma}^{(t-1)|k}\theta^{t,j\rightarrow k} & k \geq j
    \end{cases}\\
    \tilde{\bfh}^{t,k} &= \sum_{j<k} \tilde{\bfh}^{t,j\rightarrow k} +  \sum_{j\geq k} \tilde{\bfh}^{t-1,j\rightarrow k}\\
    \tilde{\bfsigma}^{t}_i&= 
        c(U^t_i, \tilde{\bfh}^{t,k}_i)\qquad \qquad \forall i\in B(k).
\end{split}
\end{equation}
The following proposition shows that the process defined above is equal in distribution to i.i.d. draws from the lifted Effective Process.

\begin{prop}\label{prop:decop}
 
Suppose for each $k\in [1/\delta]$, the random variables $\{h^{(T),j\to k},h^{(T),j},\sigma^{t,j}\}_{j\in [1/\delta]}$ are given as in \Cref{def:joint_eff_proc}. Then the entries of $\{\tilde{\bfh}_i^{(T),j\to k},\tilde{\bfh}_i^{(T),j},\tilde{\bfsigma}_i^{t|j}\}_{j\in [1/\delta]}$ are i.i.d. across $i \in B(k)$ and we have
\begin{equation}\label{eq:tilh-iid-h}
    \{\tilde{\bfh}_i^{(T),j\to k},\tilde{\bfh}_i^{(T),j},\tilde{\bfsigma}_i^{t|j}\}_{j\in [1/\delta]}\stackrel{d}{=}  \{h^{(T),j\to k},h^{(T),j},\sigma^{t,j}\}_{j\in [1/\delta]} 
\end{equation}
for all $i \in B(k).$
\end{prop}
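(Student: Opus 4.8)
The plan is to prove \eqref{eq:tilh-iid-h} by induction on the lexicographically ordered pairs $(t,j)$, $t=0,1,\dots,T$, $j\in[1/\delta]$, using that both the recursion \eqref{eq:tilh-initial-condition}--\eqref{eq:tilh-recursion} defining $\tilde{\bfh},\tilde{\bfsigma}$ and the lifted Effective Process of \Cref{def:joint_eff_proc} are causal with respect to this ordering. The first step is to record the joint law of the inputs feeding the tilde recursion at a fixed target block $k$: by \Cref{lem:ind_finite_N} (and its proof), the vectors $\{\bfxi^{t,j\to k}\}_{t,j}$ are i.i.d.\ $\mathcal{N}(0,D^k)$ across $(t,j)$ and jointly independent of $\bfsigma^0$, while the seeds $\{U^t_i\}$ are i.i.d.\ $\Unif[0,1]$ and independent of everything else. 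Consequently the coordinatewise input tuples $\bigl(\bfsigma^0_i,\{\bfxi^{t,j\to k}_i\}_{t,j},\{U^t_i\}_t\bigr)$ are i.i.d.\ across $i\in B(k)$; since, by inspection of \eqref{eq:tilh-initial-condition}--\eqref{eq:tilh-recursion}, the restriction of $(\tilde{\bfh},\tilde{\bfsigma})$ to a coordinate $i\in B(k)$ is a deterministic function of this tuple alone, the i.i.d.-across-$B(k)$ part of the claim follows once the common marginal law has been identified.

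To identify that marginal law I would carry through the induction the structural refinement that, for each $i\in B(k)$, one may write $\tilde{\bfh}^{(t),j\to k}_i=\tilde{g}^{(t),j\to k}_i+(\text{a deterministic linear combination of the spin history})$, where the Gaussian part obeys the one-step rule $\tilde{g}^{t,j\to k}_i=b^{t,j}\sqrt{\delta}\,\bfxi^{t,j\to k}_i+\langle \tilde{g}^{(t-1),j\to k}_i,\psi^{t,j}\rangle$. A direct covariance computation then gives $\tilde{g}^{(t),j\to k}_i\sim\mathcal{N}(0,\delta C^{(t),j})$: the identity $C^{(t-1),j}\psi^{t,j}=C^{(t),j}_{0:t-1,t}$, immediate from \eqref{eq:psi-defn} and the invertibility of $C^{(t-1),j}$ along the solution trajectory (\Cref{prop:finite-delta-regularity}), yields the correct off-diagonal block, and $(b^{t,j})^2+(\psi^{t,j})^\top C^{(t-1),j}\psi^{t,j}=1=C^{(t),j}_{t,t}$, which follows from \eqref{eq:b-defn}, yields the correct new diagonal entry. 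This is exactly the covariance of the fresh Gaussian vector $g^{(t),j\to k}$ of \Cref{def:joint_eff_proc}, and independence across $j$ is inherited because distinct $j$ use disjoint noise vectors $\bfxi^{\cdot,j\to k}$; so the Gaussian parts of the two processes can be matched.

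It remains to match the spin parts. Substituting the inductive decomposition of $\tilde{\bfh}^{(t-1),j\to k}_i$ into the right-hand side of \eqref{eq:tilh-recursion}, the Gaussian terms reassemble into $\tilde{g}^{t,j\to k}_i$ as above, while the deterministic coefficient multiplying the spin history $\tilde{\bfsigma}^{(t)|k}_i$ (resp.\ $\tilde{\bfsigma}^{(t-1)|k}_i$, according as the source block $j$ follows or precedes the target block $k$) becomes precisely the left-hand side of \eqref{eq:lemma7-k<j} (resp.\ \eqref{eq:lemma7-k>j}). By \Cref{lem:spin-term-simplifications} this coefficient equals the corresponding right-hand side, which is exactly the spin coefficient appearing in \eqref{eq:kj-contribution-effective-process} for the matching case (the zero diagonal of $A$ disposes of the $k=j$ case). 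Feeding the resulting identity into $\tilde{\bfh}^{t,k}_i=\sum_{j<k}\tilde{\bfh}^{t,j\to k}_i+\sum_{j\ge k}\tilde{\bfh}^{t-1,j\to k}_i$ and $\tilde{\bfsigma}^t_i=c(U^t_i,\tilde{\bfh}^{t,k}_i)$ reproduces the recursion \eqref{eq:law-lifted} verbatim, with $U^t_i$ supplying the randomness of $c$, whose independence across $t$ matches the convention of \Cref{def:joint_eff_proc}. This closes the induction and yields \eqref{eq:tilh-iid-h}.

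I expect the main obstacle to be organizational rather than conceptual: one must keep careful track of the three cases -- according to whether the source block precedes, equals, or follows the target block $k$ -- and of the associated one-pass time shift (whether block $k$ has already been updated during pass $t$ at the moment the contribution from block $j$ is formed), so that in each case the inductive ansatz for $\tilde{\bfh}^{(t-1),j\to k}_i$ has exactly the shape required for \Cref{lem:spin-term-simplifications} to apply and for the dimensions of the $\psi^{t,j}$- and $\theta^{t,j\to k}$-contractions to agree. Once this bookkeeping is in place, the Gaussian covariance identity and the i.i.d.-across-$B(k)$ assertion are routine.
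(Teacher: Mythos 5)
Your proposal is correct and follows essentially the same route as the paper's proof: the same inductive decomposition $\tilde{\bfh}^{(t),j\to k}_i=\tilde{g}^{(t),j\to k}_i+(\text{spin-history term})$ with the one-step Gaussian recursion in $b^{t,j},\psi^{t,j}$, the same covariance identities identifying $\tilde{g}^{(t),j\to k}_i$ with $g^{(t),j\to k}$, the same appeal to \Cref{lem:spin-term-simplifications} (plus the zero diagonal of $A$ for $j=k$) to match the spin coefficients with \eqref{eq:kj-contribution-effective-process}, and the same observation that the coordinatewise i.i.d. structure follows from the i.i.d. $\bfxi^{t,j\to k}$ acting coordinatewise. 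No gaps; your explicit $\delta$-tracking in the covariance ($\delta C^{(t),j}$) is if anything slightly more careful than the paper's write-up.
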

\begin{proof}
The fact that $\{\tilde{\bfh}_i^{(T),j\to k},\tilde{\bfh}_i^{(T),j},\tilde{\bfsigma}^{t|j}\}_{j\in [1/\delta]}$ is i.i.d. across $i \in B(k)$ is immediate by induction: the initial condition is i.i.d. across $i \in B(k)$ and the recursion \eqref{eq:tilh-recursion} acts independently on each coordinate, since we know from \Cref{lem:ind_finite_N} that the $\bfxi^{t,j\to k}$ are i.i.d. 

Thus we focus on verifying \eqref{eq:tilh-iid-h}. It's clear that it suffices to show that $\tilde{\bfh}_i^{(T),j\to k} \stackrel{d}{=}  h^{(T),j\to k}$ for all $i \in B(k).$ Let $\bfg^{t,j \rightarrow k}$  be defined recursively by
    \begin{align*}
       \tilde{\bfg}^{t,j \rightarrow k} = \sqrt{\delta}\bfxi^{t,j\to k}b^{t,j} + \tilde{\bfg}^{(t-1),j \rightarrow k}\psi^{t,j}.
    \end{align*}
    We now prove simultaneously by induction over $t=1,\dots,T$ the following two claims, from which the lemma follows.
    \begin{itemize}
        \item For $k<j$, we have
        \begin{align*}
                \tilde{\bfh}^{t,j\rightarrow k}  &=
          \tilde{\bfg}^{t,j\to k}+\delta\tilde{\bfsigma}^{(t)|k}\begin{bmatrix}
              0 \\
              A^{(t),j}_{0:t-1,t}
          \end{bmatrix}, 
          \end{align*}
        for $k> j$, we have
          \begin{align*}
         \tilde{\bfh}^{t,j\rightarrow k}  &=\tilde{\bfg}^{t,j\to k}+\delta\tilde{\bfsigma}^{(t-1)|k} A^{(t),j}_{0:t-1,t},
        \end{align*}
        and for $j=k,$ we have 
              \begin{align*}
         \tilde{\bfh}^{t,j\rightarrow k}  &=\tilde{\bfg}^{t,j\to k}+\delta\tilde{\bfsigma}^{(t)|k} A^{(t),j}_{0:t,t}.
        \end{align*}
        \item The vector $\tilde{\bfg}^{t,j\to k}_i$ is i.i.d. across $i\in B(k)$ and equal in distribution to $g^{t,j\to k}$, as defined in \Cref{def:joint_eff_proc}.
    \end{itemize}
    The case $t=0$ is immediate from \eqref{eq:tilh-initial-condition}.
    Assume inductively that this holds for all times less than $t.$ Note that by induction, for $k<j,$ we have
\begin{align*}
    \tilde{\bfh}^{t,j\rightarrow k} &= b^{t,j}\sqrt{\delta}\bfxi^{t,j\to k} + \tilde{\bfg}^{(t-1),j\to k} \psi^{t,j} + \delta  \tilde{\bfsigma}^{(t)|k}\left( \begin{bmatrix}
    0 \\
    A^{(t-1),j}\psi^{t,j} 
\end{bmatrix} + \theta^{t,j\to k}\right) \\
    &= \tilde{\bfg}^{t,j\to k} +  \delta\tilde{\bfsigma}^{(t)|k}\left( \begin{bmatrix}
    0 \\
    A^{(t-1),j}\psi^{t,j} 
\end{bmatrix} + \theta^{t,j\to k}\right) \\
&= \tilde{\bfg}^{t,j\to k} +  \delta\tilde{\bfsigma}^{(t)|k}\begin{bmatrix}
    0 \\
    A^{(t),j}_{0:t-1,t}
\end{bmatrix},
\end{align*}
where the last equality is from \Cref{lem:spin-term-simplifications}. For $k> j,$ we similarly have
\begin{align*}
    \tilde{\bfh}^{t,j\rightarrow k} &= \tilde{\bfg}^{t,j\to k} +  \delta \tilde{\bfsigma}^{(t-1)|k}\left( 
    A^{(t-1),j}\psi^{t,j} 
 + \theta^{t,j\to k}\right) \\
&= \tilde{\bfg}^{t,j\to k} +  \delta\tilde{\bfsigma}^{(t-1)|k}
    A^{(t),j}_{0:t-1,t},
\end{align*}
where we again have used \Cref{lem:spin-term-simplifications}. Finally for $j=k,$ we have 
\begin{align*}
    \tilde{\bfh}^{t,j\rightarrow k} &=  \tilde{\bfg}^{t,j\to k} +  \delta \tilde{\bfsigma}^{(t-1)|k}\left( 
    A^{(t-1),j}\psi^{t,j} 
 + \theta^{t,j\to k}\right) \\
&= \tilde{\bfg}^{t,j\to k} +  \delta\tilde{\bfsigma}^{(t-1)|k}
    A^{(t),j}_{0:t-1,t} \\
    &= \tilde{\bfg}^{t,j\to k} +  \delta\tilde{\bfsigma}^{(t)|k}
    A^{(t),j}_{0:t,t},
\end{align*}
where in the last step we have used that $A$ has zeros on the diagonal. To conclude the inductive step, it remains to check that, for each $i\in B(k),$ $\tilde{\bfg}^{(t),j \rightarrow k}_i $ is equal in distribution to $g^{(t),j \rightarrow k}.$ The joint gaussianity of $\tilde{\bfg}_i^{(t), j\to k}$ is immediate from the definition, so it suffices to check that the covariance of $\tilde{\bfg}^{(t),j \rightarrow k}_i$ is $C^{(t),j}$.  By inductive hypothesis, for $s=0,\dots,t-1,$ we have
    \begin{align*}
        \E[\tilde{\bfg}^{s,j \rightarrow k}_i \tilde{\bfg}^{t,j \rightarrow k}_i] &= \E [\tilde{\bfg}_i^{s,j\to k} \tilde{\bfg}_i^{(t-1),j\to k}](C^{(t-1),j})^{-1} C_{0:t-1,t}^{(t),j} \\
        &= C_{s, 0:t-1}^{(t),j} (C^{(t-1),j})^{-1}C_{0:t-1,t}^{(t),j} \\
        &= C_{s,t}^{(t),j},
    \end{align*}
    and moreover 
    \begin{align*}
        \Var(\tilde{\bfg}_i^{t,j\to k}) &= (b^{t,j})^2 +  C_{t, 0:t-1}^{(t),j}(C^{(t-1),j})^{-1}\E[(\tilde{\bfg}_i^{(t-1),j\to k})^\top\tilde{\bfg}_i^{(t-1),j\to k}](C^{(t-1),j})^{-1}C_{0:t-1, t}^{(t),j} \\
        &= (b^{t,j})^2 +  C_{t, 0:t-1}^{(t),j}(C^{(t-1),j})^{-1}C_{0:t-1, t}^{(t),j} \\
        &= 1,
    \end{align*}
    as desired. This completes the proof.
\end{proof}

\subsection{Concentration of order parameters}\label{sec:mainthm}

Next, we present our main technical result which couples $ \bfh^{(t),j \rightarrow k},\bfh^{(t),j}$ with independent draws from the Effective Process defined by \Cref{def:joint_eff_proc}. The purpose of such a coupling is two-fold. First, it implies the asymptotic equivalence of the empirical measures, and second, the concentration of all relevant order parameters.

\begin{thm}\label{thm:finite_del_ind}
 For $t \in \mathbb{N}, j, k \in [1/\delta]$, let $\tilde{\bfh}^{t,j\rightarrow k}, \tilde{\bfh}^{(t)},\tilde{\bfsigma}^{t|j}:= D^j\tilde{\bfsigma}^{t}$ be defined  as in \Cref{prop:decop}. Then, for any $\delta \in (0,1]$, there exist constants $B_\ell>0$ such that the following holds for all $t \in \mathbb{N}$, $j \in [1/\delta]$:
    \begin{enumerate}
    \item
We have
\begin{equation}\label{eq:h_eq}
       \bfh^{(t),j}\stackrel{P}{\simeq}\tilde{\bfh}^{(t),j}
    \end{equation}
    and
    \begin{align*}
        \frac{1}{\delta N} \norm{{\bfsigma^{t|j}}-\tilde{\bfsigma}^{t|j}}_0 \xrightarrow[N \rightarrow \infty]{P} 0,
    \end{align*}
    where $\|\|_0$ denotes the Hamming distance.
\item 
For all $\ell\geq 1$, with high probability,
\begin{align*}
    \frac{1}{N}\norm{\bfh^{t,j}}_\ell^\ell \leq B_\ell.
\end{align*}
\item 
     Recall $C^{(t),j},Q^{(t), j \rightarrow k}, \hat{C}^{(t),j},\hat{Q}^{(t)} $, defined by equations \eqref{eq:finite_del-dim}, 
     \eqref{eq:defQt}, \eqref{eq:hatc},   \eqref{eq:hatq}, respectively. We have
    \begin{align*}
        \norm{\hat{C}^{(t),j}-C^{(t),j}}_F \xrightarrow[N \rightarrow \infty]{P} 0
    \end{align*}
    and, for all $ k < j$,
    \begin{align*}
 \norm{\hat{Q}^{(t), k \rightarrow j}- Q^{(t), k \rightarrow j}}_F \xrightarrow[N \rightarrow \infty]{P} 0,
    \end{align*}
    while for all $k \geq j$,
    \begin{align*}
 \norm{\hat{Q}^{(t),k \rightarrow j}_{0:t-1,0:t}- Q^{(t),k \rightarrow j}_{0:t-1,0:t}}_F \xrightarrow[N \rightarrow \infty]{P} 0.
    \end{align*}
    As a consequence, $\hat{b},\hat{\psi}$ and $\hat{\theta}$ from \Cref{lem:ind_finite_N} converge to $b,\psi$ and $\theta$ from \Cref{sec:field-contributions}, respectively.
\item 
For all $k \in [1/\delta]$,
\begin{align*}
       \bfh^{(t),j \rightarrow k} \stackrel{P}{\simeq}   \tilde{\bfh}^{(t),j \rightarrow k},
    \end{align*}
Furthermore, for all $\ell \geq 1$, with high probability,
\begin{align*}
    \frac{1}{N}\norm{ \bfh^{t,j \rightarrow k} }_\ell^\ell \leq B_\ell.
\end{align*}
\end{enumerate}
\end{thm}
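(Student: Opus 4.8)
The plan is to prove the four claims simultaneously by induction over the pairs $(t,j)$ in lexicographic order. At the step $(t,j)$ the inductive hypothesis will supply, for every strictly earlier $(s,\ell)$: the field couplings $\bfh^{(s),\ell\to k}\pequiv\tilde{\bfh}^{(s),\ell\to k}$ and $\bfh^{(s),\ell}\pequiv\tilde{\bfh}^{(s),\ell}$ for all $k$; the spin couplings $\tfrac{1}{\delta N}\|\bfsigma^{s|\ell}-\tilde{\bfsigma}^{s|\ell}\|_0\to 0$; the moment bounds $\tfrac1N\|\bfh^{s,\ell}\|_\ell^\ell,\tfrac1N\|\bfh^{s,\ell\to k}\|_\ell^\ell\le B_\ell$ w.h.p.; and the convergences $\hat{C}^{(s),\ell}\to C^{(s),\ell}$, $\hat{Q}^{(s),k\to\ell}\to Q^{(s),k\to\ell}$ (in the appropriate sub-block when $k\ge\ell$), hence $\hat b^{s,\ell}\to b^{s,\ell}$, $\hat\psi^{s,\ell}\to\psi^{s,\ell}$, $\hat\theta^{s,\ell\to k}\to\theta^{s,\ell\to k}$. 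Within the step $(t,j)$ I will process quantities in the order: first the field $\bfh^{t,j}$ via \eqref{eq:field_decomp} (which only invokes contributions from strictly earlier steps), then the spins $\bfsigma^{t|j}=c(\bfh^{t,j})$, then $\hat C^{(t),j}$ and the coefficients $\hat b^{t,j},\hat\psi^{t,j}$, then $\hat Q^{(t),k\to j}$ (or $\hat Q^{(t-1),k\to j}$ for $k\ge j$) and $\hat\theta^{t,j\to k}$, and finally the contributions $\bfh^{t,j\to k}$ themselves via \Cref{lem:ind_finite_N}. Checking that this ordering is well-founded, i.e.\ that every ingredient of a given quantity has already been treated, is routine bookkeeping based on \eqref{eq:field_decomp} and the definitions \eqref{eq:hatc}--\eqref{eq:hatq}; throughout, \Cref{prop:finite-delta-regularity} guarantees that the (pseudo)inverses occurring in $\hat\psi,\hat\theta$ and in the Effective Process are non-degenerate along the trajectory, so one may work with genuine inverses.

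The core of the inductive step is the propagation of the coupling through \Cref{lem:ind_finite_N}. Its recursion \eqref{eq:rechtj} for $\bfh^{t,j\to k}$ and the recursion \eqref{eq:tilh-recursion} for $\tilde{\bfh}^{t,j\to k}$ are built from the \emph{same} Gaussian vectors $\bfxi^{t,j\to k}$ and have identical algebraic shape, the only differences being that $\bfh^{t,j\to k}$ uses the empirical coefficients $\hat b^{t,j},\hat\psi^{t,j},\hat\theta^{t,j\to k}$, the true past iterates $\bfh^{(t-1),j\to k},\bfsigma^{(t)|k}$, and the extra term $\bfsigma^{(t)|k}\Delta^{t,j\to k}$. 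I would compare term by term: $(\hat b^{t,j}-b^{t,j})\sqrt\delta\,\bfxi^{t,j\to k}\pequiv 0$ because $\tfrac1{\sqrt N}\|\bfxi^{t,j\to k}\|$ is bounded and $\hat b^{t,j}\to b^{t,j}$; $\bfh^{(t-1),j\to k}\hat\psi^{t,j}\pequiv\tilde{\bfh}^{(t-1),j\to k}\psi^{t,j}$ and $\delta\,\bfsigma^{(t)|k}\hat\theta^{t,j\to k}\pequiv\delta\,\tilde{\bfsigma}^{(t)|k}\theta^{t,j\to k}$ by \Cref{lem:add_mult}, the inductive hypothesis, and the fact that Hamming closeness of bounded spin vectors gives $\bfsigma^{(t)|k}\pequiv\tilde{\bfsigma}^{(t)|k}$; and $\bfsigma^{(t)|k}\Delta^{t,j\to k}\pequiv 0$ since $\|\Delta^{t,j\to k}\|\to 0$ while $\tfrac1{\sqrt N}\|\bfsigma^{(t)|k}\|_\op\le\sqrt{(t+1)\delta}$. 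Adding the pieces yields claim 4's coupling, and summing the contributions per \eqref{eq:field_decomp} gives $\bfh^{(t),j}\pequiv\tilde{\bfh}^{(t),j}$, the first half of claim 1. The spin coupling $\tfrac1{\delta N}\|\bfsigma^{t|j}-\tilde{\bfsigma}^{t|j}\|_0\to 0$ then follows from \Cref{lem:spins}: the seeds $U^t_i$ are shared by construction, $\bfsigma^0=\tilde{\bfsigma}^0$, $\tfrac1N\|\bfh^{t,j}-\tilde{\bfh}^{t,j}\|_2^2\to 0$ is the coupling just proved, and $\Sigma^{(T),j}$ is non-singular by \Cref{prop:finite-delta-regularity}.

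To upgrade the couplings to concentration of the order parameters I would invoke \Cref{prop:decop}, which identifies the rows of $(\tilde{\bfh}^{(T),j\to k},\tilde{\bfh}^{(T),j},\tilde{\bfsigma}^{t|j})_j$ as i.i.d.\ samples from the lifted Effective Process of \Cref{def:joint_eff_proc}; by the law of large numbers $\tfrac1{\delta N}(\tilde{\bfsigma}^{(t)|j})^\top\tilde{\bfsigma}^{(t)|j}\to C^{(t),j}$ and $\tfrac1{\delta^2 N}(\tilde{\bfh}^{(t),k\to j})^\top\tilde{\bfsigma}^{(t)|j}\to Q^{(t),k\to j}$ (using \eqref{eq:defQt} and \Cref{lem:spin-term-simplifications}). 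Transferring to $\hat{C}^{(t),j},\hat{Q}^{(t),k\to j}$ uses the spin coupling (boundedness of spins turns Hamming closeness into control of the difference of the Gram matrices) together with the field coupling and the moment bounds to handle the mixed terms in $\hat Q$ by Cauchy--Schwarz, exactly in the manner of \Cref{lem:p-conv-empirical-conv}; then $\hat b^{t,j}\to b^{t,j}$, $\hat\psi^{t,j}\to\psi^{t,j}$, $\hat\theta^{t,j\to k}\to\theta^{t,j\to k}$ by continuity of the defining algebraic maps and \Cref{prop:finite-delta-regularity}. Finally, the moment bounds of claims 2 and 4 are read off directly from \eqref{eq:rechtj}: $\bfxi^{t,j\to k}$ is Gaussian, the spin vectors have entries in $\{-1,0,1\}$, and the coefficients are bounded with high probability by the convergences just established (their deterministic limits being uniformly bounded over $t\le T$ by \Cref{prop:finite-delta-regularity}), so $\tfrac1N\|\bfh^{t,j\to k}\|_\ell^\ell$ is bounded by an affine function of $\tfrac1N\|\bfh^{(t-1),j\to k}\|_\ell^\ell$ with constants independent of $N$, and induction on $t$ closes the bound; summing over blocks gives the bound for $\bfh^{t,j}$.

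The main obstacle is the circularity inherent in the inductive step: proving the couplings at $(t,j)$ needs the empirical coefficients $\hat b^{t,j},\hat\psi^{t,j},\hat\theta^{t,j\to k}$ to be close to their deterministic limits, but those coefficients are functions of $\hat C^{(t),j},\hat Q^{(t),k\to j}$, which are themselves built from the very spins and fields whose coupling is the object of the proof. Threading the induction cleanly through this loop — so that each quantity is controlled before it is used — is the delicate point, and is the reason the argument must be organized around the within-step ordering (field, spins, Gram matrices, coefficients) rather than as a single monolithic statement. A secondary nuisance is the need to move repeatedly between the $L^2$-type equivalence $\pequiv$, Hamming distance on spins, and weak convergence of empirical measures, and in particular to keep the higher-moment bounds uniform in $(t,j)$: this is why those bounds are propagated through the explicit recursion \eqref{eq:rechtj} rather than deduced from $\pequiv$, which by itself does not transfer moments of order larger than $2$.
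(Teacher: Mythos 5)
Your proposal is correct and follows essentially the same route as the paper: a lexicographic induction over $(t,j)$ with the same within-step ordering (field via \eqref{eq:field_decomp}, spins via \Cref{lem:spins}, concentration of $\hat{C},\hat{Q}$ via \Cref{prop:decop} and the law of large numbers, then the contributions $\bfh^{t,j\to k}$ via \Cref{lem:ind_finite_N} and \Cref{lem:add_mult}, with \Cref{prop:finite-delta-regularity} supplying non-degeneracy and the moment bounds propagated through the explicit recursion). The circularity you flag is resolved in the paper exactly as you propose, by controlling the order parameters at step $(t,j)$ before they are used to produce the contributions emitted at that step.
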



\subsection{Proof of \Cref{thm:finite_del_ind}}\label{sec:main_induct}

In what follows, it will be convenient to define the following sequence of parameters:
\begin{align}
    \tilde{C}^{(t),j}&= \frac{1}{\delta N} (\tilde{\bfsigma}^{(t)\vert j})^\top \tilde{\bfsigma}^{(t)\vert j} \in \mathbb{R}^{(t+1)\times (t+1)}\label{eq:tildec} \\
    \tilde{Q}^{(t),k \rightarrow j} &= \frac{1}{\delta^2 N} (\tilde{\bfh}^{(t),k \rightarrow j})^\top \tilde{\bfsigma}^{(t)\vert j}\in \mathbb{R}^{(t+1)\times (t+1)}.\label{eq:tildeq}
\end{align}

\subsubsection{Initial conditions $t=0$}

Before proceeding with the first pass $t=1$, we show the convergence of order-parameters at $t=0$. Note that for all $j \in [1/\delta]$, we have trivially that $\hat{C}^{(0),j}=1$. 
 We next consider the terms $  \hat{Q}^{(0),k \rightarrow j}$ for $k<j$. Since $\tilde{\bfsigma}^{0|j} ={\bfsigma}^{0|j}$, by Cauchy-Schwarz we have
\begin{align*}
    |\hat{Q}^{(0),k \rightarrow j}-\tilde{Q}^{(0),k\rightarrow j}| \leq \frac{1}{\delta^2 N}\norm{\bfh^{0,k\rightarrow j}-\tilde{\bfh}^{0,k\rightarrow j}}_2 \norm{\bfsigma^{0|j}}_2.
\end{align*}
Now, \Cref{lem:ind_finite_N} implies that $\frac{1}{\sqrt{N}}\norm{\bfh^{0,j\rightarrow k}-\tilde{\bfh}^{0,j\rightarrow k}}_2 \xrightarrow[N \rightarrow \infty]{P} 0$ while $\frac{1}{\sqrt{N}} \norm{\bfsigma^{0|k}}_2$ is bounded by definition. Hence
\begin{equation}\label{eq:0coup}
    |\hat{Q}^{(0),k \rightarrow j} - \tilde{Q}^{(0),k \rightarrow j}|\xrightarrow[N \rightarrow \infty]{P} 0,
\end{equation}
and by the law of large numbers, we have
\begin{equation}\label{eq:0conc}
    \tilde{Q}^{(0),k \rightarrow j} \xrightarrow[N \rightarrow \infty]{P} Q^{(0), k \rightarrow j} = 0,
\end{equation}
where $Q^{(0), k \rightarrow j} = 0$ follows by the definition in \eqref{eq:defQt}. Equations \eqref{eq:0coup}, \eqref{eq:0conc} together imply that
\begin{equation}\label{eq:q_conv_init}
     \hat{Q}^{(0),k \rightarrow j} \xrightarrow[N \rightarrow \infty]{P} Q^{(0), k \rightarrow j} = 0.
\end{equation}

\subsubsection{Base case $t=1,j=1$}

We begin by proving claims $(1)$ and $(2)$ of the inductive statement. By \eqref{eq:rech0j} in \Cref{lem:ind_finite_N}, we have that
\begin{equation}\label{eq:field_app}
\bfh^{0,k\rightarrow 1} = \sqrt{\delta}\bfxi^{0,k\rightarrow 1} + \bfsigma^{0|1}\Delta^{0,k\to 1}
\end{equation}
where $\Delta^{0,k \rightarrow 1} \in \mathbb{R}$ is defined as in \Cref{lem:ind_finite_N}. The field decomposition in \eqref{eq:field_decomp} gives
\begin{align*}
    \bfh^{1,1} = \sum_{k \in [1/\delta]} \bfh^{0,k\rightarrow 1},
\end{align*}
and from \Cref{prop:decop}, we have
\begin{align*}
    \tilde{\bfh}^{1,1} = \sum_{k \in [1/\delta]} \tilde{\bfh}^{0,k\rightarrow 1}.
\end{align*}
Combining the above yields
    \begin{align*}
    \bfh^{1,1}=\tilde{\bfh}^{1,1} + \bfsigma^{0|1}\sum_{k\in [1/\delta]}\Delta^{0,k \rightarrow 1}.
\end{align*}
Since $\frac{1}{\sqrt{N}}\norm{\bfsigma^{0|1}}$ is bounded while $\Delta^{0,k \rightarrow 1} \xrightarrow[N \rightarrow \infty]{P} 0$, we obtain
\begin{equation}\label{eq:eqh0}
     \bfh^{1,1} \stackrel{P}{\simeq} \tilde{\bfh}^{1,1}.
\end{equation}
Then \Cref{lem:spins} implies that
\begin{equation}\label{eq:spin0}
        \frac{1}{\delta N} \norm{\bfsigma^{1|1}-\tilde{\bfsigma}^{1|1}}_0 \xrightarrow[N \rightarrow \infty]{P} 0.
    \end{equation}
Similarly, for any $\ell \in \mathbb{N}$,
\begin{equation}\label{eq:mom_bound_0}
    \frac{1}{N}\norm{\bfh^{1,1}}_\ell^\ell \leq  \frac{2^{\ell-1}}{N }\norm{\tilde{\bfh}^{1,1}}_\ell^\ell+ \frac{2^{\ell-1}}{N}\norm{\bfsigma^{0|1}}_\ell^\ell\left(\sum_{k\in [1/\delta]}\Delta^{0,k\rightarrow 1}\right)^\ell,
\end{equation}
so boundedness of entries of $\bfsigma^{0|1}$ and the fact that $\norm{\Delta^{0,k \rightarrow 1}} \xrightarrow[N \rightarrow \infty]{P} 0$ imply that $\frac{1}{N}\norm{\bfh^{1,1}}_\ell^\ell $ is with high probability bounded by a constant by the law of large numbers. This proves claims $(1)$ and $(2)$ of the inductive statement.

We next move to claim $(3)$, i.e., convergence of $\hat{C}^{(1),1}$. The entries $\hat{C}^{(1),1}_{0,0}$ and $\hat{C}^{(1),1}_{1,1}$ are fixed to $1$ by definition. Moreover, we have by Cauchy-Schwarz
\begin{align*}
|\hat{C}^{(1),1}_{0,1} - \tilde{C}^{(1),1}_{0,1}| \leq \frac{1}{\delta N}\norm{\bfsigma^{1|1}-\tilde{\bfsigma}^{1|1}}_2 \norm{\bfsigma^{0|1}}_2, 
\end{align*}
so \eqref{eq:spin0} and the concentration of $\tilde{C}^{(1),1}$ to $C^{(1),1}$ then implies
\begin{align*}
        \norm{\hat{C}^{(1),1}-C^{(1),1}}_F \xrightarrow[N \rightarrow \infty]{P} 0.
\end{align*}
Then for $k\geq 1,$ we have
\begin{align*}
    |\hat{Q}^{(1),k \rightarrow 1}_{0,1}- \tilde{Q}^{(1),k \rightarrow 1}_{0,1}| &\leq \frac{1}{\delta^2 N} \norm{\tilde{\bfsigma}^{1|1}}_2 \norm{\tilde{\bfh}^{0, k\to 1} - \hat{\bfh}^{0, k\to 1}} + \frac{1}{\delta^2 N} \norm{\tilde{\bfsigma}^{1|1} - \hat{\bfsigma}^{1|1}}_2 \norm{\hat{\bfh}^{0, k\to 1}}  \\
    &\xrightarrow[N \rightarrow \infty]{P} 0
\end{align*}
by \Cref{lem:ind_finite_N} and \eqref{eq:spin0}. Moreover, 
\[
|\hat{Q}^{(1),k \rightarrow 1}_{0,0}- \tilde{Q}^{(1),k \rightarrow 1}_{0,0}| \xrightarrow[N \rightarrow \infty]{P} 0
\]
by induction. Finally, to prove claim $(4)$, recall that, $\hat{b}^{1,0},\hat{\psi}^{1,0}, \hat{\theta}^{1,0 \rightarrow k}$ are given by
\begin{align*}
    \hat{b}^{1,1} &= \sqrt{1- \hat{C}^{(1),1}_{0,1}(C^{(0),1})^{-1} C^{(1),1}_{0,1}}\\
   \hat{\psi}^{1,1}&=(\hat{C}^{(0),1})^{-1} \hat{C}^{(1),1}_{0,1}\\
   \hat{\theta}^{1,1\rightarrow k}&= 
(\hat{C}^{(1),1})^{\dagger} \hat{Q}^{(0),1 \rightarrow 0}-(\hat{C}^{(1),1})^{\dagger} \hat{Q}^{(0),k \rightarrow 1}(\hat{C}^{(0),1})^{\dagger} \hat{C}^{(0),1}_{0,1}.
\end{align*}
By claim $(3)$, \eqref{eq:q_conv_init}  and \Cref{lem:add_mult}, we obtain:
\begin{equation}\label{eq:conv0param}
    \begin{split}
    \hat{b}^{1,1} &\xrightarrow[N \rightarrow \infty]{P} b^{1,1}\\
    \hat{\psi}^{1,1}  &\xrightarrow[N \rightarrow \infty]{P} \psi^{1,1}\\
    \hat{\theta}^{1,1 \rightarrow k} &\xrightarrow[N \rightarrow \infty]{P}\theta^{1,1 \rightarrow k}.
    \end{split}
\end{equation}
Now, \Cref{lem:decomp} implies that
\begin{equation}\label{eq:rechtj0}
\bfh^{1,1\rightarrow k} 
=\hat{b}^{1,1}\sqrt{\delta}\bfxi^{1,1\rightarrow k}+\bfh^{(0), 1 \rightarrow k}\hat{\psi}^{1,0}+\delta \bfsigma^{(0)|k} \hat{\theta}^{1,1\to k}+\bfsigma^{(0)|k}(\Delta^{1, 1 \rightarrow k}).
\end{equation}
Therefore, the convergence of $\hat{b}^{1,1},\hat{\psi}^{1,1},\hat{\theta}^{1,1\to k}$ in \eqref{eq:conv0param} along with $\bfh^{(0), 1 \rightarrow k}  \stackrel{P}{\simeq}\tilde{\bfh}^{(0), 1 \rightarrow k}$ and $\bfsigma^{(0)|k}=\tilde{\bfsigma}^{(0)|k}$ imply
\begin{align*}
\bfh^{1,1\rightarrow k} \stackrel{P}{\simeq} \tilde{\bfh}^{1,1\rightarrow k}. 
\end{align*}
To additionally bound the higher-moments of $\bfh^{1,1\rightarrow k}$, we note that by \eqref{eq:rech0j} $\bfh^{1,1\rightarrow k}-\tilde{\bfh}^{1,1\rightarrow k}$ can be explicitly expressed as
\begin{equation}\label{eq:h0decomp}
\begin{split}
\bfh^{1,1\rightarrow k}-\tilde{\bfh}^{1,1\rightarrow k} &= (\hat{b}^{1,1}-b^{1,1})\sqrt{\delta}\bfxi^{1,1\rightarrow k}+\bfh^{(0), 1 \rightarrow k}(\hat{\psi}^{1,1}-\psi^{1,1}) + (\bfh^{(0), 1 \rightarrow k}-\tilde{\bfh}^{(0), 1 \rightarrow k})\psi^{1,1}\\ 
&\qquad +\delta \bfsigma^{(0)|k} (\hat{\theta}^{1,1}-\theta^{1,1})+\bfsigma^{(0)|k}(\Delta^{1, 1 \rightarrow k}).
\end{split}
\end{equation}
Similar to \eqref{eq:mom_bound_0}, applying the convexity of $x \rightarrow x^\ell$ to the above decomposition yields bounds on the higher-moments of $\bfh^{1,1\rightarrow k}$.

\subsubsection{Induction}
Suppose claims $(1),(2),(3),(4)$ hold for all $(\tilde{t},\tilde{j}) < (t,j)$ under the lexicographical ordering. By inductive claim $(4)$, we have for all $k<j$,
\begin{equation}\label{eq:rechtkj}
\bfh^{t,k\rightarrow j}  \stackrel{P}{\simeq} \tilde{\bfh}^{t,k\rightarrow j}
\end{equation}
and for $k \geq j$,
\begin{equation}\label{eq:rechtjk}
    \bfh^{t-1,k\rightarrow j} 
    \stackrel{P}{\simeq}\tilde{\bfh}^{t-1,k\rightarrow j}.
\end{equation}
Recall the field decomposition given by \eqref{eq:field_decomp},
\begin{align*}
    \bfh^{t,j} = \sum_{k<j} \bfh^{t,k \rightarrow j}  + \sum_{k\geq j} \bfh^{t-1,k \rightarrow j}.
\end{align*}
Substituting equations \eqref{eq:rechtkj}, \eqref{eq:rechtjk} into the above field decomposition yields
\begin{align*}
  \bfh^{t,j} \stackrel{P}{\simeq} \tilde{\bfh}^{t,j}.
\end{align*}
\Cref{lem:spins} then implies that
\begin{equation}\label{eq:spin0j}
        \frac{1}{\delta N} \norm{\bfsigma^{t|j}-\tilde{\bfsigma}^{t|j}}_0 \xrightarrow[N \rightarrow \infty]{P} 0,
    \end{equation}
proving claim $(1)$. Claim $(2)$ follows similarly, noting that
\begin{align*}
     \frac{1}{N}\norm{\bfh^{t,j}}_\ell^\ell \leq \frac{1}{N\delta^{\ell-1}}  \sum_{k<j} \norm{\bfh^{t,k \rightarrow j}}_\ell^\ell + \frac{1}{N\delta^{\ell-1}} \sum_{k\geq j} \norm{\bfh^{t-1,k \rightarrow j}}_\ell^\ell.
\end{align*}
By Cauchy-Schwarz inequality, we have, for any $s < t$,
\begin{equation}\label{eq:hatcconv}
|\hat{C}^{(t),j}_{t,s} - \tilde{C}^{(t),j}_{t,s}| \leq \frac{1}{\delta N}\norm{\bfsigma^{s|j}-\tilde{\bfsigma}^{s|j}}_2 \norm{\bfsigma^{t|j}}+ \frac{1}{\delta N}\norm{\bfsigma^{t|j}-\tilde{\bfsigma}^{t|j}}_2 \norm{\bfsigma^{s|j}}.
\end{equation}
By \eqref{eq:spin0j} and induction claim $(1)$, we obtain that the right hand side converges to $0$ in probability as $N \rightarrow \infty$. By the law of large numbers, $\tilde{C}^{(t),j}_{t,s} \xrightarrow[N \rightarrow \infty]{P} C^{(t),j}_{s,t}$. Combined with \eqref{eq:hatcconv}, and an identical argument for $\hat{Q}$, we obtain claim $(3)$. To obtain claim $(4)$, recall that, according to \Cref{lem:ind_finite_N}, we have
\begin{equation}
\bfh^{t,j\rightarrow k}  = \begin{cases}
        \hat{b}^{t,j} \sqrt{\delta}\bfxi^{t,j\rightarrow k}+ \bfh^{(t-1),j \rightarrow k} \hat{\psi}^{t,j}+\delta\bfsigma^{(t)|k}\hat{\theta}^{t,j\rightarrow k} +   \bfsigma^{(t)|k}\Delta^{t,j \rightarrow k}, & k < j \\
         \hat{b}^{t,j} \sqrt{\delta}\bfxi^{t,j\rightarrow k}+ \bfh^{(t-1),j \rightarrow k} \hat{\psi}^{t,j}+\delta\bfsigma^{(t-1)|k}\hat{\theta}^{t,j\rightarrow k}+   \bfsigma^{(t-1)|k}\Delta^{t,j \rightarrow k}, & k \geq j.
    \end{cases}
\end{equation}
Claim $(3)$ further implies
\begin{align*}
\hat{C}^{(t),j}\xrightarrow[N \rightarrow \infty]{P} C^{(t),j},
\end{align*}
 Since $C^{(t),j} \succ 0$ by \Cref{prop:finite-delta-regularity}, we obtain that there exists a constant $\kappa > 0$ such that
\begin{align*}
    \Pr[\lambda_{\rm min}(\hat{C}^{(t),j}) > \kappa] \rightarrow 1.
\end{align*}
Since $\hat{b}^{t,j},  \hat{\psi}^{t,j},\hat{\theta}^{t,j \rightarrow k}$ are continuous in $\hat{C}^{(t),j},\hat{Q}^{(t),j \rightarrow k}$ over $\hat{C}^{t,j} \succ \kappa I$, we obtain, by the continuous mapping theorem, for all $k \in [1/\delta]$,
\begin{align*}
\hat{b}^{t,j}&\stackrel{P}{\simeq} b^{t,j}\\
    \hat{\psi}^{t,j} &\stackrel{P}{\simeq } \psi^{t,j} \\
     \hat{\theta}^{t,j \rightarrow k} &\stackrel{P}{\simeq } \theta^{t,j \rightarrow k}.
\end{align*}
Therefore, applying \Cref{lem:add_mult}, \eqref{eq:rechtj}, and the induction claim $(1)$, we have that, for any $k < j$,
\begin{align*}
    \hat{b}^{t,j} \sqrt{\delta}\bfxi^{t,j\rightarrow k}+ \bfh^{(t-1),j \rightarrow k} \hat{\psi}^{t,j}+\delta\bfsigma^{(t)|k}\hat{\theta}^{t,j\rightarrow k} \stackrel{P}{\simeq} b^{t,j} \sqrt{\delta}\bfxi^{t,j\rightarrow k}+ \tilde{\bfh}^{(t-1),j \rightarrow k} \psi^{t,j}+\delta\tilde{\bfsigma}^{(t)|k}\theta^{t,j\rightarrow k},
\end{align*}
and for $k \geq j$,
\begin{align*}
    \hat{b}^{t,j} \sqrt{\delta}\bfxi^{t,j\rightarrow k}+ \bfh^{(t-1),j \rightarrow k} \hat{\psi}^{t,j}+\delta\bfsigma^{(t-1)|k}\hat{\theta}^{t,j\rightarrow k} \stackrel{P}{\simeq} b^{t,j} \sqrt{\delta}\bfxi^{t,j\rightarrow k}+ \tilde{\bfh}^{(t-1),j \rightarrow k} \psi^{t,j}+\delta\tilde{\bfsigma}^{(t-1)|k}\theta^{t,j\rightarrow k},
\end{align*}
\Cref{prop:decop} and inductive claim $(1)$ then imply that, in either case,
\begin{align*}
    \bfh^{(t),j \rightarrow k} \simeq \tilde{\bfh}^{(t),j \rightarrow k}.
\end{align*}
The higher-moments are subsequently bounded using the decomposition similar to \eqref{eq:h0decomp}.

\section{Convergence as $\delta\to 0$}\label{sec:delta-to-0-limit}
In this section we show that, as $\delta \to 0,$ the equations \eqref{eq:finite_del-dim} converge, in a sense which we now specify, to the system \eqref{eq:Sigma}-\eqref{eq:A-upper-diag}. To formulate the convergence, we begin by defining, for each $\delta > 0,$ the functions 
\begin{align}\label{eq:finite-delta-interpolations}
\begin{split}
    \Sigma^{(T), \delta}&:[0,1]\to \R^{T\times T} \\
    f^{t, \delta}&:[0,1]\to \R^{t} \qquad \qquad \qquad \qquad  t=1,\dots,T 
\end{split}
\end{align}
as the linear interpolation in $[0,1]$ of the functions given by \eqref{eq:finite_del-dim}. Specifically, for each $j=0,\dots,\lfloor \frac{1}{\delta}\rfloor$, we define
\begin{align*}
    \Sigma^{(T), \delta}(j\delta) = \Sigma^{(T),j},
\end{align*}
and if $x =  \alpha j\delta + (1-\alpha) (j+1)\delta$ for some $j\in \{0,\dots,\lfloor \frac{1}{\delta}\rfloor\}$ and $\alpha \in [0,1],$ we define $\Sigma^{(T), \delta}(x) = \alpha \Sigma^{(T),j} + (1-\alpha)\Sigma^{(T),j+1}$, where we set $\Sigma^{(T),\lfloor \frac{1}{\delta}\rfloor + 1}:= \Sigma^{(T),\lfloor \frac{1}{\delta}\rfloor}$. We define $f^{t,\delta}$ similarly.

The goal of this section is to prove that the functions \eqref{eq:finite-delta-interpolations} converge uniformly as $\delta \to 0$ to the unique solution of the system \eqref{eq:Sigma}-\eqref{eq:A-upper-diag}. The proof will proceed in two steps. First, we will show via the Arzelà-Ascoli Theorem that \eqref{eq:finite-delta-interpolations}, viewed as function families parameterized by $\delta \in (0,1),$ are pre-compact. Then, we will verify that any limit point (in the uniform topology) along a subsequence $\delta_n\to 0$ must satisfy \eqref{eq:Sigma}-\eqref{eq:A-upper-diag}, which uniquely determines it. This will conclude the proof.

We begin by recalling the Arzelà-Ascoli Theorem.
\begin{thm}[Arzelà-Ascoli]\label{thm:arzela-ascoli}
    Let $\mathscr{C}[0,1]$ be the space of continuous real-valued functions in $[0,1]$ equipped with the supremum norm topology, and let $\Phi \subseteq \mathscr{C}[0,1].$ Suppose $\Phi$ satisfies both of the following:
    \begin{enumerate}
        \item \emph{Equicontinuity.} For all $\eps>0,$ there exists $\eps'>0$ such that for all $\phi \in \Phi$ and $x,y\in [0,1]$ such that $|x-y|<\eps',$ we have $|\phi(x)-\phi(y)|<\eps.$
        \item \emph{Uniform boundedness.} There exists $B>0$ such that $|\phi(x)|\leq B$ for all $x\in [0,1]$ and $\phi \in \Phi.$
    \end{enumerate}
    Then $\Phi$ is pre-compact, i.e., its closure in $\mathscr{C}[0,1]$ is compact.
\end{thm}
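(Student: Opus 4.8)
Since $\mathscr{C}[0,1]$ with the supremum metric is a complete metric space, its subset $\overline{\Phi}$ is compact if and only if it is sequentially compact. Hence the plan is to show that every sequence $(\phi_n)_{n\ge 1}$ in $\Phi$ admits a subsequence converging uniformly on $[0,1]$; the limit then automatically lies in $\mathscr{C}[0,1]$ and in $\overline{\Phi}$, and a routine $\eps/2$ approximation argument extends this from sequences in $\Phi$ to sequences in $\overline{\Phi}$.

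The first step is a diagonal extraction over a countable dense set. Enumerate $\mathbb{Q}\cap[0,1]=\{q_1,q_2,\dots\}$. By uniform boundedness, $(\phi_n(q_1))_n$ is a bounded sequence of reals, so by Bolzano--Weierstrass there is a subsequence $(\phi^{(1)}_n)_n$ of $(\phi_n)_n$ along which $\phi^{(1)}_n(q_1)$ converges; inductively, having chosen $(\phi^{(m)}_n)_n$, extract a further subsequence $(\phi^{(m+1)}_n)_n$ along which $\phi^{(m+1)}_n(q_{m+1})$ converges. The diagonal sequence $\psi_k:=\phi^{(k)}_k$ is, from index $k=m$ onward, a subsequence of $(\phi^{(m)}_n)_n$ for every $m$, so $\psi_k(q)$ converges as $k\to\infty$ for every $q\in\mathbb{Q}\cap[0,1]$.

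The second step upgrades pointwise convergence on a dense set to uniform convergence, using equicontinuity. Fix $\eps>0$ and let $\eps'>0$ be the modulus furnished by equicontinuity applied to $\eps/3$. Partition $[0,1]$ into finitely many subintervals $I_1,\dots,I_M$, each of length $<\eps'$, and choose a rational $q^{(i)}\in I_i$ for each $i\le M$. Since $\psi_k(q^{(i)})$ converges for each of the finitely many indices $i$, there is $K$ such that $|\psi_k(q^{(i)})-\psi_\ell(q^{(i)})|<\eps/3$ for all $k,\ell\ge K$ and all $i$. For arbitrary $x\in[0,1]$, pick $i$ with $x\in I_i$; then for $k,\ell\ge K$,
\[
|\psi_k(x)-\psi_\ell(x)|\le|\psi_k(x)-\psi_k(q^{(i)})|+|\psi_k(q^{(i)})-\psi_\ell(q^{(i)})|+|\psi_\ell(q^{(i)})-\psi_\ell(x)|<\tfrac{\eps}{3}+\tfrac{\eps}{3}+\tfrac{\eps}{3}=\eps,
\]
where the outer two terms are controlled by equicontinuity since $|x-q^{(i)}|<\eps'$. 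Thus $(\psi_k)$ is uniformly Cauchy, hence converges uniformly to some $\psi\in\mathscr{C}[0,1]$, which lies in $\overline{\Phi}$. This establishes sequential compactness, and therefore compactness, of $\overline{\Phi}$.

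There is no substantive obstacle here: the result is classical. The only points requiring care in the write-up are (i) verifying that the diagonal sequence $\psi_k$ is eventually a subsequence of each stage $(\phi^{(m)}_n)_n$, so that convergence at every rational is genuinely inherited, and (ii) that equicontinuity is used uniformly over all $\phi\in\Phi$ (and hence over the $\psi_k$), which is exactly the content of the hypothesis as stated. Everything else is the standard $\eps/3$ argument.
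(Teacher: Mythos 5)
Your proof is correct: it is the classical diagonal-extraction argument (Bolzano--Weierstrass at a countable dense set, then the $\eps/3$ estimate via equicontinuity to get uniform Cauchyness, then completeness of $\mathscr{C}[0,1]$). The paper itself does not prove this statement at all — it merely recalls Arzel\`a--Ascoli as a standard theorem before applying it in \Cref{cor:precompact} and \Cref{thm:convergence-of-delta-to-0} — so there is nothing to compare against; your write-up, including the deferred but routine step of passing from sequences in $\Phi$ to sequences in $\overline{\Phi}$, would serve as a complete proof.
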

Let us verify that our function family \eqref{eq:finite-delta-interpolations} is equicontinuous and uniformly-bounded. We now invoke the main technical result of this section, whose proof we postpone to the appendix.
\begin{restatable}{prop}{deltaRegularityProp}
\label{prop:finite-delta-regularity}
    For each $\delta>0,$ consider the solution to the difference equations \eqref{eq:finite_del-dim}. Define
    \begin{align*}
    \lambda^t(j) &:= \lambda_{\min}\left(\Sigma^{(t),j}\right) \\
    \eta^t(j) &:= \lambda_{\min}\left(C^{(t),j}\right) \\
    a^t(j) &:= \norm{A_{0:t-1,t}^{(t),j}}_2.
\end{align*}
There exist constants $B_t > 0, t=1,\dots,T$, independent of $\delta,$ such that 
\begin{align*}
    \max_{j\in [1/\delta]} \norm{f^{t,j}}_2 &\leq B_t \\
    \max_{j\in [1/\delta]} a^t(j) &\leq B_t \\
    \min_{j\in [1/\delta]} \lambda^t(j) &\geq 1/B_t \\
    \min_{j\in [1/\delta]} \eta^t(j) &\geq 1/B_t.
\end{align*}
\end{restatable}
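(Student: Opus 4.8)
The plan is an induction on the time horizon $T$, with an inner induction on the block index $j\in[1/\delta]$. Because the equations \eqref{eq:finite_del-dim} with a time index $t<T$ only involve quantities of time index $\le t$, the four bounds for $t<T$ are inherited from the horizon-$t$ problem, so at horizon $T$ it suffices to produce a single $\delta$-independent constant $B_T$ bounding $\max_j\norm{f^{T,j}}_2$, $\max_j a^T(j)$, $1/\min_j\lambda_{\min}(\Sigma^{(T),j})$ and $1/\min_j\lambda_{\min}(C^{(T),j})$, given $B_1,\dots,B_{T-1}$. The base case $T=1$ is explicit: $\Sigma^{(1),j}\equiv 1$; $f^{1,j}=\delta\sum_{k<j}R^{(1),k}_{1,1}$ with $|R^{(1),k}_{1,1}|=|\E[G^{1,k}\sigma^{1,k}]|\le1$, giving $|f^{1,j}|\le1$ and $a^1(j)=|R^{(1),j}_{1,1}|\le1$; and $\det C^{(1),j}=1-(C^{(1),j}_{0,1})^2$ is the Schur complement of $C^{(0),j}$ in $C^{(1),j}$, which is at least $\E[\Var(\sigma^{1,j}\mid\sigma^{0,j})]$, a positive constant because conditionally on $\sigma^{0,j}$ the field $h^{1,j}$ is $\mathcal N(\pm f^{1,j},1)$ with $|f^{1,j}|\le1$.

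For the inductive step I would scan $j=1,2,\dots,1/\delta$ in order; at each step the two \emph{upper} bounds are routine linear algebra. From the formula for $A^{(T),j}_{0:T-1,T}$ in \eqref{eq:finite_del-dim}, using $\norm{(\Sigma^{(T),j})^{-1}}_{\op}\le 1/c_\Sigma$ (the lower bound established at this same $j$, below), the horizon-$(T-1)$ bounds on $\norm{(C^{(T-1),j})^{-1}}_{\op}$ and $\norm{A^{(T-1),j}}_{\op}$, and the trivial entrywise bounds $|C^{(T),j}_{st}|,|R^{(T),j}_{st}|\le1$ (the latter since $\Sigma^{(T),j}_{ss}=1$, so $\Var(G^{s,j})=1$), one obtains $a^T(j)\le B_a$ for an explicit $\delta$- and $j$-independent $B_a$; then $f^{T,j}$, which is $\delta\sum_{k<j}A^{(T),k}_{0:T-1,T}$ plus two horizon-$(T-1)$ terms, satisfies $\norm{f^{T,j}}_2\le (j-1)\delta B_a+B_{T-1}+\delta B_{T-1}\le B_a+2B_{T-1}=:B_f$, using $a^T(k)\le B_a$ for the already-processed $k<j$.

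The heart of the matter is the two \emph{lower} bounds, which are entangled --- $\lambda_{\min}(\Sigma^{(T),j})$ naturally wants $\lambda_{\min}(C^{(T),k})$ for $k<j$, while $\lambda_{\min}(C^{(T),j})$ wants $\lambda_{\min}(\Sigma^{(T),j})$ --- and the degenerate rules (e.g.\ $c=\sign$) show that no ``single-constant'' induction in $j$ can close this loop. I would break it with a ``first half'' device. From $\Sigma^{(T),j}=\delta\sum_{k<j}C^{(T),k}_{1:T,1:T}+\delta\sum_{k\ge j}C^{(T-1),k}$ and the superadditivity of $\lambda_{\min}$ on symmetric matrices: for $j\le 1/(2\delta)$ the second sum alone has at least $1/(2\delta)$ terms, each $\succeq I/B_{T-1}$, so $\lambda_{\min}(\Sigma^{(T),j})\ge 1/(2B_{T-1})$ with \emph{no} dependence on any horizon-$T$ quantity; the bound of the final paragraph then forces $\lambda_{\min}(C^{(T),k})\ge c_0>0$ for every first-half $k$; and for $j>1/(2\delta)$ the first sum ranges over all $k<j$, hence over the $\approx 1/(2\delta)$ first-half indices, so (using interlacing to get $\lambda_{\min}(C^{(T),k}_{1:T,1:T})\ge\lambda_{\min}(C^{(T),k})$) $\lambda_{\min}(\Sigma^{(T),j})\ge \delta(1/(2\delta)-1)c_0\ge c_0/4$ when $\delta\le 1/4$ (for $\delta>1/4$ there are only finitely many blocks). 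Thus $\lambda_{\min}(\Sigma^{(T),j})\ge c_\Sigma:=\min\{1/(2B_{T-1}),c_0/4\}>0$ uniformly in $\delta$ and $j$; it is essential here that the first half is a constant fraction of all the blocks, and that $j$ is processed after the $k<j$ --- this ordering is also what legitimizes the matrix inverses in \eqref{eq:finite_del-dim} exactly when they are invoked, yielding the well-posedness claim.

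It remains to lower-bound $\lambda_{\min}(C^{(T),j})$ given $\lambda_{\min}(\Sigma^{(T),j})\ge c_\Sigma$ and $\norm{f^{t,j}}_2\le B_f$. Write $\lambda_{\min}(C^{(T),j})\ge \det C^{(T),j}/(T+1)^T$ (the trace bounds the remaining eigenvalues) and $\det C^{(T),j}=\prod_{s=1}^T(b^{s,j})^2$, where $(b^{s,j})^2$ is the Schur complement of $C^{(s-1),j}$ in $C^{(s),j}$; then $(b^{s,j})^2=\min_{c\in\R^s}\E[(\sigma^{s,j}-\brac{c,\sigma^{(s-1),j}})^2]\ge\E[\Var(\sigma^{s,j}\mid\sigma^{(s-1),j})]\ge \E[\Var(\sigma^{s,j}\mid\mathcal G_{s-1})]$, where $\mathcal G_{s-1}$ is generated by $\sigma^{0,j}$, $G^{(s-1),j}$ and the update randomness through time $s-1$ (conditional means beat linear predictors, then apply the law of total variance). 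Conditionally on $\mathcal G_{s-1}$ one has $\sigma^{s,j}=c(h^{s,j})$ with $h^{s,j}=G^{s,j}+\brac{f^{s,j},\sigma^{(s-1),j}}$, $G^{s,j}$ Gaussian of variance $\ge\lambda_{\min}(\Sigma^{(s),j})\ge c_\Sigma$ (a Schur complement of a positive-definite matrix, plus interlacing), and the additive shift $a=\E[h^{s,j}\mid\mathcal G_{s-1}]$ has $\E[a^2]$ bounded by a function of $B_f$ and $T$; so Chebyshev gives $\Pr[|a|\le M]\ge\frac12$ for a suitable $\delta$-independent $M$, and on that event $\Pr[\sigma^{s,j}=1\mid\mathcal G_{s-1}]=\E_{Z\sim\mathcal N(a,\beta^2)}[\rho(Z)]$, with $\rho(x):=\Pr[c(x)=1]$ and $\beta\in[\sqrt{c_\Sigma},1]$, lies in $[p_0,1-p_0]$ for an absolute $p_0>0$: the Gaussian density is at least $\phi\big((2M+1)/\sqrt{c_\Sigma}\big)$ throughout $[-(M+1),M+1]$, while the standing assumption $0<\Pr[c(G)=1]<1$ forces $\int_{-R}^R\rho>0$ and $\int_{-R}^R(1-\rho)>0$ for $R$ large enough. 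Hence $\E[\Var(\sigma^{s,j}\mid\mathcal G_{s-1})]\ge 2p_0(1-p_0)$, which lower-bounds each $(b^{s,j})^2$ and thus $\lambda_{\min}(C^{(T),j})$, uniformly. The main obstacle is precisely this last bound together with the $\Sigma$--$C$ entanglement it sits inside; the rest is bookkeeping.
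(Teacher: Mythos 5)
Your proposal is correct, but it closes the key loop by a genuinely different mechanism than the paper. The paper derives essentially the same coupled discrete estimates that you use (its \eqref{eq:at-bound}--\eqref{eq:lambda-bound}: $a^t(j)$ controlled by $1/\lambda^t(j)$ and $1/\eta^{t-1}(j)$, $\|f^{t,j}\|_2$ by $\delta\sum a$, $\eta^t(j)$ by a non-degeneracy function $g(\|f^{s,j}\|_2,\lambda^s(j))$, and $\lambda^t(j)\ge\delta\sum_{k<j}\eta^t(k)+\delta\sum_{k\ge j}\eta^{t-1}(k)$), but it resolves the $\Sigma$--$C$ entanglement by a continuous comparison argument: it constructs scalar comparison functions $h_1,h_2$ solving the ODE system \eqref{eq:h1h2-system} and shows the potential blow-up of $h_2$ at $x=1$ (caused by the vanishing weight $1-x$ of the horizon-$(t-1)$ contribution) is prevented by the accumulated positive term $\int_0^x\eps(y)\,dy$, the discrete iterates then being dominated by $h_1,h_2$. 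You instead break the circularity combinatorially with the first-half/second-half device: for $j\le 1/(2\delta)$ the $k\ge j$ part of $\Sigma^{(T),j}$ alone, controlled purely by the horizon-$(T-1)$ induction, gives a uniform $\lambda_{\min}$ bound; this yields a uniform $\lambda_{\min}(C^{(T),k})\ge c_0$ on the first half, which then feeds the $k<j$ part of $\Sigma^{(T),j}$ for second-half $j$ via interlacing. You also lower-bound $\lambda_{\min}(C^{(T),j})$ through $\det C^{(T),j}=\prod_s(b^{s,j})^2$ and conditional variances with a Chebyshev bound on the random shift, where the paper goes through $\min_\tau\Pr[\sigma^{(t),j}=\tau]$ with a bound uniform over shifts $|y|\le\|f^{s,j}\|_1$; both are sound. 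Your route is more elementary (no ODE existence/comparison step), while the paper's comparison functions give bounds varying continuously in $x$ that sit naturally next to the subsequent $\delta\to0$ analysis; for this proposition either suffices. Two small repairs to your write-up: introduce the constants in two stages (first-half constants using $2B_{T-1}$ in place of $1/c_\Sigma$, then $c_0$, then second-half constants), since as written $B_a$ formally refers to the not-yet-defined $c_\Sigma$ --- your own device already supplies this acyclic ordering; and for $\delta>1/4$, rather than appealing to ``finitely many blocks,'' note that the term $k=j$ in the second sum already gives $\lambda_{\min}(\Sigma^{(T),j})\ge\delta/B_{T-1}>1/(4B_{T-1})$, uniformly over $\delta\in(1/4,1]$.
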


We immediately conclude the following.

\begin{cor}\label{cor:precompact}
    The function family \eqref{eq:finite-delta-interpolations} is coordinatewise equicontinuous and uniformly bounded.
\end{cor}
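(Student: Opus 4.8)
The plan is to deduce coordinatewise equicontinuity and uniform boundedness of the interpolated families $\Sigma^{(T),\delta}$ and $f^{t,\delta}$ directly from the uniform-in-$\delta$ bounds supplied by Proposition \ref{prop:finite-delta-regularity}, combined with the explicit form of the difference equations \eqref{eq:finite_del-dim}. Uniform boundedness is immediate: Proposition \ref{prop:finite-delta-regularity} gives $\max_{j}\norm{f^{t,j}}_2\le B_t$, and since linear interpolation of a bounded sequence stays within the convex hull of its values, we get $\norm{f^{t,\delta}(x)}_2\le B_t$ for all $x\in[0,1]$ and all $\delta$; similarly each entry of $\Sigma^{(T),j}$ is controlled because $\Sigma^{(T),j}_{tt}=1$ and, by the PSD property together with $\lambda_{\min}(\Sigma^{(t),j})\ge 1/B_t$, its off-diagonal entries are bounded by $1$ in absolute value. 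So the only real content is equicontinuity, for which it suffices, by the piecewise-linear structure, to bound the per-step increments: I would show there is a constant $K_t$, independent of $\delta$ and $j$, such that $\norm{\Sigma^{(T),j+1}-\Sigma^{(T),j}}\le K_T\,\delta$ and $\norm{f^{t,j+1}-f^{t,j}}_2\le K_t\,\delta$. Given such increment bounds, for any $x,y\in[0,1]$ the interpolant satisfies $\norm{\phi(x)-\phi(y)}\le (K_t/\delta)\cdot|x-y|\cdot\delta + (\text{slope within one cell})$; more carefully, since the slope on each cell of width $\delta$ equals the increment divided by $\delta$, which is $\le K_t$, the interpolant is globally $K_t$-Lipschitz, hence equicontinuous uniformly in $\delta$.

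To get the increment bounds I would read off from \eqref{eq:finite_del-dim} that $\Sigma^{(T),j+1}-\Sigma^{(T),j} = \delta\big(C^{(T),j}_{1:T,1:T} - C^{(T-1),j}\big)$ (taking the difference of the defining sums telescopes to a single block-$j$ term), and likewise $f^{t,j+1}-f^{t,j} = \delta\big(A^{(t),j}_{0:t-1,t} - [0;A^{(t-1),j}_{0:t-2,t-1}]\big)$. Thus it remains to bound $\norm{C^{(T),j}}$ and $\norm{A^{(t),j}_{0:t-1,t}}_2$ uniformly in $\delta,j$. The first is trivial since $C^{(T),j}_{st}=\E[\sigma^{s,j}\sigma^{t,j}]\in[-1,1]$. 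The second is exactly $a^t(j)\le B_t$ from Proposition \ref{prop:finite-delta-regularity}. Hence $K_T$ can be taken as (a dimension-dependent multiple of) $\max(1,B_t)$, and the increment bounds follow with constants independent of $\delta$.

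The main (and essentially only) obstacle here is that Corollary \ref{cor:precompact} leans entirely on Proposition \ref{prop:finite-delta-regularity}, whose proof is deferred to the appendix; in particular, the uniform lower bounds $\lambda^t(j),\eta^t(j)\ge 1/B_t$ are needed both to make the matrix inverses in \eqref{eq:finite_del-dim} well-defined along the trajectory and to control the off-diagonal entries of $\Sigma^{(T),\delta}$. Conditional on that proposition, the corollary is a short bookkeeping argument: telescoping the sums in \eqref{eq:finite_del-dim}, bounding the resulting single-block terms by the a priori constants, and observing that linear interpolation preserves Lipschitz bounds. I would state it for each fixed coordinate (entry of $\Sigma^{(T)}$ or of $f^t$) so that the one-dimensional Arzelà–Ascoli theorem (Theorem \ref{thm:arzela-ascoli}) applies directly, which is why the corollary is phrased ``coordinatewise.''
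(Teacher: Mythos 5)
Your proposal is correct and follows essentially the same route as the paper: telescope the sums in \eqref{eq:finite_del-dim} to get one-step increments of size $O(\delta)$, bound the $C$-blocks by $1$ and the $A$-columns by the uniform constants of \Cref{prop:finite-delta-regularity}, and observe that the piecewise-linear interpolants are then uniformly Lipschitz and bounded (your alternative bound on $\Sigma^{(T),j}$ via unit diagonal plus positive semidefiniteness is a harmless variant of the paper's direct estimate). One small caveat, which affects the paper's own display equally: the exact increment of $f^{t,j}$ also picks up the change in the term $\delta A^{(t-1),j}_{0:t-1,t-1}$ from \eqref{eq:finite_del-dim}, so your claimed identity is not exact, but since that extra contribution is again $O(\delta)$ with norm controlled by $a^{t-1}(\cdot)\le B_{t-1}$, the Lipschitz constant only changes by a factor and the conclusion stands.
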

\begin{proof}
   It suffices to show that there exists a constant $B=B(T)>0$ independent of $\delta$ such that, in the system \eqref{eq:finite_del-dim}, for all $j\in [1/\delta]$ we have
\begin{align}
    \norm{\Sigma^{(T),j} - \Sigma^{(T),j-1}}_F &\leq B\delta \label{eq:Sigma-bound} \\
    \norm{f^{t,j} - f^{t,j-1}}_2 &\leq B\delta && t=1,\dots,T.\label{eq:f-diff-bound} \\
    \norm{f^{t,j}}_2 &\leq B && t=1,\dots,T.\label{eq:f-bound-needed}
\end{align}
Indeed, if we had this, then uniform boundedness for $\Sigma^{(T),\delta}$ follows from the fact that
\begin{align*}
    \sup_x \norm{\Sigma^{(T),\delta}(x)}_F &= \sup_j \norm{\Sigma^{(T),j}}_F \\
    &\leq \delta \sum_{k<j} \norm{C_{1:T,1:T}^{(T),k}}_F + \delta \sum_{k\geq j}\norm{ C_{0:T-1,0:T-1}^{(T-1),k}}_F \\
    &\leq T \max_{s,t}|C^{(T),j-1}_{s,t}| \\
    &\leq T\max_{s,t}|\E \sigma^{s,j-1}\sigma^{t,j-1}| \\
    &\leq T.
\end{align*}
Similarly, equicontinuity for $f^{t,\delta}$ is proved as follows
\begin{align*}
    \sup_{x\neq y \in [0,1]} \frac{\norm{f^{t,\delta}(x) - f^{t,\delta}(y)}_2}{|x-y|} &= \sup_{j> k, \; j,k\in [1/\delta]} \frac{\norm{f^{t,j} - f^{t,k}}_2}{\delta(j-k)} \\
    &= \sup_{j> k, \; j,k\in [1/\delta]}  \frac{\norm{\sum_{\ell = k+1}^j f^{t,\ell} - f^{t,\ell-1}}_2}{\delta(j-k)} \\
    &\leq \frac{B\delta (j-k)}{\delta(j-k)} \\
    &\leq B,
\end{align*}
and the same argument shows equicontinuity of $\Sigma^{(T),\delta}$ as well. Hence, it remains to check \eqref{eq:Sigma-bound}-\eqref{eq:f-bound-needed}.
Note first that \eqref{eq:Sigma-bound} is immediate since
\begin{align*}
    \norm{\Sigma^{(T),j} - \Sigma^{(T),j-1}}_F &= \delta \norm{C_{1:T,1:T}^{(T),j-1} - C_{0:T-1,0:T-1}^{(T-1),j-1}}_F \\
    &\leq 2\delta \sqrt{T} \max_{s,t}|C^{(T),j-1}_{s,t}| \\
    &\leq 2\delta \sqrt{T}.
\end{align*}
For \eqref{eq:f-diff-bound}, we have
\begin{align*}
    \norm{f^{t,j} - f^{t,j-1}}_2 &\leq \delta \left(\norm{A_{0:t-1,t}^{(t),j-1}}_2 + \norm{A_{0:t-2,t-1}^{(t-1),j-1}}_2\right),
\end{align*}
so the result follows from \Cref{prop:finite-delta-regularity}, as does \eqref{eq:f-bound-needed}. This concludes the proof.
\end{proof}
From this we conclude the main result of this section.
\begin{thm}\label{thm:convergence-of-delta-to-0}
    The system \eqref{eq:Sigma}-\eqref{eq:A-upper-diag} has a unique global smooth solution $\Sigma^{(T)}, f^t,t=1,\dots,T$ in $[0,1].$ Moreover, the functions \eqref{eq:finite-delta-interpolations} converge uniformly as $\delta \to 0$ to $\Sigma^{(T)}, f^t,t=1,\dots,T$.
\end{thm}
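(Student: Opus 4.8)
The plan is to combine the precompactness from \Cref{cor:precompact} with a characterization of the subsequential limits and a separate uniqueness argument for the limiting system. \textbf{Step 1 (extracting limit points).} By \Cref{cor:precompact} the families $\{\Sigma^{(T),\delta}\}_{\delta\in(0,1)}$ and $\{f^{t,\delta}\}_{\delta\in(0,1)}$, $t=1,\dots,T$, are coordinatewise equicontinuous and uniformly bounded, so \Cref{thm:arzela-ascoli}, applied to each scalar coordinate and then to the finite concatenation, shows that every sequence $\delta_n\to0$ has a subsequence along which $\Sigma^{(T),\delta_n}\to\Sigma^{(T)}$ and $f^{t,\delta_n}\to f^t$ uniformly on $[0,1]$, for some continuous limits. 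The a priori bounds of \Cref{prop:finite-delta-regularity} are uniform in $\delta$ and pass to these limits: $\sup_x\norm{f^t(x)}_2\le B_t$ and $\lambda_{\min}(\Sigma^{(t)}(x)),\lambda_{\min}(C^{(t)}(x))\ge 1/B_t$ for all $x$ and $t\le T$, where $C^{(t)}(x):=\calC_{0:t,0:t}(\Sigma^{(t)}(x),f^{(t)}(x))$. Hence each limit trajectory stays in a fixed compact subset of the open region $\mathcal{U}_T$ on which all matrix inverses appearing in \eqref{eq:A-upper-diag} (equivalently \eqref{eq:calA}) are well-defined.

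\textbf{Step 2 (limit points solve the equations).} This is an induction on $T$. For $T=1$ one has $\Sigma^{(1),j}\equiv1$, and the recursion for $f^{1,j}$ in \eqref{eq:finite_del-dim} is exactly the forward Euler scheme with step $\delta$ for the scalar ODE $f'(x)=R_{1,1}(x)$, whose right side is a bounded Lipschitz function of $f(x)$ (differentiate the Gaussian expectation), so $f^{1,\delta}$ converges uniformly to the unique solution. For the inductive step, assume the statement at level $T-1$; along the chosen subsequence $(\Sigma^{(T-1),\delta_n},f^{(T-1),\delta_n})$ then converges uniformly to the unique (by Step 3) level-$(T-1)$ solution, which supplies the correct initial data \eqref{eq:sigma-initial}--\eqref{eq:f-initial} in the limit. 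On the compact good region from Step 1 the maps $(K^{(t)},v^{(t)})\mapsto\calC,\calR,\calA$ are continuous — in fact smooth, by differentiating the Gaussian expectations \eqref{eq:calC}--\eqref{eq:calR} under the integral sign (using \Cref{lem:Stein}, \Cref{lem:distributional-derivative} and that $c$ is bounded with boundary of measure zero) and by smoothness of matrix inversion where the relevant blocks are nonsingular — hence uniformly continuous there. Composing with the uniformly convergent interpolants, the coefficient functions $C^{(T),\delta_n}(\cdot),R^{(T),\delta_n}(\cdot),A^{(T),\delta_n}(\cdot)$ converge uniformly on $[0,1]$. Finally, the identities \eqref{eq:finite_del-dim} read, at $x=j\delta_n$, as left Riemann sums: $\Sigma^{(T),j}=\delta_n\sum_{k<j}C^{(T),k}_{1:T,1:T}+\delta_n\sum_{k\ge j}C^{(T-1),k}$, whose right side converges, by uniform convergence of the summands and vanishing mesh, to $\int_0^x C_{1:T,1:T}(y)\,dy+\int_x^1 C_{0:T-1,0:T-1}(y)\,dy$, i.e.\ \eqref{eq:Sigma}; the analogous computation for $f^{t,j}$, in which the telescoping $\delta A^{(t-1),j}_{0:t-1,t-1}$ term together with the $k<j$ and $k>j$ sums is matched with the integrand of \eqref{eq:f} using \Cref{lem:spin-term-simplifications}, gives \eqref{eq:f}. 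The $O(\delta_n)$ error from linear interpolation is harmless. Hence the limit satisfies \eqref{eq:Sigma}--\eqref{eq:A-upper-diag}, equivalently \eqref{eq:Sigma-ODE}--\eqref{eq:f-initial}.

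\textbf{Step 3 (uniqueness) and conclusion.} Again induct on $T$: for fixed initial data the right-hand sides of \eqref{eq:Sigma-ODE}--\eqref{eq:f-ODE} depend only on the current state $(\Sigma^{(T)}(x),f^{(T)}(x))$ and are locally Lipschitz on $\mathcal{U}_T$ (by the smoothness in Step 2), so the standard continuation/Gr\"onwall argument yields at most one maximal solution; since by Step 1 a solution arising as a limit point remains in a fixed compact subset of $\mathcal{U}_T$, it neither blows up nor leaves $\mathcal{U}_T$, and is therefore global on $[0,1]$ and unique. Uniqueness of the level-$(T-1)$ solution (induction hypothesis) pins down the initial data \eqref{eq:sigma-initial}--\eqref{eq:f-initial} at level $T$, closing the induction. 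Combining the steps: every subsequential uniform limit of $\{(\Sigma^{(T),\delta},f^{t,\delta})\}_\delta$ equals the unique global solution of \eqref{eq:Sigma-ODE}--\eqref{eq:f-initial}, and a precompact family all of whose subsequential limits coincide converges to that common value, which is the claim.

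I expect \textbf{Step 2} to be the main obstacle, specifically the upgrade from pointwise to \emph{uniform} convergence of the coefficient functions $C^{(T),\delta},R^{(T),\delta},A^{(T),\delta}$: this rests on uniform continuity of $\calC,\calR,\calA$ over the compact good region supplied by \Cref{prop:finite-delta-regularity}, where the eigenvalue lower bounds $\lambda_{\min}(C^{(t)})\ge 1/B_t$ are precisely what controls the potentially singular inversions of $C^{(t-1)}$ in \eqref{eq:calA}, most delicately near the diagonal $s=t$. The bookkeeping that matches the discrete recursion for $f^{t,j}$ (with its extra $\delta A^{(t-1),j}_{0:t-1,t-1}$ term) to the integral form \eqref{eq:f} via \Cref{lem:spin-term-simplifications} is delicate but routine.
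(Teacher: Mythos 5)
Your proposal is correct and follows essentially the same route as the paper: precompactness via \Cref{thm:arzela-ascoli} and \Cref{cor:precompact} (resting on \Cref{prop:finite-delta-regularity}), identification of every uniform subsequential limit as a global solution of \eqref{eq:Sigma}--\eqref{eq:A-upper-diag}, uniqueness via a Cauchy--Lipschitz/local-uniqueness argument on the region where the inverses are controlled, and the standard ``all subsequential limits coincide'' argument to upgrade to full convergence — you merely spell out the steps the paper labels immediate (the Riemann-sum identification and the induction over $T$ supplying the initial data \eqref{eq:sigma-initial}--\eqref{eq:f-initial}). The only nominal omission is that you never explicitly conclude smoothness of the limit solution, but this follows in one line, as in the paper, from smoothness of the right-hand sides of \eqref{eq:Sigma-ODE}--\eqref{eq:f-ODE} in the neighborhood of the trajectory guaranteed by \Cref{prop:finite-delta-regularity}.
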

\begin{proof}
    By \Cref{thm:arzela-ascoli} and \Cref{cor:precompact}, the function family \eqref{eq:finite-delta-interpolations} parameterized by $\delta$ is pre-compact. Let $\delta_k$ be a sequence such that $\delta_k\to 0$ and such that $\Sigma^{(T), \delta_k}, f^{t,\delta_k}, t=1,\dots,T,$ converge uniformly in $[0,1]$ as $k\to\infty$ to a limit  $\Sigma^{(T)}, f^{t}, t=1,\dots,T.$ It is immediate from uniform convergence that this limit must be continuous and a global solution in $[0,1]$ to \eqref{eq:Sigma}-\eqref{eq:A-upper-diag}. To see that it is the unique continuous solution, suppose there was an alternate continuous solution $\tilde{\Sigma}^{(T)}, \tilde{f}^t,t=1,\dots,T.$ The Cauchy-Lipschitz Theorem implies that the set $\{x\in [0,1]:\tilde{\Sigma}^{(T)}(x)=\Sigma^{(T)}, \tilde{f}^t(x)=f^t(x),t=1,\dots,T\}$ is open and nonempty, but continuity implies that it's closed; hence it must be equal to $[0,1].$ Since the solution of \eqref{eq:Sigma}-\eqref{eq:A-upper-diag} exists and is unique, it must also be smooth since the right-hand sides of \eqref{eq:Sigma}-\eqref{eq:A-upper-diag} are smooth functions of $\Sigma^{(T)}, f^t,t=1,\dots,T$ in a neighborhood of the solution (since $\Sigma$ and $C$ are invertible in this neighborhood by \Cref{prop:finite-delta-regularity}).
    
    To conclude the proof, let $\delta_k$ be any sequence such that $\delta_k\to 0$ as $k\to \infty.$ We must show that $\Sigma^{(T), \delta_k}, f^{t,\delta_k}, t=1,\dots,T,$ converge uniformly in $[0,1]$ as $k\to\infty$ to $\Sigma^{(T)}, f^{t}, t=1,\dots,T.$ We have shown above that if such a sequence converges uniformly at all, then it must converge to $\Sigma^{(T)}, f^{t}, t=1,\dots,T.$ Now suppose for contradiction that $\Sigma^{(T), \delta_k}, f^{t,\delta_k}, t=1,\dots,T$ did not converge uniformly to $\Sigma^{(T)}, f^{t}, t=1,\dots,T.$ By pre-compactness, $\Sigma^{(T), \delta_k}, f^{t,\delta_k}, t=1,\dots,T$ must have a uniformly convergent subsequence $\Sigma^{(T), \delta_{k_n}}, f^{t,\delta_{k_n}}, t=1,\dots,T$ with a limit different from $\Sigma^{(T)}, f^{t}, t=1,\dots,T.$ But this is a contradiction, concluding the proof.
\end{proof}

\section{Proof of \Cref{thm:main}}\label{sec:main-thm-proof}
\begin{proof}[Proof of \Cref{thm:main}]
    In \Cref{thm:finite_del_ind} we have shown that $\bfh^{(t),j} \pequiv \wt{\bfh}^{(t),j}$ and 
    \[
    \frac{1}{\delta N}\norm{\bfsigma^{t|j} - \tilde{\bfsigma}^{t|j}} \xrightarrow[N \rightarrow \infty]{P}0.
    \]
    By \Cref{lem:p-conv-empirical-conv}, this implies that for every $j \in [1/\delta]$ and pseudo-Lipschitz function $\varphi,$ we have 
    \begin{align*}
        \frac{1}{\delta N}\sum_{i=j\delta N}^{(j+1)\delta N - 1}\varphi(\bfsigma^{(T)}_i, \bfh^{(T)}_i)   \xrightarrow[N \rightarrow \infty]{P}\E[\varphi(\sigma^{(T),j}, h^{(T),j})],
    \end{align*}
    where the expectation on the right hand side is with respect to the law specified by the difference equations \eqref{eq:h-joint-law-del}, \eqref{eq:finite_del-dim}. We have shown in \Cref{thm:convergence-of-delta-to-0} that the linear interpolation of the solution of the equations \eqref{eq:h-joint-law-del}, \eqref{eq:finite_del-dim} in $[0,1]$ converges uniformly to the unique global smooth solution to the system \eqref{eq:Sigma}-\eqref{eq:A-upper-diag}. Hence, if $\delta j \to x$ as $\delta \to 0,$ the solution to the difference equations \eqref{eq:finite_del-dim} evaluated at $j$ converges to the solution of \eqref{eq:Sigma}-\eqref{eq:A-upper-diag} evaluated at $x.$ Now observe that, recalling \eqref{eq:h-joint-law-alone}, the map
    \[
    (K^{(T)}, v^{(T)})\mapsto \E_{K^{(T)}, v^{(T)}} [\varphi(\sigma^{(T),j}, h^{(T),j})]
    \]
    is continuous. Hence, since $\Sigma^{(T),j}\to\Sigma^{(T)}(x)$ and $f^{t,j}\to f^t(x)$ for $t=1,\dots,T$ as $\delta \to 0,$ by continuity we have 
    \begin{align*}
        \E[\varphi(\sigma^{(T),j}, h^{(T),j})] \to \E[\varphi(\sigma^{(T)}_x, h^{(T)}_x)]
    \end{align*}
    as $\delta\to 0$, proving the theorem.
\end{proof}


\section{Numerical solution and comparison to experiment}\label{sec:numerics}
In this section, we numerically solve our equations, compare to experiment, and make some observations. We remark at the outset that, while so far in this paper we have used $x\in [0,1]$ to refer to the position within each pass and $t\in \N$ to refer to the pass number, we will now also use $t'\in \R_+$ to refer to the overall ``time'' as measured across passes. Hence, if we run for $T$ passes, $t'$ will range in the interval $[0,T]$, and e.g. $t'=2.3$ will correspond to position $x=0.3$ in pass $t=2.$ 

We begin by plotting the energy as a function of time as predicted by our equations (see \Cref{sec:computing-the-energy}) compared to simulations for a selection of temperatures in \Cref{fig:energy}.
\begin{figure}[H]
    \centering
    \includegraphics[width=0.7\linewidth]{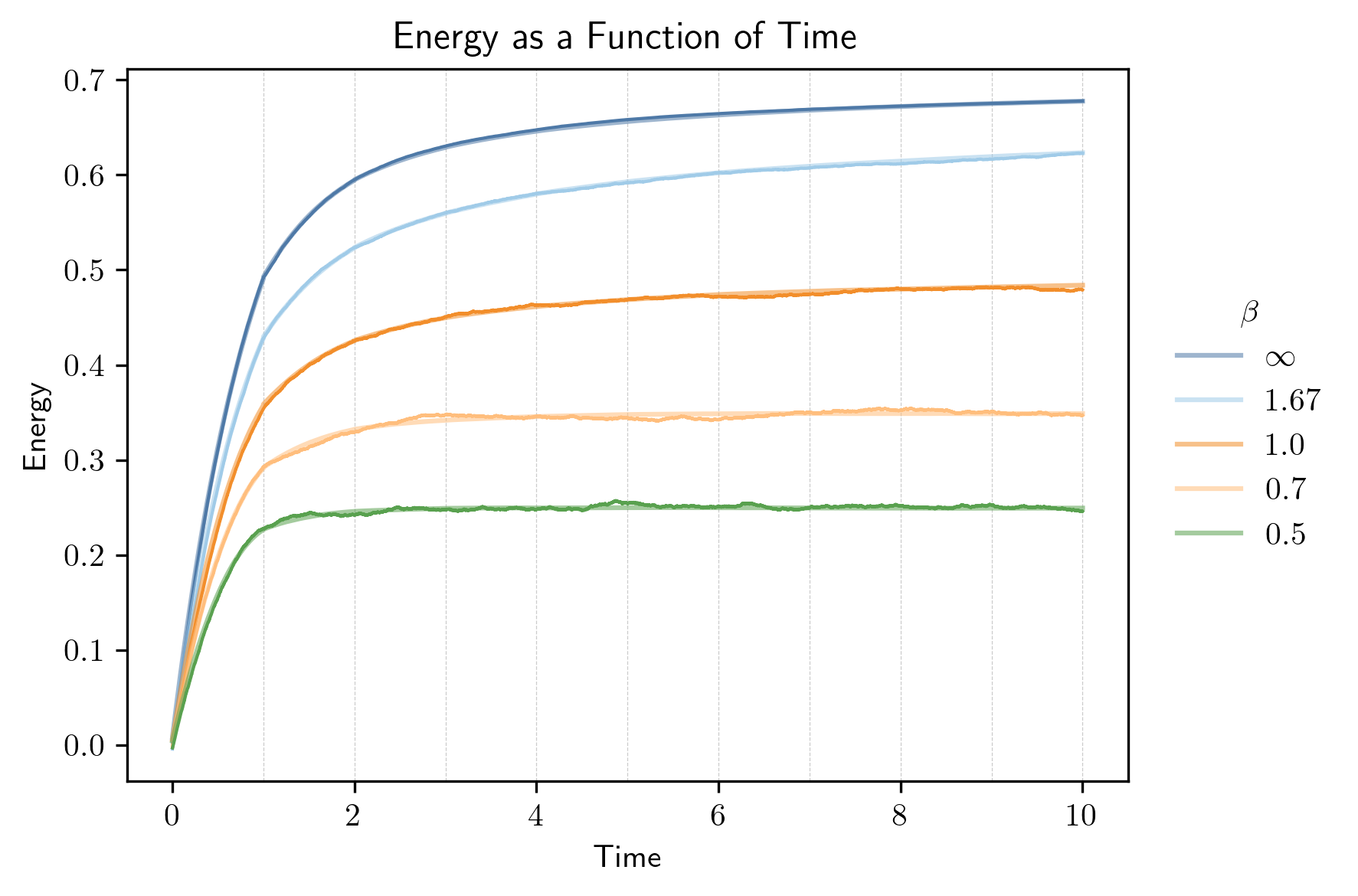}
    \caption{Energy as a function of $t'.$
     The faint colored curves are obtained by numerically solving our equations, while the thinner, darker curves overlaid on top are obtained by averaging $50$ simulation samples with $N=1500.$ Dotted vertical lines mark the transitions between passes. At higher temperatures the simulation curves are noisier; at lower temperatures the two curves almost perfectly overlap.
    }
    \label{fig:energy}
\end{figure}

Next, in \Cref{fig:energy-inverted} we plot only the numerical solution of our equations, but for a larger range of temperatures. This time, we plot the energy as a function of $1/(1+t').$ Separately, for a selection of low temperatures, we plot their difference in energy as a function of $1/(1+t')$ compared to zero temperature. Perhaps surprisingly, our results suggest that there are temperatures that beat the asymptotic value of the energy at zero temperature in the timescales where our equations apply.
\begin{figure}
  \centering
  
  \begin{subfigure}[b]{0.49\linewidth}
    \includegraphics[width=\linewidth]{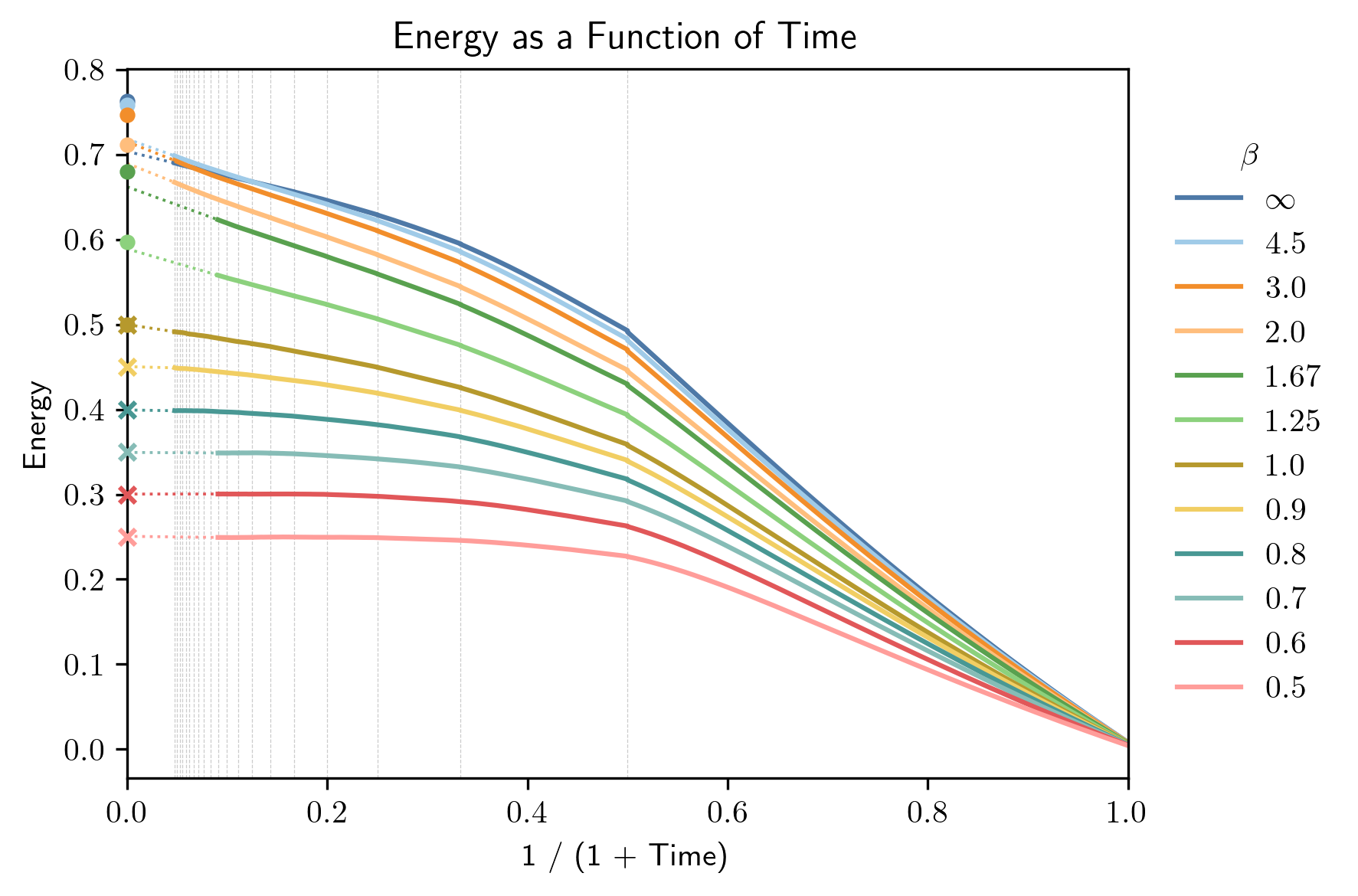}
    \caption{Curves are plotted for 10 or 20 passes. The equilibrium energy values at the given temperature are represented as crosses (at high temperature) or dots (at low temperature) in the $y$-axis. The linear extrapolation of each curve onto the $y$-axis is dotted.}
    \label{fig:energy-inverted-left}
  \end{subfigure}
  \hfill                 
  \begin{subfigure}[b]{0.49\linewidth}
    \includegraphics[width=\linewidth]{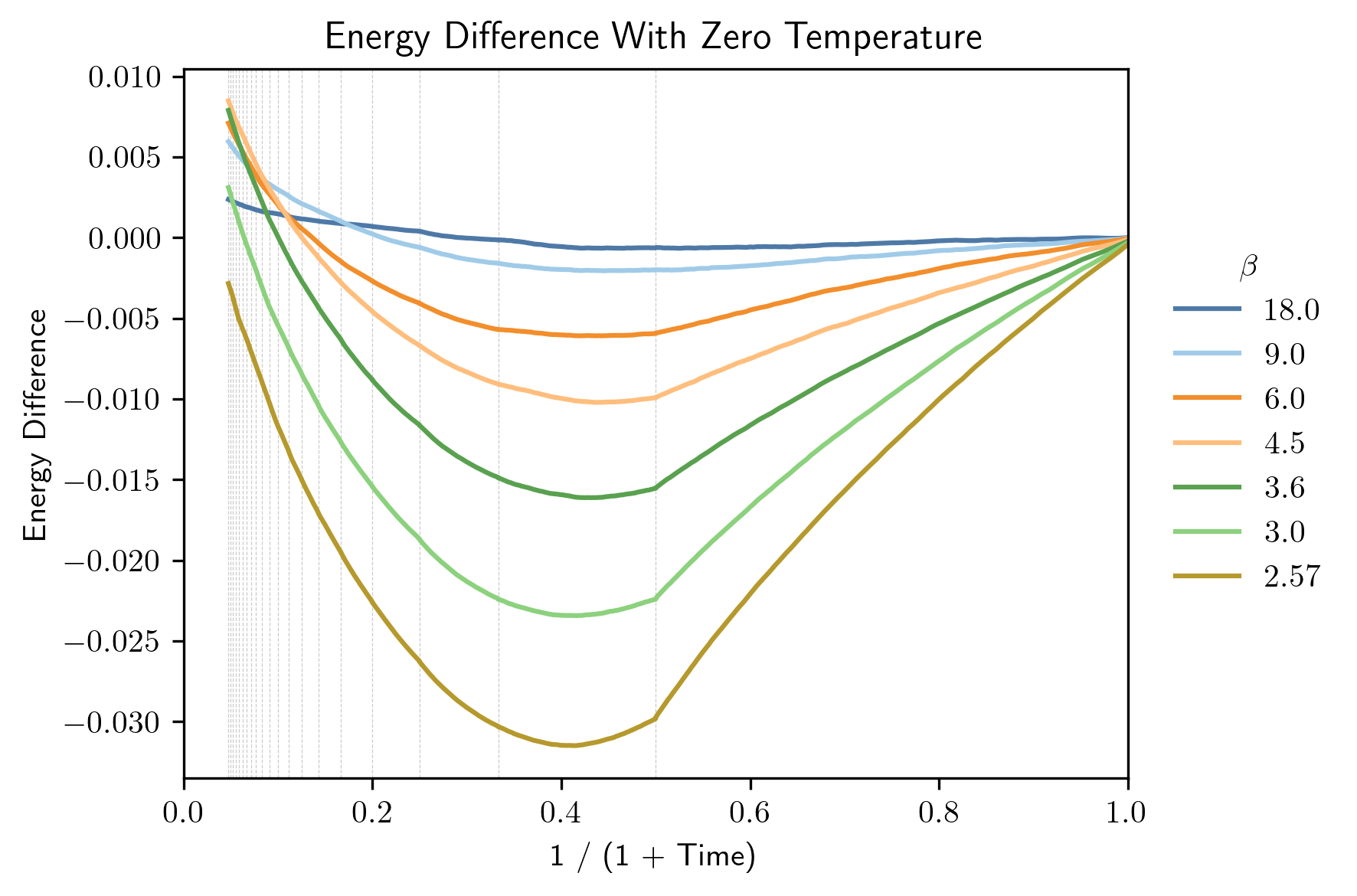}
    \caption{All curves are plotted for 20 passes. If $E_\beta(t')$ is the energy at time $t'\in \R_+$ and inverse temperature $\beta,$ we plot $E_\beta(t') - E_\infty(t')$ as a function of $1/(1+t')$ for a selection of values of $\beta.$  Several values of $\beta$ exceed the energy attained by $\beta=\infty$ at large times.}
    \label{fig:energy-inverted-right}
  \end{subfigure}
  
  \caption{The dotted vertical lines in the background mark transitions between passes. As expected, the curves have non-differentiabilities at the transition points between passes. The equilibrium energies were computed using the method from \cite{rizzo-numerics} ---we thank Tommaso Rizzo for facilitating code to compute the energy values.}
  \label{fig:energy-inverted}
\end{figure}
Note that, in line with the expectation that above the critical temperature ($\beta \leq 1$) the Glauber dynamics equilibrates fast, an extrapolation of the energy curves matches the equilibrium energy value, which is $\beta/2$ at high temperature. On the other hand, at low temperature  ($\beta > 1$), the extrapolated value differs from the equilibrium value, and the gap gets wider the lower the temperature. This is because for $\beta > 1$ the equilibrium of the model is in the spin glass phase, where equilibration requires a larger time. 

We now plot the magnetization as a function of time started from the all-ones vector. Specifically, we initialize $\bfsigma^0$ equal to the all-ones vector, and we track how the mean magnetization $\frac{1}{N}\sum_{i=1}^N \bfsigma^{t,j}_i$ evolves. Due to symmetry, we can instead initialize $\bfsigma^0$ uniformly at random and track the quantity $\frac{1}{N}\brac{\bfsigma^0, \bfsigma^{t,j}}$, which if we take $j$ such that $\delta j \to x$, will converge to
\begin{align}
m(t,x)=m(t'=t+x)=\int_0^x C_{0,t}(y)dy + \int_{1-x}^1 C_{0,t-1}(y)dy. \label{eq:magnetization}
\end{align}
In \Cref{fig:magnetization}, we plot $m(t')$ as a function of $1/(1+t')$, as computed by numerically solving our equations, for a selection of temperatures. As expected in the physics literature, the remnant magnetization $\lim_{t'\to \infty}m(t')$ goes zero quickly at high temperature ($\beta \leq 1$) and remains nonzero at the times considered at low temperature ($\beta >1$). The ``bumps'' observed within passes have a simple explanation. Note that, for the very first spin to be updated in the first pass, the local field is independent of the initial condition, so $C_{0,1}(0)=0.$ Moreover, since all later spins are affected by $\bfsigma^0_0$ only through $\bfsigma^1_0$, we have $C_{0,t}(0)=0$ for all $t.$ Thus, for any given $t$, the curve $x\mapsto C_{0,t}(x)$ starts at zero and grows, so \eqref{eq:magnetization} is integrating in a sliding interval what looks like a ``sawtooth wave'' whose peaks decay with time, resulting in the bumps.
\begin{figure}[H]
    \centering
    \includegraphics[width=0.65\linewidth]{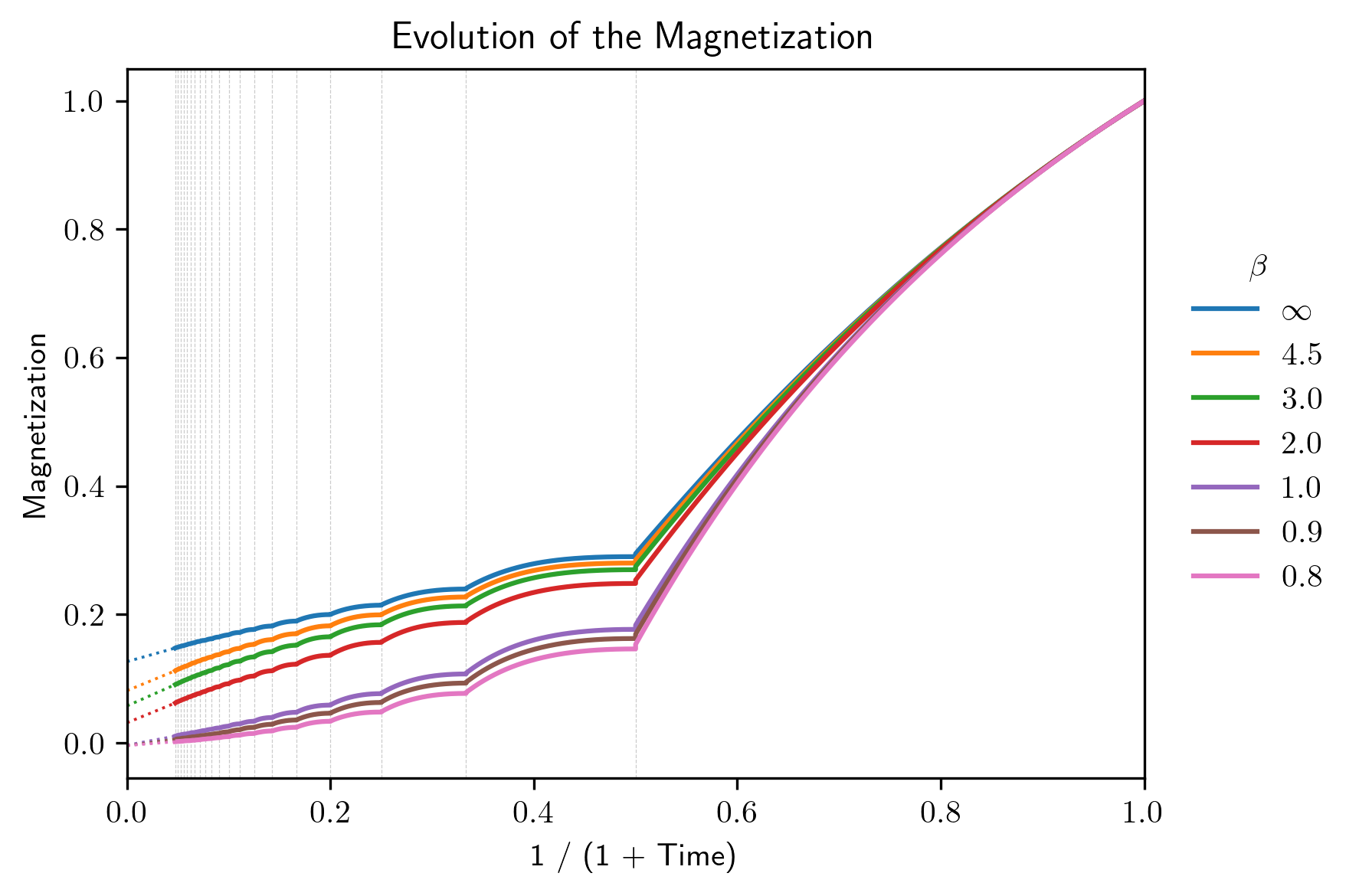}
    \caption{All curves are plotted for 20 passes. The linear extrapolation of each curve onto the $y$-axis is dotted. The dotted vertical lines in the background mark transitions between passes.}
    \label{fig:magnetization}
\end{figure}

Finally, in \Cref{fig:C-prev-together} we plot the correlation with the previous pass $C_{t,t-1}(x)$ at zero temperature ($\beta=\infty$) as a function of the position $x\in [0,1]$ within each pass $t\in \N$. Since $C_{t,t-1}(x)$ will converge to 1 as $t\to\infty$, we plot $\log (1-C_{t,t-1}(x))$ in the $y$-axis. The convergence rate of $C_{t,t-1}(x)$ to 1 appears to be polynomial, so we plot $\int_0^1 \log (1-C_{t,t-1}(y))dy$ as a function of $\log t.$ The fact that this convergence rate appears to be polynomial was surprising to us. We consider it an intriguing open question to verify this phenomenon, either theoretically or by further numerical investigation.

\begin{figure}[H]
  \centering
  \begin{subfigure}[b]{0.49\linewidth}
\includegraphics[width=\linewidth]{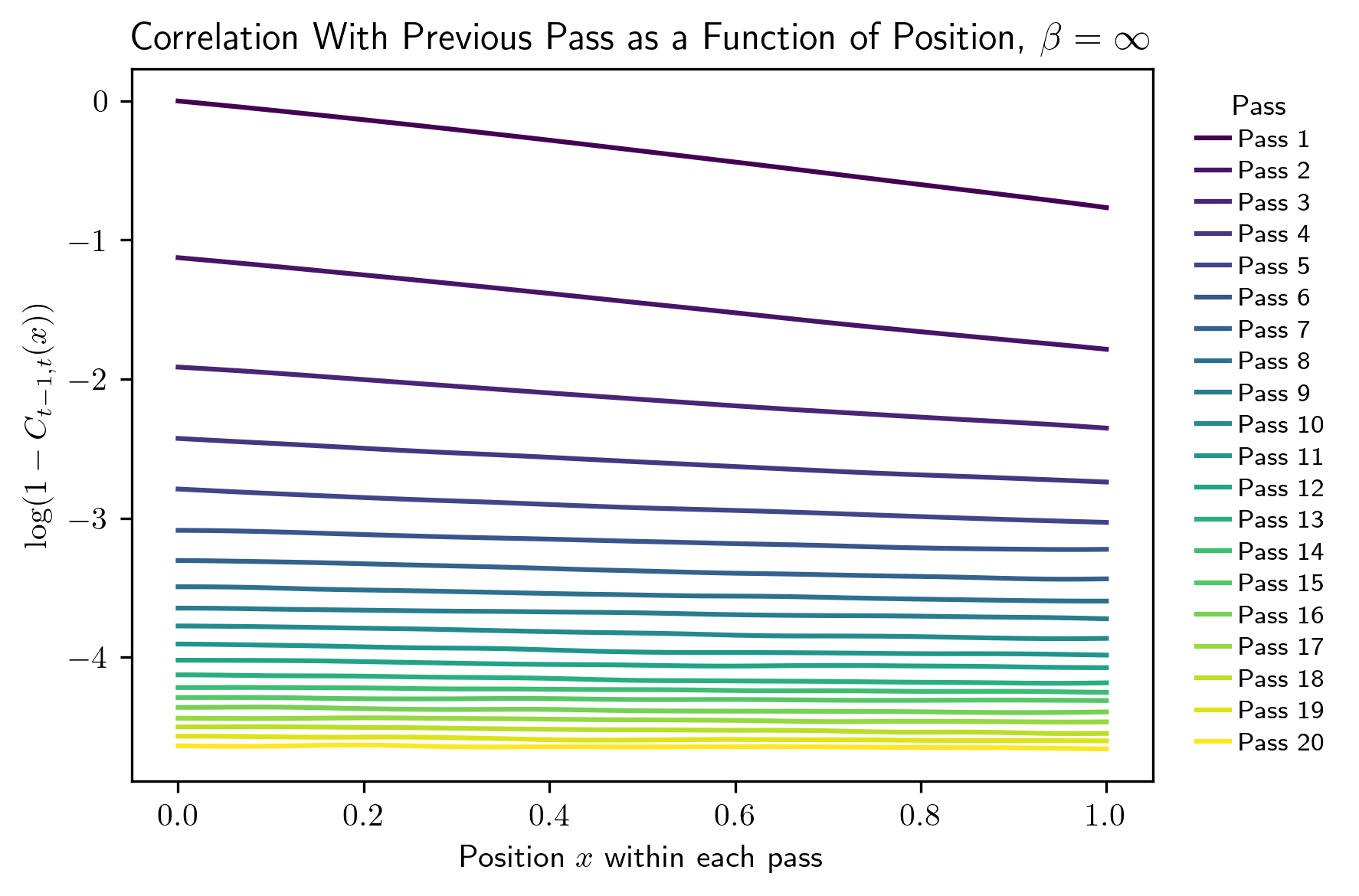}
    \caption{Each curve plots $x\mapsto \log(1-C_{t,t-1}(x))$ for a given value of $t\in \{1,\dots,20\}.$ It appears that the curves diverge to $-\infty$ at a slower than linear rate, suggesting that the convergence of $C_{t,t-1}(x)$ to 1 may be polynomial.}
    \label{fig:C-prev-left}
  \end{subfigure}
  \hfill                 
  \begin{subfigure}[b]{0.49\linewidth}
    \includegraphics[width=0.9\linewidth]{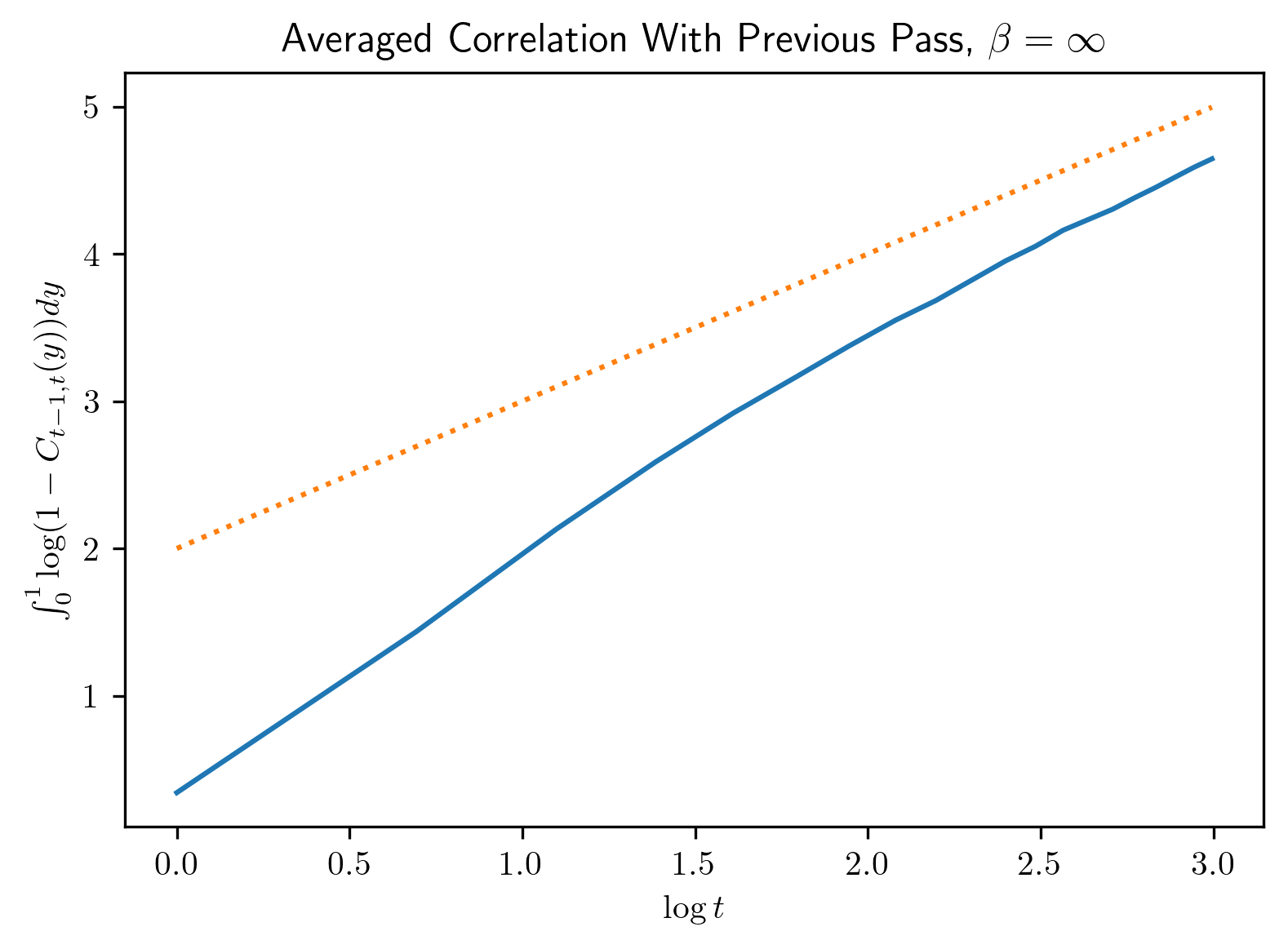}
    \caption{In blue, we plot the average $\int_0^1 \log (1-C_{t,t-1}(y))dy$ of the curves on figure (a) as a function of $\log t$, where $t\in \{1,\dots,20\}$ is the pass number. As a guide, we also include a dotted curve of slope 1, colored orange.}
    \label{fig:C-prev-right}
  \end{subfigure}
  
  \caption{The correlation with the previous spin value at zero temperature.}
  \label{fig:C-prev-together}
\end{figure}

\section{Acknowledgments}
D. Gamarnik and F. Pernice thank Ilias Zadik for many helpful discussions in the early stages of this project. We thank Tommaso Rizzo for facilitating the code to compute the equilibrium energy values that appear in our plots.

This work was carried out while the author Y. Dandi was visiting the Massachusetts Institute of Technology (MIT), with support from the EPFL Doc.Mobility grant, whose financial assistance is gratefully acknowledged.

Several developments in this work took place while the authors Y. Dandi, D. Gamarnik, and L. Zdeborová were visiting the Simons Laufer Mathematical Sciences Institute (SLMath) in Berkeley, California, during the 2025 program Probability and Statistics of Discrete Structures. The authors are grateful to SLMath for its hospitality and support.

David Gamarnik greatfully acknowledges the support of NSF Grant  CISE  2233897.

\bibliography{refs}

\appendix
\section{Appendix}
\subsection{Uniform regularity of difference equations}
In this section we prove the following.
\deltaRegularityProp*
\begin{proof}
The proof is divided in two parts. In the first part, we establish some basic estimates that relate $\norm{f^{t,j}}_2,a^t(j), \lambda^t(j)$ and $\eta^t(j)$ above. In the second part, we use these estimates to inductively show the desired bounds, proving the proposition.

We begin with the first part. Note that, letting $\norm{\cdot}_{\op}$ and $\norm{\cdot}_F$ denote the operator and Frobenius norms, respectively, we have
\begin{align}
    a^t(j) &\leq \norm{(\Sigma^{(t),j})^{-1}}_{\op} \norm{R_{1:t,t}^{(t),j}}_2 + \Bigg(\norm{A^{(t-1),j}}_{\op} \nonumber\\
    &\qquad \qquad+\norm{(\Sigma^{(t),j})^{-1}}_{\op} \norm{R_{1:t,0:t-1}^{(t),j}}_{\op}\Bigg) \norm{(C^{(t-1),j})^{-1}}_{\op}\norm{C_{0:t-1,t}^{(t),j}}_2 \nonumber\\
    &\leq \norm{(\Sigma^{(t),j})^{-1}}_{\op} \norm{R_{1:t,t}^{(t),j}}_2 + \Bigg(\norm{A^{(t-1),j}}_{F} \nonumber\\
    &\qquad \qquad+\norm{(\Sigma^{(t),j})^{-1}}_{\op} \norm{R_{1:t,0:t-1}^{(t),j}}_{F}\Bigg) \norm{(C^{(t-1),j})^{-1}}_{\op}\norm{C_{0:t-1,t}^{(t),j}}_2 \nonumber\\
    &\leq \frac{\sqrt{t}}{\lambda^t(j)} + \left(\sum_{s=1}^{t-1}a^s(j) + \frac{t}{\lambda^t(j)}\right) \frac{\sqrt{t}}{\eta^{t-1}(j)}, \label{eq:at-bound}
\end{align}
where we have used that $R_{s,t}^{(T),j} =\E G^{s,j}\sigma^{t,j} \leq 1$ by Cauchy-Schwarz, since in \eqref{eq:finite_del-dim} the diagonal entries of $\Sigma^{(T),j}$ are always equal to 1 so $\E (G^{s,j})^2=1$. Note that we have by definition
\begin{align}
    \norm{f^{t,j}}_2 &\leq \delta \sum_{k<j} a^t(k) + \delta \sum_{k\geq j}a^{t-1}(k). \label{eq:f-bound}
\end{align}
Moreover, letting $\sign:\R\to \{\pm 1\}$ act entrywise on vectors, we have 
\begin{align*}
    \eta^t(j) &= \min_{\norm{u}_2=1} \E \brac{u, \sigma^{(t),j}}^2 \\
    &\geq \min_{\norm{u}_2=1} \Pr[\sigma^{(t),j}=\sign(u)]\norm{u}_1^2 \\
    &\geq \min_{\tau \in \{\pm 1\}^{t+1}}\Pr[\sigma^{(t),j}=\tau].
\end{align*}
To lower bound $\Pr[\sigma^{(t),j}=\tau]$, consider the filtration $\calG^s(j)$ for $s=0,\dots,T$ generated by the random variables $\sigma^{0,j},G^{1,j},\dots,G^{s,j}$. Note that, for all $s=1,\dots,T,$ conditional on $\calG^{s-1}(j),$ the variable $G^{s,j}$ is centered Gaussian with variance $v = \Var(G^{s,j}|\calG^{s-1}(j))$. Since $\lambda^s(j)\preceq \Sigma^{(s)} \preceq \norm{\Sigma^{(s)}}_F=t$, we have $v\in [\lambda^s(j), t].$ Hence
\begin{align*}
    \Pr[\sigma^s(j)=1|\calG^{s,j}]&\geq \int_{-\infty}^\infty dx \Pr[c(x)=1]\frac{1}{\sqrt{2\pi v}}\exp\left(-\left(x-\brac{f^{s,j}, \sigma^{(s-1),j}}\right)^2/(2v)\right) \\
    &\geq \inf_{|y|\leq \norm{f^{s,j}}_1} \int_{-\infty}^\infty dx \Pr[c(x)=1]\frac{1}{\sqrt{2\pi v}}\exp\left(-\left(x- y\right)^2/(2v)\right) \\
    &\geq \inf_{\substack{|y|\leq \norm{f^{s,j}}_2\sqrt{T} \\ \lambda^{s,j}\leq v'\leq T}} \int_{-\infty}^\infty dx \Pr[c(x)=1]\frac{1}{\sqrt{2\pi v'}}\exp\left(-\left(x- y\right)^2/(2v')\right),
\end{align*}
and similarly for $\Pr[\sigma^{s,j}=-1|\calG^s(j)].$ Hence, since we are assuming non-degeneracy of $c$ (see \Cref{sec:technical-overview}), there exists a smooth function $g:\R_{\geq 0}^2\to \R_{\geq 0}$  depending implicitly on $T$ that satisfies the following:
\begin{itemize}
    
    \item $g(\mu,\lambda) \in [0,1]$ for all $\mu,\lambda \geq 0$ and $g(\mu,\lambda)>0$ whenever $\lambda>0.$
    \item $g$ is monotone decreasing in the first argument and increasing in the second argument.
    \item We have the estimate
    \begin{align*}
    \min\{\Pr[\sigma^{s,j}=1|\calG^s(j)], \Pr[\sigma^{s,j}=1|\calG^s(j)]\}&\geq g(\norm{f^{s,j}}_2, \lambda^s(j)).
\end{align*}
\end{itemize}
Hence 
\begin{align}
    \eta^t(j) &\geq \frac{1}{2}\prod_{s=1}^t g(\norm{f^{s,j}}_2, \lambda^s(j)). \label{eq:eta-bound}
\end{align}
Finally, we have 
\begin{align}
    \lambda^t(j) &\geq \delta \sum_{k<j} \lambda_{\min}\left(C_{1:t,1:t}^{(t),k}\right) +\delta \sum_{k\geq j} \lambda_{\min}\left(C_{0:t-1,0:t-1}^{(t-1),k}\right) \nonumber\\
    &\geq \delta \sum_{k<j}\eta^t(k) +\delta \sum_{k\geq j} \eta^{t-1}(k). \label{eq:lambda-bound}
\end{align}

Now we proceed to the second part of the proof, where we show inductively in $t=1,\dots,T$ that there exists $B_t$ independent of $\delta$ such that $\max_j \norm{f^{t,j}}_2\leq B_t$ and  $\min_j \lambda^t(j) \geq 1/B_t$. Note that this suffices to prove the proposition by \eqref{eq:eta-bound}, \eqref{eq:at-bound}. For the base case $t=1,$ we have $\lambda^1(j)=1$ for all $j$ and
\begin{align*}
    |f^{1,j}| &\leq \delta\sum_{k<j}a^1(k) \\
    &= \delta\sum_{k<j}R_{1,1}^{(1),k} - R_{1,0}^{(1),k}C_{0,1}^{(1),k} \\
    &= \delta\sum_{k<j}R_{1,1}^{(1),k} \\
    &\leq 1,
\end{align*}
so we may take $B_1=1.$ Now assuming inductively that $\max_j \norm{f^{t-1,j}}_2\leq B_{t-1}$ and  $\min_j \lambda^{t-1}(j) \geq 1/B_{t-1},$ combining \eqref{eq:at-bound}-\eqref{eq:lambda-bound}, we have
\begin{align*}
    \lambda^t(j) &\geq \delta \sum_{k<j} \frac{1}{2}\prod_{s=1}^t g(\norm{f^{s,k}}_2,\lambda^s(k)) + \delta \sum_{k\geq j} \frac{1}{2} \prod_{s=1}^{t-1}g(\norm{f^{s,k}}
    _2,\lambda^s(k)) \\
    \norm{f^{t,j}}_2 &\leq \delta \sum_{k<j} \left[\frac{\sqrt{t}}{\lambda^t(k)} + \left(\sum_{s=1}^{t-1}a^s(k) + \frac{t}{\lambda^t(k)}\right)\frac{\sqrt{t}}{\eta^{t-1}(k)}\right] \\
    &\qquad \qquad + \delta\sum_{k\geq j}\left[\frac{\sqrt{t}}{\lambda^{t-1}(k)} + \left(\sum_{s=1}^{t-2}a^s(k) + \frac{t}{\lambda^{t-1}(k)}\right)\frac{\sqrt{t}}{\eta^{t-2}(k)}\right].
\end{align*}
Hence, by our inductive hypothesis, there exist constants $B'_{t-1}$ and $B''_{t-1}$ independent of $\delta$ such that
\begin{align*}
    \lambda^t(j) &\geq \delta \sum_{k<j}g(\norm{f^{t,k}}_2, \lambda^t(k)) B'_{t-1} +  (1-\delta j) B'_{t-1} \\
    \norm{f^{t,j}}_2 &\leq \delta\sum_{k<j} \frac{B''_{t-1}}{\lambda^t(k)} + B''_{t-1}.
\end{align*}
We will now show that there exist functions $h_1,h_2:[0,1]\to \R_{>0}$ independent of $\delta$ such that, for all $j=0,\dots,\lfloor \frac{1}{\delta}\rfloor,$ we have
\begin{align}
    \norm{f^{t,j}}_2 &\leq h_1(\delta j)\label{eq:h1-bound} \\
    \frac{1}{\lambda^t(j)} &\leq h_2(\delta j).\label{eq:h2-bound}
\end{align}
Indeed, suppose that we could find smooth functions $h_1,h_2$ satisfying the integral inequalities
\begin{align}
    h_1(x) &\geq \frac{1}{2}\int_0^x B''_{t-1}h_2(y)dy + B''_{t-1}\label{eq:h1-ineq}\\
    \frac{1}{h_2(x)} &\leq \frac{1}{2}\int_0^x g(h_1(x),1/ h_2(x))B_{t-1}' + (1-x)B_{t-1}' \label{eq:h2-ineq}
\end{align}
for all $x\in [0,1].$ Let us now check that if $h_1,h_2$ satisfy \eqref{eq:h1-ineq}, \eqref{eq:h2-ineq}, then they also satisfy \eqref{eq:h1-bound}, \eqref{eq:h2-bound}. Indeed, we have $h_1(0) \geq B''_{t-1} \geq \norm{f^{t,0}}_2$ and $\frac{1}{h_2(0)} \leq B'_{t-1} \leq \lambda^t(0)$ by assumption, and if we inductively assume that  \eqref{eq:h1-bound}, \eqref{eq:h2-bound} hold for all $k<j,$ we have
\begin{align*}
    \norm{f^{t,j}}_2 &\leq \delta \sum_{k<j} \frac{B''_{t-1}}{\lambda^t(k)} + B''_{t-1} \\
    &\leq \delta \sum_{k<j} B''_{t-1} h_2(\delta k) + + B''_{t-1} \\
    &\leq \frac{1}{2} \int_0^{\delta j} B''_{t-1}h_2(y)dy + B''_{t-1} \\
    &\leq h_1(\delta j)
\end{align*}
and
\begin{align*}
    \lambda^t(j) &\geq \delta\sum_{k<j}g(\norm{f^{t,k}}_2, \lambda^t(k)) B'_{t-1} +  (1-\delta j) B'_{t-1}  \\
    &\geq \delta\sum_{k<j}g(h_1(\delta k), 1/h_2(\delta k)) B'_{t-1} +  (1-\delta j) B'_{t-1} \\
    &\geq \frac{1}{2}\int_0^{\delta j} g(h_1(y),1/ h_2(y)) B'_{t-1}dy +  (1-\delta j) B'_{t-1} \\
    &\geq  \frac{1}{h_2(\delta j)}
\end{align*}
where we have used the monotonicity of $g.$

Hence, to verify that $f^{t,j}, \frac{1}{\lambda^t(j)}$ remain bounded independently of $\delta$, it suffices to check that there exist positive functions $h_1,h_2$ in $[0,1]$ satisfying \eqref{eq:h1-ineq}, \eqref{eq:h2-ineq}. For this, note that the initial value problem
\begin{align}\label{eq:h1h2-system}
\begin{split}
        \frac{d}{dx}h_1(x) &= \frac{B''_{t-1}}{2}h_2(x) \\
    \frac{d}{dx}h_2(x) &= h_2(x)^2B'_{t-1} \left(1 -\frac{1}{2}g(h_1(x),1/h_2(x)) \right)\\
    h_1(0)&=B''_{t-1} \\
    h_2(0) &= \frac{1}{B'_{t-1}}
\end{split}
\end{align}
has a global smooth solution in $[0,1].$ Indeed, local existence and uniqueness follows directly from the Cauchy-Lipschitz Theorem, and singularities in finite time are ruled out as follows. First, since $|\frac{d}{dx}h_2(x)| \leq h_2(x)^2 B'_{t-1}$, we have
\begin{align*}
    |h_2(x)| &\leq \frac{1}{\frac{1}{h_2(0)} - xB'_{t-1}} \\
    &= \frac{1}{B'_{t-1}(1-x)}
\end{align*}
and $h_1(x) \leq B''_{t-1} + \int_0^x \frac{1}{B'_{t-1}(1-y)}dy = B_{t-1}'' -\frac{1}{B'_{t-1}}\log(1-x)$. Hence, by monotonicity, letting 
\[
\eps(x):=\frac{1}{2}g\left(B''_{t-1}- \frac{1}{B'_{t-1}}\log(1-x), \frac{1}{B'_{t-1}(1-x)}\right),
\]
we have $|\frac{d}{dx}h_2(x)| \leq h_2(x)^2 B_{t-1}'(1 - \eps(x))$ so $\int_0^x h_2'(y)/h_2^2(y)dy \leq B'_{t-1}(x- \int_0^x \eps(y)dy)$, which gives
\begin{align*}
    h_2(x) &\leq \frac{1}{B_{t-1}'(1-x+ \int_0^x \eps(y)dy)},
\end{align*}
which is uniformly bounded in $[0,1]$ since $\eps(x)$ is non-zero for all $x\in (0,1).$ This proves that $h_1,h_2$ are smooth and globally defined in $[0,1],$ as desired. From this we conclude the proposition.
\end{proof}

\end{document}